\documentclass{article}

\usepackage[english]{babel}

\usepackage[a4paper,top=2cm,bottom=2cm,left=3cm,right=3cm,marginparwidth=1.75cm]{geometry}

\usepackage{amssymb,amsmath,color,graphicx,dsfont,mathrsfs,amsthm,stmaryrd,mathtools}
\usepackage{authblk}
\usepackage{xcolor}
\usepackage{enumitem}
\usepackage[unicode,colorlinks=true,linkcolor=red,citecolor=blue,linktocpage=true]{hyperref}
\usepackage{braket}
\usepackage[capitalise]{cleveref}
\let\C\relax 
\usepackage[normalem]{ulem}

\usepackage{tikz,tikz-cd,tikzfig}
\usepackage[utf8]{inputenc}
\usepackage{csquotes}

\usepackage{centernot}
\usepackage{mathtools}
\usepackage{stmaryrd}

\usepackage{thmtools}
\usepackage{thm-restate}

\newcommand\rememberEqLabel[2]
  {\expandafter\gdef\csname labeled:#1\endcsname{#2}%
   \label{#1}#2}
\newcommand\recallEq[1]
   {\csname labeled:#1\endcsname\tag{\ref{#1}}}

\newtheorem{theorem}{Theorem}
\newtheorem{lemma}[theorem]{Lemma}
\newtheorem{remark}[theorem]{Remark}
\newtheorem{definition}[theorem]{Definition}
\newtheorem{proposition}[theorem]{Proposition}

\newtheorem{example}[theorem]{Example}

\usepackage[
   backend=biber,
   maxbibnames=6,
   maxcitenames=6,
   style=alphabetic,
   sorting=nyt,
   isbn=false,
   doi=true,
   url=false,
   giveninits=true,
   date=short 
]{biblatex}
\newcommand{\cA}{\mathcal{A}}
\newcommand{\cB}{\mathcal{B}}
\newcommand{\N}{\mathbb{N}}
\newcommand{\cN}{\mathcal{N}}
\newcommand{\Z}{\mathbb{Z}}
\newcommand{\cZ}{\mathcal{Z}}
\newcommand{\C}{\mathbb{C}}

\newcommand{\R}{\mathbb{R}}
\newcommand{\Q}{\mathbb{Q}}

\newcommand{\cS}{\mathcal{S}}
\renewcommand{\H}{\mathcal{H}}

\newcommand{\cD}{\mathcal{D}}
\newcommand{\K}{\mathbb{K}}
\newcommand{\cK}{\mathcal{K}}
\newcommand{\T}{\mathbb{T}}
\newcommand{\cT}{\mathcal{T}}
\renewcommand{\cS}{\mathcal{S}}
\newcommand{\cL}{\mathcal{L}}
\newcommand{\cF}{\mathcal F}
\newcommand{\I}{\mathcal I}
\newcommand{\J}{\mathcal J}
\newcommand{\cG}{\mathcal{G}}

\newcommand{\F}{\mathbb{F}}

\newcommand{\Span}[1]{\operatorname{Span}\left\{ #1 \right\} }
\renewcommand{\ker}[1]{\operatorname{ker}\left( #1 \right)}
\newcommand{\im}[1]{\operatorname{im}\left( #1 \right)}
\newcommand{\supp}[1]{\operatorname{supp}\left( #1 \right)}
\newcommand{\diag}[2]{\operatorname{diag}_{#1}\left( #2 \right)}
\newcommand{\kerpi}[1]{\operatorname{ker}_{\T}\left( #1 \right)}
\newcommand{\impi}[1]{\operatorname{im}_{\T}\left( #1 \right)}

\newcommand{\indicator}[1]{\boldsymbol{1}_{ #1 }}

\renewcommand{\a}{\boldsymbol{\hat{a}}}
\newcommand{\ha}{\hat{a}}
\newcommand{\n}{\boldsymbol{\hat{n}}}
\newcommand{\hn}{\hat{n}}
\newcommand{\balpha}{\boldsymbol{\alpha}}
\newcommand{\bbeta}{\boldsymbol{\beta}}
\newcommand{\bDelta}{\boldsymbol{\Delta}}
\newcommand{\bn}{\boldsymbol{n}}
\newcommand{\bx}{\boldsymbol{x}}
\newcommand{\by}{\boldsymbol{y}}
\newcommand{\bz}{\boldsymbol{z}}
\newcommand{\bw}{\boldsymbol{w}}
\newcommand{\bphi}{\boldsymbol{\phi}}
\newcommand{\bh}{\boldsymbol{h}}
\newcommand{\bg}{\boldsymbol{g}}
\newcommand{\bb}{\boldsymbol{b}}
\newcommand{\bp}{\boldsymbol{p}}
\newcommand{\bu}{\boldsymbol{u}}
\newcommand{\bv}{\boldsymbol{v}}
\newcommand{\bi}{\boldsymbol{i}}
\newcommand{\bj}{\boldsymbol{j}}
\newcommand{\bl}{\boldsymbol{l}}
\newcommand{\bk}{\boldsymbol{k}}
\newcommand{\ba}{\boldsymbol{a}}
\newcommand{\be}{\boldsymbol{e}}
\newcommand{\br}{\boldsymbol{r}}
\newcommand{\bgamma}{\boldsymbol{\gamma}}
\newcommand{\bpsi}{\boldsymbol{\psi}}

\binoppenalty=\maxdimen
\relpenalty=\maxdimen

\title{
	Algebraic Structure of Tiger Codes
}
\author[1,2]{Clément Poirson\thanks{clement.poirson@alice-bob.com}}
\author[2]{Anthony Leverrier\thanks{anthony.leverrier@inria.fr}}
\author[1]{Christophe Vuillot\thanks{christophe.vuillot@alice-bob.com}}
\affil[1]{Alice \& Bob}
\affil[2]{Inria Paris}

\date{\vspace{-1em}{\small \today}\vspace{-1.5em}}

\begin{document}
\maketitle

\begin{abstract}
   Tiger codes form a family of multimode bosonic quantum codes that unify 
   several previously known constructions, including cat, paircat, and 
   the two-mode binomial code. In this work, we give a rigorous algebraic treatment 
   of these codes. Starting from a kernel definition of the codespace, 
   we prove that the annihilation-type constraints admit a finite generating 
   set, construct an explicit orthonormal basis, and show that the logical 
   structure of the code is governed by the homology of an underlying chain 
   complex, as expected in \cite{xu2024lettingtigercagebosonic}. 
   We then develop a Fourier transform over the codespace 
   to prove that the span of phase-rotated projected coherent states 
   is dense therein, 
   and to yield dual \(X\)- and \(Z\)-type descriptions of the code. 
   We further extend the framework to non-linear number constraints, encompassing 
   codes such as the four-legged cat or the repetition cat code. 
   Finally, we investigate the implementation of logical operations. 
   We first generalise the construction of logical Pauli operators proposed 
   in \cite{xu2024lettingtigercagebosonic} to arbitrary logical spaces, 
   and then construct non-Clifford gates using physical polynomial phase 
   rotations of the form $e^{iP(\n)}$. 
   We derive criteria on the real polynomial $P$ which, 
   for positive single-logical-qubit Tiger codes satisfying an additional 
   sign assumption, such as the paircat code, 
   characterise the polynomials $P$ that preserve the codespace by decomposing them 
   into a family of univariate polynomials. Through this decomposition, 
   we relate the degrees of the resulting components to the induced 
   logical action in the Clifford hierarchy.
   These results establish Tiger codes as a mathematically robust framework 
   for describing a broad class of bosonic encodings.
\end{abstract}

\tableofcontents

\clearpage

\section{Introduction}

Although quantum computing promises computational speed-ups for a 
variety of problems, this advantage over classical computers remains 
largely hindered by noise. Due to imperfections in the current 
available hardware, errors accumulate 
rapidly, preventing any practical large-scale computation.

A standard approach to mitigate this issue is to encode quantum 
information into a \emph{quantum error correcting code}, such as \emph{stabiliser codes} defined over qubits 
\cite{gottesman1997stabilizercodesquantumerror}.
However, this qubit-centric approach implicitly assumes that 
the fundamental building blocks of the hardware are two-level 
systems. In practice, this is rarely the case.

\paragraph{Bosonic encodings.}
In many physical implementations—most notably in superconducting 
architectures—the underlying degrees of freedom are bosonic modes 
rather than genuine two-level systems. A common strategy is 
therefore to artificially restrict such a system to its two 
lowest energy levels, thereby defining an effective qubit. 
\emph{Transmon qubits} are the canonical example of this approach, 
and can be viewed as a simple form of \emph{bosonic encoding}. 
Their main advantage lies in their relative ease of experimental 
implementation \cite{Schreier2008,Place2021TransmonTantalum}.

Beyond this simple construction, a variety of more sophisticated 
bosonic encodings have been proposed. A non-exhaustive list includes 
\emph{GKP codes} \cite{GottesmanKitaevPreskill_2001}, \emph{cat qubits} \cite{Mirrahimi_2014}, 
\emph{two-mode binomial codes} \cite{Michael_2016}, \emph{rotation-invariant codes} \cite{Grimsmo_2020},
\emph{spherical codes} \cite{Jain_2024}, and more recently \emph{bosonic cyclic codes} \cite{wetherbee2026bosoniccycliccodestrading}. 
While typically more challenging to realise experimentally, these 
encodings can offer significantly improved protection against 
specific noise channels.

In particular, cat qubits have recently led to several promising 
architectural proposals, especially when concatenated with 
stabiliser codes \cite{Guillaud_2019,Chamberland_2022,Ruiz_2025}. 
These works 
highlight a key paradigm: rather than treating bosonic modes 
as imperfect qubits, one can exploit their full structure to 
build more robust logical encodings at the hardware level.

\paragraph{Tiger codes: a unifying bosonic framework.}

Driven by the success of cat qubits, several new bosonic code
constructions have been introduced, such as the
\emph{4-legged cat code} \cite{Mirrahimi_2014} and the \emph{paircat code}
\cite{Albert_2019}. Recently, a unifying framework called
\emph{Tiger codes} was proposed in
\cite{xu2024lettingtigercagebosonic}. It encompasses a broad class of
previously known bosonic codes, including the cat code, the paircat code, 
the two-mode binomial code among others.

The dominant sources of noise in bosonic systems are dephasing and photon loss, 
which are commonly modeled as quantum channels \cite{Chuang_1997,Albert_2019,Arqand_2020}.
Tiger codes are \emph{multimode} bosonic codes designed in a
Calderbank-Shor-Steane-type fashion to protect against both these errors. More precisely,
let $\balpha\in\C^m$ be a complex vector with no zero entry, and
$\bDelta\in\Z^s$ an integer vector. Given two integer matrices
$H\in\Z^{s\times m}$ and $G\in\Z^{m\times r}$ such that $HG=0$\footnote{We do not follow the usual CSS-code convention $HG^\top=0$. Here, $G$ and $H$ are morphisms of $\Z$-modules, and we reserve the transpose notation for the induced maps on the Pontryagin dual groups, identified with morphisms $G^\top$ and $H^\top$ between subgroups of $[0,2\pi)^m$. We therefore write $HG=0$. This subtlety does not appear in the binary setting, since $\Z_2$ is self-dual.},
the rows
$\{\bh_i\}_{1\leq i\leq s}$ of $H$ define the $\n$-type constraints,
which protect against photon loss, while the columns of $G$ define the
$\a$-type constraints, which protect against dephasing. By definition 
of a Tiger code~\cite{xu2024lettingtigercagebosonic}, any $m$-mode
code state $\ket{\psi}$ should satisfy
\begin{equation}\label{eq:constraints}
   \forall 1\leq i\leq s,\quad (\bh_i\cdot\n-\Delta_i)\ket{\psi}=0
   \qquad\text{and}\qquad
   \forall \bg\in\im{G},\quad (\a^{\bg^+}-\balpha^{\bg}\a^{\bg^-})\ket{\psi}=0,
\end{equation}
where all powers are understood componentwise, and for
$\bg\in\Z^m$, we write $\bg=\bg^+-\bg^-$ with
$\bg^+:=\max(\bg,0)\in\N^m$ and $\bg^-:=\max(-\bg,0)\in\N^m$, 
$\balpha^{\bg}:=\prod_{j=1}^m \alpha_j^{g_j}$, 
$\a^{\bg^+}:=\prod_{j=1}^m \hat{a}_j^{g_j^+}$ and 
$\bh_i\cdot\n:=\sum_{j=1}^m h_{i,j}\hn_j$ with $\hat{a}_i$ and
$\hn_i$ the annihilation and number operators of the $i$-th mode.

The simplest non-trivial example of a bosonic code defined by both
$\a$-type and $\n$-type constraints is the paircat code
\cite{Albert_2019}. It consists of the two-mode states $\ket{\psi}$
satisfying
\begin{equation}
   (\hn_1-\hn_2-\Delta)\ket{\psi}=0
   \qquad\text{and}\qquad
   (\hat{a}_1^2\hat{a}_2^2-\alpha^4)\ket{\psi}=0,
\end{equation}
where $\Delta\in\Z$, $\alpha\in\C$ with $\alpha\neq0$, $H=\begin{pmatrix}
   1&-1
\end{pmatrix}$ and $G=\begin{pmatrix}
   2\\2
\end{pmatrix}$\footnote{Notice that, for the paircat code, only a single 
   $\a$-type constraint suffices to define all the 
   $\a$-type constraints since for any $p\in\N$,
   $\ker{\ha_1^{2p}\ha_2^{2p}-\alpha^{4p}}\subset \ker{\hat{a}_1^2\hat{a}_2^2-\alpha^4}$
   and $\ker{1-\alpha^{-4p}\ha_1^{2p}\ha_2^{2p}}\subset \ker{\hat{a}_1^2\hat{a}_2^2-\alpha^4}$.}. 

States satisfying the $\a$-type constraints can be constructed
from the coherent state $\ket{\balpha}$. Indeed, for any
$\bphi\in[0,2\pi)^m$ such that $G^\top\bphi=0 \text{ mod }2\pi$, one has
$\bphi\cdot \bg^+=\bphi\cdot\bg^- \text{ mod }2\pi$ for all
$\bg\in\im{G}$. Therefore, using the commutation relation 
$\a^{\bg^\pm}e^{i\bphi\cdot\n}=e^{i\bphi\cdot(\n+\bg^\pm)}\a^{\bg^\pm}$, we get
\begin{equation}
   (\a^{\bg^+}-\balpha^{\bg}\a^{\bg^-})e^{i\bphi\cdot\n}
   =e^{i\bphi\cdot(\n+\bg^+)}(\a^{\bg^+}-\balpha^{\bg}\a^{\bg^-}).
\end{equation}
Since
\begin{equation}
   (\a^{\bg^+}-\balpha^{\bg}\a^{\bg^-})\ket{\balpha}
   =(\balpha^{\bg^+}-\balpha^{\bg}\balpha^{\bg^-})\ket{\balpha}=0,
\end{equation}
all the states $e^{i\bphi\cdot\n}\ket{\balpha}$ satisfy the
$\a$-type constraints. To impose the $\n$-type constraints as well, one
introduces the orthogonal projector
\begin{equation}
   \Pi_H:=\sum_{\substack{\bn\in\N^m\\ H\bn=\bDelta}}\ket{\bn}\bra{\bn}
\end{equation}
onto $\bigcap_{i=1}^s\ker{\bh_i\cdot\n-\Delta_i}$, and observes that it
commutes with $e^{i\bphi\cdot\n}$ for every $\bphi\in[0,2\pi)^m$. Hence, defining
$\ket{\balpha_H}:=\Pi_H\ket{\balpha}$, the states
\begin{equation}
   \left\{ 
   e^{i\bphi\cdot\n}\ket{\balpha_H}
   :\bphi\in[0,2\pi)^m,\ G^\top\bphi=0 \text{ mod }2\pi
   \right\}
\end{equation}
satisfy both the $\a$-type and $\n$-type constraints.
For the paircat code, $\bphi_1=\begin{pmatrix}
   0&0
\end{pmatrix}^\top$ and \break $\bphi_2=\begin{pmatrix}
   \pi & 0
\end{pmatrix}^\top$ both satisfy $G^\top\bphi=0 \text{ mod }2\pi$, hence,
as $e^{i\bphi_1\cdot\n}=I$ and $e^{i\bphi_2\cdot\n}=e^{i\pi\hn_1}$,
the two states 
$\Pi_H\ket{\alpha,\alpha}$ and $\Pi_H\ket{-\alpha,\alpha}$
both belong to the codespace, with \break
$\Pi_H=\sum_{\substack{n_1,n_2\in\N\\ n_1-n_2=\Delta}} \ket{n_1,n_2}\bra{n_1,n_2}$.

This motivates the original definition of Tiger codes proposed in 
\cite{xu2024lettingtigercagebosonic}:
\begin{equation}\label{eq:original_def}
   t(G,H,\balpha,\bDelta)
   :=
   \Span{
      e^{i\bphi\cdot\n}\ket{\balpha_H}
      :
      \bphi\in[0,2\pi)^m,\ G^\top\bphi=0 \text{ mod }2\pi
   }.
\end{equation}
These codes have two important features. First, they provide a unified
framework to encode various logical systems---qubits, qudits, rotors,
or even bosonic modes---in a multimode bosonic setting. The second feature, detailed below, is that they
are natural candidates for hardware-efficient implementations.

A major strength of the cat code is that it admits dissipative
implementations. In an idealised setting, one engineers a Lindblad
dynamics of the form
\begin{equation}
   \frac{d\rho}{dt}=\cD_{\hat{a}^2-\alpha^2}(\rho),
\end{equation}
where
$\cD_L(\rho):=L\rho L^\dag-\frac{1}{2}(L^\dag L\rho+\rho L^\dag L)$.
Such a dynamics drives the system towards the codespace
$\ker{\hat{a}^2-\alpha^2}$ and therefore provides an \emph{autonomous}
protection against dephasing \break
\cite{Mirrahimi_2014,azouit2016wellposednessconvergencelindbladmaster}. 
More generally, given a finite family of
$\a$-type constraints
$\{\a^{\bg_i^+}-\balpha^{\bg_i}\a^{\bg_i^-}\}_{1\leq i\leq k}$, one may hope
to stabilise the corresponding Tiger code by combining one dissipative
process for each constraint. However, Tiger codes are \emph{a priori} 
defined by an infinite family of $\a$-type constraints in \cref{eq:constraints}.
This naturally raises the question of how 
many constraints are needed. 
Alternatively, the same constraints may also 
be enforced at the Hamiltonian
level, as in the Kerr-cat approach \cite{Grimm_2020}. In that setting, the cat
manifold is realised as the low-energy eigenspace of a time-dependent
Hamiltonian rather than as the steady space of a dissipative dynamics. A
standard choice is
\begin{equation}
   H(t)=-K\ha^{\dag 2}\ha^2 + P(t)(\ha^{\dag 2}+\ha^2),
\end{equation} 
where $K>0$ is the Kerr non-linearity and
$P(t)$ a two-photon drive amplitude. Starting from the vacuum and slowly
deforming the Hamiltonian, one adiabatically follows the instantaneous
ground-state branch from the harmonic-oscillator regime to the cat regime.
At the final time $T$, when $\Delta(T)=0$ and $P(T)\neq 0$, the
Hamiltonian has a low-energy manifold approximately spanned by the coherent
states $\ket{\pm\alpha}$, where $\alpha^2=\frac{P(T)}{K}$. 
Similarly, one may hope to engineer a Hamiltonian
for each $\a$-type constraint that 
slowly drives the vacuum state in each of the $m$-modes towards 
any state of the Tiger code.

Meanwhile, protection against photon loss can be achieved through
repeated measurements of the observables $(\bh_i\cdot\n)_{1\leq i\leq s}$. The outcomes
provide information about the photon-loss events that occurred. This is
similar in spirit to current proposals where cat qubits are
concatenated with a qubit stabiliser code: protection against
dephasing is autonomous, while stabiliser measurements are used to
detect and correct photon-loss errors
\cite{Guillaud_2019,Chamberland_2022,Ruiz_2025}.

\subsection*{Contributions}

The guiding principle of this work is that a state should belong to the
codespace precisely when it satisfies both the $\a$-type and the
$\n$-type constraints. This leads us to propose an alternative definition 
for Tiger as the common kernel
\begin{equation}
   \cT(G,H,\balpha,\bDelta):=
   \bigcap_{\bg\in\im{G}}\ker{\a^{\bg^+}-\balpha^{\bg}\a^{\bg^-}}
   \cap
   \bigcap_{i\in\llbracket1,s\rrbracket}\ker{\bh_i\cdot\n-\Delta_i}
\end{equation}
as opposed to the original definition recalled in \cref{eq:original_def}. 

\paragraph{Finite generation of the constraints.}
The definition of $\cT(G,H,\balpha,\bDelta)$ involves an
\emph{a priori} infinite family of $\a$-type constraints, since
$\im{G}$ is infinite in general. We prove that this family can always be
reduced to a finite one:
\begin{equation}
   \bigcap_{\bg\in\im{G}}\ker{\a^{\bg^+}-\balpha^{\bg}\a^{\bg^-}}
   =
   \bigcap_{\bb\in\{\bb_1,\ldots,\bb_M\}}\ker{\a^{\bb^+}-\balpha^b\a^{\bb^-}}
\end{equation}
for a Markov basis $\{\bb_1,\ldots,\bb_M\}\subset\im{G}$ that we can explicitly compute.

This result is important from the perspective of implementing Tiger
codes: it shows that only finitely many elementary constraints need to
be enforced, whether through dissipative dynamics or through
Hamiltonian terms.

\paragraph{Definition's soundness.}
As discussed above, one always has
\begin{equation}
   t(G,H,\balpha,\bDelta)\subseteq \cT(G,H,\balpha,\bDelta),
\end{equation}
but whether this inclusion is tight was left open in
\cite{xu2024lettingtigercagebosonic}. We prove that the
converse inclusion holds after taking the closure: 
\begin{equation}
   \cT(G,H,\balpha,\bDelta)=\overline{t(G,H,\balpha,\bDelta)}.
\end{equation}
This equality fails in general if $\balpha$ has a zero entry. For
instance, $\ker{\hat a^2-\alpha^2}$ is spanned by the two coherent
states $\ket{\alpha}$ and $\ket{-\alpha}$ when $\alpha\neq0$, whereas
$\ker{\hat a^2}$ is not spanned by coherent states alone when
$\alpha=0$, since $\ket{1}\in\ker{\hat a^2}$ in addition to $\ket{0}$.
Our proof relies on several structural results, which we summarise
below.

\paragraph{Structure of the code space.}
We construct an explicit orthonormal basis of
$\cT(G,H,\balpha,\bDelta)$ by projecting the coherent state
$\ket{\balpha}$ onto suitable lattice-supported subsets of the Fock
basis.
More precisely, define an equivalence relation on $\N^m$ by
$\bn\sim_G \bn'$ if $\bn'-\bn\in\im{G}$, and restrict it to the set
$K_H^{\bDelta}:=\{\bn\in\N^m:H\bn=\bDelta\}$. For each equivalence class $[\bn]$, let 
\begin{equation}
   \Pi_{[\bn]}:=\sum_{\bp\in[\bn]}\ket{\bp} \bra{\bp}\footnote{The operator $\Pi_{[\bn]}$ depends on $G$ even though it does not appear in the notation for conciseness.}
\end{equation} 
and define the corresponding normalised projected coherent state
\begin{equation}\label{eq:intro_pi}
   \ket{\balpha_{[\bn]}}
   :=
   \frac{1}{A_{[\bn]}}\Pi_{[\bn]}\ket{\balpha},
   \qquad
   [\bn]\in K_H^{\bDelta}/\!\!\sim_G.
\end{equation}
where $A_{[\bn]}:=\|\Pi_{[\bn]}\ket{\balpha}\|$.
We prove that the family
$\{\ket{\balpha_{[\bn]}}:[\bn]\in K_H^{\bDelta}/\!\!\sim_G\}$
forms an orthonormal basis of $\cT(G,H,\balpha,\bDelta)$.

The quotient
$K_H^{\bDelta}/\!\!\sim_G$ can be analysed through an injection in the 
\emph{homology} group \break $\ker{H}\!/\!\im{G}$ over $\Z$.
As a finitely generated abelian group, it admits a decomposition of the form
\begin{equation}
   \ker{H}\!/\!\im{G}\cong \Z_{d_1}\times\cdots\times\Z_{d_t}\times\Z^f.
\end{equation}
This decomposition determines the structure of the encoded logical system.
Whenever the injection is a bijection, the code encodes exactly $f$ logical rotors 
and $t$ logical qudits of dimensions $d_1,\ldots,d_t$. Otherwise, the codespace is 
a subspace of the code spanned by $f$ rotors and $t$ qudits of dimensions $d_1,\ldots,d_t$.

As an illustration, we recover the dimension of the paircat code
\cite{Albert_2019}. This code corresponds to
\begin{equation}
   \cT\!\left(
      \begin{pmatrix}
         2\\
         2
      \end{pmatrix},
      \begin{pmatrix}
         1&-1
      \end{pmatrix},
      \begin{pmatrix}
         \alpha\\
         \alpha
      \end{pmatrix},
      \Delta
   \right),
\end{equation}
for which 
\begin{equation}
   \text{ker} \begin{pmatrix}
         1&-1
      \end{pmatrix}
      \Big/
      \text{im}\begin{pmatrix}
         2\\
         2
      \end{pmatrix}
      \cong \Z_2.
\end{equation}
The homology group therefore has two elements, so the codespace is
two-dimensional: the paircat code encodes one logical qubit.

\paragraph{Fourier structure and dual bases.}
We introduce a Fourier transform adapted to \break
$\cT(G,H,\balpha,\bDelta)$.
We prove that each basis vector $\ket{\balpha_{[\bn]}}$ admits an integral
decomposition along the coherent states $e^{-i\bphi\cdot\n}\ket{\balpha_H}$
of the code
\begin{equation}
   \ket{\balpha_{[\bn]}}
   =
   \frac{1}{A_{[\bn]}}
   \int_{[0,2\pi)^m} e^{i\bphi\cdot(\n-\bn)}\ket{\balpha_H}\,d\bphi,
\end{equation}
for an appropriate measure $d\bphi$ over $[0,2\pi)^m$ detailed 
in \cref{sec:Fourier_coherent}.

Based on the above decomposition, we prove the soundness of the definition:
since the vectors $\ket{\balpha_{[\bn]}}$ form an orthonormal basis of the codespace,
this implies that the family
\begin{equation}\label{eq:Zbasis_intro}
   \left\{
      e^{i\bphi\cdot\n}\ket{\balpha_H}
      :
      \bphi\in[0,2\pi)^m,\ G^\top\bphi=0 \text{ mod }2\pi
   \right\}
\end{equation}
is total in $\cT(G,H,\balpha,\bDelta)$, meaning that all code states 
belong to the closure of the span of the above set.

This Fourier viewpoint leads to two complementary descriptions of the
code: an orthonormal ``$X$ basis'' given by \cref{eq:intro_pi}, and a
generally non-orthogonal ``$Z$ basis'' given by the phase-rotated
coherent states in \cref{eq:Zbasis_intro}, in close analogy with the cat
code.

\paragraph{Extension beyond linear constraints.}
A conceptual limitation of Tiger codes is that they only involve linear
$\n$-type constraints. However, several bosonic codes rely on
non-linear observables, such as parity measurements. This is notably
the case for concatenations of cat codes with stabiliser codes
\cite{Guillaud_2019,Chamberland_2022,Jain_2024,Ruiz_2025}.

We therefore introduce an extended class of Tiger codes:
\begin{equation}
   \cT(G,\{h_\gamma\},\balpha,\bDelta):=
   \bigcap_{\bg\in\im{G}}\ker{\a^{\bg^+}-\balpha^{\bg}\a^{\bg^-}}
   \cap
   \bigcap_{\gamma\in\llbracket1,s\rrbracket}\ker{h_\gamma(\n)},
\end{equation}
where $\left(h_\gamma:\Z^m\to\C\right)_{1\leq\gamma\leq s}$ is a finite family of functions whose
kernels are $G$-periodic, that is, $h_\gamma(\bn)=0\Rightarrow h_\gamma(\bn+\bg)=0$ 
for all $\bg\in\im{G}$. 

This extension captures codes based on modular constraints, such as the
single-mode 4-legged cat code with $\n$-type constraint
$(-1)^{\hat n}$, or repetition cat codes, while preserving the main
structural features of Tiger codes: finite generation, a homological
description of the codespace, a Fourier structure, dual bases, and a
description in terms of coherent states.

\paragraph{Logical operators.}
Finally, we study the physical implementation of logical operators. 
This part of the work focuses on Tiger codes with an underlying 
group structure, meaning that the set $K_H^{\bDelta}/\!\!\sim_G$ 
labelling the basis vectors is in bijection with $\ker H/\im G$. 
In particular, this excludes codespaces containing logical modes, 
whose computational theory is different: 
in that setting, Clifford operations become the analogue of 
Gaussian operations.

We begin with logical Pauli operators. For logical $Z$ Pauli 
operators, we show that every such operator can be induced by 
a physical rotation of the form 
\begin{equation}
   e^{i\bphi\cdot\n},
   \qquad
   \bphi\in[0,2\pi)^m,\quad G^\top\bphi=0 \text{ mod }2\pi .
\end{equation}
It has the practical advantage to be implementable
with a simple Hamiltonian dynamics.
We then turn to logical $X$ Pauli operators and investigate 
their physical implementation by operators polynomial in 
$\a$ and $\a^\dag$. Constructing exact representatives of 
this form is, however, an algebraically difficult problem. 
In the absence of a complete characterisation, we follow 
and generalise the construction of 
\cite{xu2024lettingtigercagebosonic} to obtain 
\emph{approximate} logical $X$ Pauli operators. 
The approximation error is controlled by the non-orthogonality 
of the states in \cref{eq:Zbasis_intro}; consequently, 
the approximation improves as the components of $\balpha$ increase.

To go beyond Pauli and Clifford operators, we also study logical 
$Z$ rotations induced by physical operators of the form 
$e^{2i\pi P(\n)}$. For \emph{positive} Tiger codes---that is, 
codes for which there exists $\bg\in\im G$ with $\bg>0$ 
componentwise---with finite-dimensional codespace, and under 
an additional compatibility assumption, we derive a necessary 
condition on the polynomial $P$ so that $e^{2i\pi P(\n)}$
preserves the codespace; see \cref{theorem:Zrotations}. 
In the single-logical-qubit case,  
we can also derive a sufficient condition on $P$
leading to an exact characterisation of the polynomials
implementing a logical $Z$ rotation of angle 
$\frac{(-1)^{d-1}}{2^{d-1}}\pi$; see \cref{prop:qubitZrotation}. 
For $e^{2i\pi P(\n)}$ to preserve the codespace,
$P$ needs to write 
\begin{equation}\label{eq:decomposition_P}
   P(x_1,\ldots,x_m)=
   \sum_{k=1}^{d}f_k(x_1,\ldots,x_m)P_k^{(2)}\left(\frac{\phi_1x_1+\cdots+\phi_mx_m}{2\pi}\right),
\end{equation}
where $\bphi\in[0,2\pi)^m$ satisfies $\bphi\neq0$ and
$G^\top\bphi=0 \text{ mod }2\pi$, the polynomials $\{f_k\}_{1\leq k\leq d}$
take integer values over $\{\bn\in\N^m:H\bn=\bDelta\}$ and 
\begin{equation}
   P_0^{(2)}(X):=1,\qquad
   P_{k+1}^{(2)}(X):=\sum_{i=0}^k\binom{i}{k-i}2^{2i-k}\binom{X/2}{i+1}\in\R[X].
\end{equation}
In particular, the polynomials $\{P_k^{(2)}\}_{k\in\N}$ satisfy
$P_k^{(2)}(n+2)-P_k^{(2)}(n)=0$ for any integer $n\in\Z$.
Hence, over the integers, they take only two values:
\begin{equation}
   P_k^{(2)}(0)=0,
   \qquad
   P_k^{(2)}(1)=\frac{(-1)^{k-1}}{2^{k-1}}\pi .
\end{equation}
It follows that a logical $Z$ rotation of angle $\frac{(-1)^{d-1}}{2^{d-1}}\pi$
can only be produced by the $P_d^{(2)}$ component in the decomposition of $P$ 
in \cref{eq:decomposition_P}. Since such a rotation belongs to the $d$-th 
level of the Clifford hierarchy 
\cite{Cui_2017,desilva2025cliffordhierarchyqubitqudit}, any polynomial 
$P$ for which $e^{2i\pi P(\n)}$ reaches this level must contain the 
degree-$d$ term $P_d^{(2)}$ in its decomposition.

Finally, since any operator of the form $e^{2i\pi P(\n)}$ 
commutes with any functionals in $\n$, all the 
logical $Z$ operators constructed in this work 
also preserve the codespace of the non-linear generalisation of Tiger codes.
However, inferring the logical operation induced is in general more difficult
and \cref{prop:qubitZrotation} has no direct analogous for the non-linear case.

\paragraph{Scope and limitations.}
This paper focuses on the algebraic structure of Tiger codes. We do not
aim to provide a detailed analysis of their error-correcting
performance, nor an extensive catalogue of explicit constructions.

Assessing error-correction performance would in particular require a
more refined study of the associated physical dynamics. We refer to
\cite{xu2024lettingtigercagebosonic} for heuristic distance measures and
for a variety of explicit examples that complement the present
work.

\paragraph{}
The remainder of this paper is organised as follows. In \cref{sec:code_construction}, 
based on 
the kernel definition of Tiger codes, we construct an 
orthonormal basis of the codespace, and prove the finite generation 
of the $\a$-type constraints. In \cref{sec:homology}, we use this basis 
to relate the 
structure of the code to the homology of the underlying 
chain complex. Then, in \cref{sec:Fourier_coherent}, we develop 
a Fourier transform over the codespace that both yields dual $X$- and $Z$-type 
bases and proves the definition's soundness. In \cref{sec:non_linear},
we extend the framework to non-linear $\n$-type 
constraints. Finally, we study logical operators of Tiger codes and 
their non-linear extension
with a focus on logical Pauli operators for codespaces composed of rotors and qudits,
and non-Clifford $Z$-rotations for finite-dimensional codespaces in \cref{sec:logop}.

\subsection*{Notations and conventions}

For clarity, we gather here the conventions used throughout the paper.

\begin{itemize}
   \item The physical Hilbert space of the \(m\) bosonic modes is
   \(\H:=L^2(\R)^{\otimes m}\cong L^2(\R^m)\).
   \item For a (possibly unbounded) operator \(A\), \(\cD(A)\) denotes
   its domain and 
   \begin{equation}
      \ker A:=\{\psi\in\cD(A):A\psi=0\}.
   \end{equation}
   If $(A,\cD(A))$ is closed, the kernel is closed as well and we 
   can define the orthogonal projector onto the kernel.
   Finally, the restriction of $A$ to a subspace $\H_0\subset\cD(A)$ is denoted 
   $A\!\restriction_{\H_0}$.
   \item For \(\bn\in\N^m\), \(\ket{\bn}=\ket{n_1}\otimes\ldots \otimes\ket{n_m}\)
   denotes the multimode Fock state, and
   \begin{equation}
      \ket{\balpha}:=
      e^{-|\balpha|^2/2}\sum_{\bn\in\N^m}\frac{\balpha^{\bn}}{\sqrt{\bn!}}\ket{\bn}
   \end{equation}
   denotes the multimode coherent state.
   \item For a single mode ($\H=L^2(\R)$), the annihilation 
   operator $\ha:=\frac{1}{\sqrt2}(x+\partial_{x})$ 
   is defined over the domain 
   \begin{equation}
      \cD(\ha):=\{\ket{\psi}=\sum_{n\in\N}\psi_n\ket{n}\in\H : 
      \sum_{n\in\N}n|\psi_n|^2<\infty\}.
   \end{equation}
   Its adjoint $\ha^\dag$, the creation operator, is defined on the same 
   domain $\cD(\ha^\dag)=\cD(\ha)$ making both these operators closed.
   Finally, $\hn=\ha^\dag\ha$ with domain the composition 
   of the two domains is also closed.
   \item For a subset $\omega\subset\Omega$, we write the indicator function 
   \begin{equation}
      \indicator{\omega}:x\mapsto\left\{\begin{aligned}
         &1 &\text{if } x\in\omega\\
         &0 &\text{otherwise}.
      \end{aligned}\right.
   \end{equation}
   When the argument $x\in\Omega$ is clear from context, we use the notation $\indicator{\{x\in\omega\}}$
   instead of $\indicator{\omega}(x)$. When $\Omega=\N^m$, using functional calculus, 
   this defines the projector 
   $\indicator{\{\n\in\omega\}}:=\sum_{\bn\in\omega}\ket{\bn}\bra{\bn}$.
   \item We define the unit circle $\T:=\{z\in\C:|z|=1\}$.
   \item All vector inequalities and powers are
   understood componentwise.
   \item For \(\bu\in\N^m\), we write
   \begin{equation}
      \bu!:=u_1!\cdots u_m! \quad \text{and} \quad 
      \a^{\bu}:=\ha_1^{u_1}\cdots \ha_m^{u_m}.
   \end{equation}
   Here \(\ha_i=\frac{1}{\sqrt2}(x_i+\partial_{x_i})\) is the
   annihilation operator of the \(i\)-th mode.
   \item For \(\bv\in\Z^m\),
   \begin{equation}
      \n\cdot\bv:=v_1\hn_1+\cdots+v_m\hn_m,
      \qquad \text{with} \quad
      \hn_i=\ha_i^\dag\ha_i.
   \end{equation}
   \item For \(\bu\in\Z^m\), we write \(\bu=\bu^+-\bu^-\), where
   \(\bu^\pm\in\N^m\), \(u^+_j:=\max(u_j,0)\), \(u^-_j:=\max(-u_j,0)\).
   \item For \(\balpha\in\C^m\) and \(\bu\in\Z^m\), we use the shorthand
   \begin{equation}
      \balpha^{\bu}:=\prod_{j=1}^m \alpha_j^{u_j}.
   \end{equation}
   We also write \(|\balpha|^2:=\sum_{j=1}^m |\alpha_j|^2\).
   \item We write \(\llbracket i,j\rrbracket=\{i,i+1,\ldots,j\}\) and
   \(\langle S\rangle_\Z\) for the additive subgroup generated by a set \(S\).
   \item For any \(\bn_0\in\Z^m\) such that  \(H\bn_0=\bDelta\), we
   use the shorthand
   \begin{equation}
      K_H^{\bDelta}:=\{\bn\in\N^m:H\bn=\bDelta\}=(\bn_0+\ker{H})\cap\N^m
   \end{equation}
   that is independent of the choice of $\bn_0$.
   \item For an integer matrix $A\in\Z^{r\times m}$,
   \begin{equation}
      \kerpi{A}:=\{\bphi\in[0,2\pi)^m:A\bphi=0 \text{ mod }2\pi\}
   \end{equation}
   and,
   \begin{equation}
      \impi{A}:=\{A\bphi \text{ mod } 2\pi:\bphi\in[0,2\pi)^m\}.
   \end{equation}
   \item For a topological space $X$, we denote by $\cB(X)$ its 
   Borel algebra that is the smallest $\sigma$-algebra containing 
   all the closed sets of $X$.
   \item For \(\bi\in\N^m\) and $n,a\in\N$, we use
   \begin{equation}
      \binom{x}{n}=\frac{x(x-1)\ldots(x-n+1)}{n!},
      \qquad
      \binom{x}{n}_a:=a^n\binom{x/a}{n},
      \qquad
      \binom{\bx}{\bi}:=\prod_{j=1}^m \binom{x_j}{i_j}.
   \end{equation}
   \item For a multivariate polynomial $P\in\C[x_1,\ldots,x_t]$
   and $i\in\llbracket1,t\rrbracket$, we denote by $deg_i(P)$ the 
   degree of $P$ in the $i$-th variable, meaning 
   \begin{equation}
      deg_i(P):=deg(x_i\mapsto P(x_1,\ldots,x_t))
   \end{equation}
   for $x_1,\ldots,x_{i-1},x_{i+1},\ldots,x_t$ fixed.
\end{itemize}

\section{Tiger codes construction}\label{sec:code_construction}

Tiger codes are a family of multimode bosonic codes that 
generalises many 
known bosonic encodings (cat, paircat, two-mode binomial\dots). 
They are defined using two kinds of constraints: the $\a$-type and the $\n$-type. 

\begin{definition}\label{def:TC}
   Let $\Z^r\xrightarrow{G}\Z^m\xrightarrow{H}\Z^s$ be a chain complex over 
   $\Z$, meaning that $HG=0$, $\bDelta\in \Z^s$ and $\balpha\in\C^m$ such that $\balpha$ has no 
   zero entry.
   Denote $\bh_1,\ldots,\bh_s$ the rows of $H$. 
   The Tiger code $\cT(G,H, \balpha, \bDelta)$
   is a subspace defined as
   \begin{equation}
      \bigcap_{\bg\in\im{G}}\ker{\a^{\bg^+} - \balpha^{\bg}\a^{\bg^-}} \cap 
      \bigcap_{i\in\llbracket1,s\rrbracket}\ker{\bh_i\cdot\n - \Delta_i}.
   \end{equation}
\end{definition}
Our first focus is to find an orthonormal basis for every Tiger code. 
Then, although \cref{def:TC} suggests that Tiger codes consist in an 
infinite intersection of kernels, we constructively prove that any Tiger 
code can be generated by a finite intersection of kernels.

\subsection{Orthonormal basis}\label{sec:orth_basis}

Fix $G\in\Z^{m\times r}$, $H\in\Z^{s\times m}$, $\bDelta\in\Z^s$ and $\balpha\in\C^m$ 
with all components non-zero. 
We construct an orthonormal basis of $\cT(G,H, \balpha, \bDelta)$
in two steps. First, we provide an orthonormal basis of 
$\bigcap_{\bg\in\im{G}}\ker{\a^{\bg^+} - \balpha^{\bg}\a^{\bg^-}}$.
Then we show how to turn it into a basis of the code.

\paragraph{No $\n$-type constraints.}
Given $\bg\in\Z^m$, we start by characterising vectors $\ket{\psi}$ 
belonging to $\ker{\a^{\bg^+} - \balpha^{\bg}\a^{\bg^-}}$ in terms of their decomposition 
in the Fock basis denoted $\{\ket{\bn}\}_{\bn \in \N^m}$. 

\begin{proposition} \label{prop:step}
   Let $\ket{\psi} \in \H$, and write 
   $\ket{\psi}=\sum_{\bn \in \N^m} \psi_{\bn} \ket{\bn}$ its decomposition in the Fock basis.
   Then, 
   \begin{equation}
      \ket{\psi} \in \ker{\a^{\bg^+} - \balpha^{\bg}\a^{\bg^-}}
      \Leftrightarrow
      \left\{\begin{aligned}
         &\ket{\psi} \in \cD(\a^{\bg^+} - \balpha^{\bg}\a^{\bg^-})
         \quad \text{and}\quad \\
         &\forall \bn \in \N^m, \sqrt{(\bn+\bg^+)!}
         \psi_{\bn+\bg^+}=\balpha^{\bg}\sqrt{(\bn+\bg^-)!}\psi_{\bn+\bg^-}.
      \end{aligned}\right.
   \end{equation}
\end{proposition}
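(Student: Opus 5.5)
The plan is to compute the action of $\a^{\bu}$ on a general vector in its domain in Fock coordinates, and then compare coefficients. By definition, $\ker{A}$ consists of the vectors in $\cD(A)$ with $A\ket{\psi}=0$, so the domain condition appears on both sides of the equivalence. It therefore suffices to show the following: for $\ket{\psi}\in\cD(\a^{\bg^+}-\balpha^{\bg}\a^{\bg^-})$, the vector $(\a^{\bg^+}-\balpha^{\bg}\a^{\bg^-})\ket{\psi}$ has Fock coefficient
\begin{equation}
\sqrt{\tfrac{(\bn+\bg^+)!}{\bn!}}\,\psi_{\bn+\bg^+}-\balpha^{\bg}\sqrt{\tfrac{(\bn+\bg^-)!}{\bn!}}\,\psi_{\bn+\bg^-}
\end{equation}
on $\ket{\bn}$. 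Given this, the vector vanishes iff every coefficient vanishes. Multiplying by $\sqrt{\bn!}\neq 0$ then gives exactly the stated recurrence.

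The first step is the single-mode computation. From $\ha\ket{n}=\sqrt{n}\ket{n-1}$ and closedness of $\ha$, for $\ket{\psi}\in\cD(\ha)$ one has $\ha\ket{\psi}=\sum_n\sqrt{n+1}\,\psi_{n+1}\ket{n}$. To see this, apply $\ha$ to finite truncations $\psi^{(N)}$. These converge to $\psi$, and their images converge in $\H$ precisely because $\sum n|\psi_n|^2<\infty$. Closedness then identifies the limit with $\ha\ket{\psi}$. Iterating, for $\ket{\psi}\in\cD(\ha^k)$ we get $\ha^k\ket{\psi}=\sum_n\sqrt{(n+k)!/n!}\,\psi_{n+k}\ket{n}$. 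The multimode version for $\a^{\bu}$ follows by the same argument applied mode by mode, since the operators act on distinct tensor factors and are diagonal-shift in the Fock basis. Equivalently, one can use that $\a^{\bu}$ is the closure of its restriction to finite Fock combinations.

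The second step handles the difference operator. The subtle point is that $\cD(\a^{\bg^+}-\balpha^{\bg}\a^{\bg^-})$ is formally $\cD(\a^{\bg^+})\cap\cD(\a^{\bg^-})$. This is the step I expect to need care: if the paper intends the closure of the operator defined on finite-photon states, then the domain could be larger than the intersection. In that case I would argue directly on Fock coefficients. The matrix elements $\bra{\bn}A\ket{\psi}$ can be computed as $\braket{A^\dagger\bn|\psi}$, because $\ket{\bn}$ lies in the domain of the adjoint and the adjoint acts on $\ket{\bn}$ by explicit creation operators. This yields the coefficient formula above without needing each term separately in its domain, so the approach works under either reading of the domain.

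Finally, the formula $\bra{\bn}A\ket{\psi}=\braket{A^\dagger\bn|\psi}$ reduces the whole argument to a finite computation. We have $(\a^{\bg^+})^\dagger\ket{\bn}=\sqrt{(\bn+\bg^+)!/\bn!}\,\ket{\bn+\bg^+}$, and similarly for $\bg^-$ with the conjugate $\overline{\balpha^{\bg}}$. Taking the inner product reproduces the coefficient stated above. Combining this with the characterisation of the kernel concludes both directions of the equivalence.
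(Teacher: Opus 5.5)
Your proposal is correct and follows the same route as the paper: expand $(\a^{\bg^+}-\balpha^{\bg}\a^{\bg^-})\ket{\psi}$ in the Fock basis, use orthonormality to reduce its vanishing to the vanishing of every coefficient, and multiply by $\sqrt{\bn!}$. The only difference is that you justify the Fock-coefficient formula for vectors in the domain, either via closedness or via $\bra{\bn}A\ket{\psi}=\langle A^\dagger \bn|\psi\rangle$ with $\ket{\bn}\in\cD(A^\dagger)$; the paper takes this step for granted, and your adjoint argument has the advantage of working under either reading of the domain.
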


\begin{proof}
   By definition, $\ket{\psi} \in \ker{\a^{\bg^+} - \balpha^{\bg}\a^{\bg^-}}$
   if and only if
   \begin{equation}
      (\a^{\bg^+} - \balpha^{\bg}\a^{\bg^-})\ket{\psi} = 
      \sum_{\bn \in \N^m} 
      \left(\sqrt{\frac{(\bn+\bg^+)!}{\bn!}}\psi_{\bn+\bg^+}-\balpha^{\bg}\sqrt{\frac{(\bn+\bg^-)!}{\bn!}}
      \psi_{\bn+\bg^-}\right)\ket{\bn}=0.
   \end{equation}
   the result follows from the fact that the Fock basis is orthonormal.
\end{proof} 

To generalise this proposition to an intersection of such kernels, we need to introduce the following equivalence relation over $\N^m$.

\begin{definition}
   For an integer matrix $G : \Z^r \rightarrow \Z^m$, 
   we define the equivalence relation $\sim_G$ over 
   $\N^m$ by $\bn \sim_G \bn'$ iff $\bn'-\bn\in\im{G}$.
\end{definition}
This is a well-defined equivalence relation because 
$\im{G}$ is an additive subgroup of $\Z^m$.
\cref{prop:step} suggests that for any vector 
$\ket{\psi}\in\ker{\a^{\bg^+} - \balpha^{\bg}\a^{\bg^-}}$ and for $\bn,\bn'\in\N^m$ 
such that $\bn\sim_{\bg} \bn'$, the Fock components $\psi_{\bn}$ and 
$\psi_{\bn'}$ are related by some proportionality factor. 
\cref{fig:lattice} provides some intuition on how this 
equivalence relation comes into play when $\ket{\psi}$ belongs to two kernels.

\begin{figure} 
    \centering
    \begin{tikzpicture}[scale=2]
        \begin{scope}[shift={(0.5,0.5)}]
            \draw[thick,->] (0,0) -- (5.5,0) node[right] {$n_1$};
            \draw[thick,->] (0,0) -- (0,5.5) node[above] {$n_2$};
        
            \foreach \x in {0,1,2,3,4} {
                \node[below] at (\x+0.5,0) {\x};
            }
            \foreach \y in {0,1,2,3,4} {
                \node[left] at (0,\y+0.5) {\y};
            }
        \end{scope}

        \draw[dotted, step=1] (1,1) grid (5,5);

        \foreach \x in {1,2,3,4,5} {
            \draw[dotted, step=1] (\x,5) -- (\x,6);
        }
        \foreach \y in {1,2,3,4,5} {
            \draw[dotted, step=1] (5,\y) -- (6,\y);
        }

        \draw[thick,->,color=red] (1,1) -- (2,3) node[above left] {$g_1$};
        \draw[thick,->,color=blue] (1,1) -- (3,2) node[below right] {$g_2$};

        \node[circle,fill=black,inner sep=1.5pt] (0) at (1,1) {};
        \node[circle,fill=black,inner sep=1.5pt] (1) at (2,3) {};
        \node[circle,fill=black,inner sep=1.5pt] (2) at (3,2) {};
        \node[circle,fill=black,inner sep=1.5pt] (3) at (5,3) {};
        \node[circle,fill=black,inner sep=1.5pt] (4) at (3,5) {};
        \node[circle,fill=black,inner sep=1.5pt] (5) at (1,4) {};
        \node[circle,fill=black,inner sep=1.5pt] (6) at (4,1) {};
        \node[circle,fill=black,inner sep=1.5pt] (7) at (4,4) {};
        \node[] (8) at (2,6) {};
        \node[] (9) at (5,6) {};
        \node[] (10) at (6,5) {};
        \node[] (11) at (6,2) {};

        \draw[color=red] (1) -- (4);
        \draw[color=red] (2) -- (7);
        \draw[color=red] (6) -- (3);
        \draw[color=red] (3) -- (10);
        \draw[color=red] (7) -- (9);
        \draw[color=red] (5) -- (8);
        
        \draw[color=blue] (1) -- (7);
        \draw[color=blue] (2) -- (3);
        \draw[color=blue] (5) -- (4);
        \draw[color=blue] (4) -- (9);
        \draw[color=blue] (7) -- (10);
        \draw[color=blue] (6) -- (11);
    \end{tikzpicture}
    \caption{
      \textbf{Graphical representation of $\ker{\a^{\bg_1}-\balpha^{\bg_1}}\cap\ker{\a^{\bg_2}-\balpha^{\bg_2}}$:} 
      For a state $\ket{\psi}$ to belong to both kernels, denoting 
      $\psi_{\bn}$ the coefficients of $\ket{\psi}$ in the Fock basis,
      if $\psi_{(0,0)}\neq0$,
      then all the $\psi_{\bn'}$s where $\bn'$ corresponds to a 
      coordinate marked by a black bullet on the graph must be 
      proportional to $\psi_{(0,0)}$.}
    \label{fig:lattice}
\end{figure}

Motivated by \cref{prop:step}, we define the function 
\begin{equation}
   \begin{aligned}
      \Gamma : \N^m\times \N^m &\rightarrow \C \\
      \bn,\bn'&\mapsto \sqrt{\frac{\bn!}{\bn'!}}\balpha^{\bn'-\bn}
   \end{aligned}
\end{equation}
and remark that $\frac{1}{\Gamma(\bn,\bn')}=\Gamma(\bn',\bn)$ and 
$\Gamma(\bn,\bn')\Gamma(\bn',\bn'')=\Gamma(\bn,\bn'')$.

Then, define the subspace 
\begin{equation}
   \cK=\left\{\ket{\psi}\in\bigcap_{\bg\in\im{G}}\cD(\a^{\bg^+} - \balpha^{\bg}\a^{\bg^-}) :
   \forall \bn,\bn'\in\N^m, \bn\sim_G \bn' \Rightarrow \psi_{\bn'}
   =\Gamma(\bn,\bn')\psi_{\bn}\right\}.
\end{equation}
Remark that as $\balpha$ does not have any zero entry, $\cK\neq0$.

\begin{proposition}
   $\cK = \bigcap_{\bg\in\im{G}}\ker{\a^{\bg^+} - \balpha^{\bg}\a^{\bg^-}}$.
\end{proposition}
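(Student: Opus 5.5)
We prove the two inclusions separately. Both sets sit inside $\bigcap_{\bg\in\im{G}}\cD(\a^{\bg^+} - \balpha^{\bg}\a^{\bg^-})$: this is built into the definition of $\cK$, and it holds for the kernels by definition of $\ker{\cdot}$. So the whole argument reduces to comparing the coefficient relations of \cref{prop:step} with the proportionality relations defining $\cK$.

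\emph{Inclusion $\cK\subseteq\bigcap_{\bg}\ker{\cdot}$.} Fix $\ket{\psi}\in\cK$, $\bg\in\im{G}$ and $\bn\in\N^m$. The multi-indices $\bn+\bg^+$ and $\bn+\bg^-$ both lie in $\N^m$, and their difference is $\bg\in\im{G}$, so $\bn+\bg^-\sim_G\bn+\bg^+$. The defining property of $\cK$ then gives
\[
\psi_{\bn+\bg^+}=\Gamma(\bn+\bg^-,\bn+\bg^+)\,\psi_{\bn+\bg^-}=\sqrt{\frac{(\bn+\bg^-)!}{(\bn+\bg^+)!}}\,\balpha^{\bg}\,\psi_{\bn+\bg^-}.
\]
Multiplying by $\sqrt{(\bn+\bg^+)!}$ gives exactly the condition of \cref{prop:step}. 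Together with the domain condition, \cref{prop:step} yields $\ket{\psi}\in\ker{\a^{\bg^+}-\balpha^{\bg}\a^{\bg^-}}$.

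\emph{Inclusion $\bigcap_{\bg}\ker{\cdot}\subseteq\cK$.} Let $\ket{\psi}$ lie in every kernel, and let $\bn\sim_G\bn'$. Set $\bg:=\bn'-\bn\in\im{G}$. The key step is to recover $\bn$ and $\bn'$ as shifts of a common base point by $\bg^-$ and $\bg^+$. Since $\bg=\bg^+-\bg^-$ with $\bg^+$ and $\bg^-$ of disjoint supports, we set $\bp:=\bn-\bg^-=\bn'-\bg^+$ and check that $\bp\in\N^m$.
\begin{itemize}
\item On coordinates $j$ where $g_j\ge 0$, we have $g_j^-=0$, so $p_j=n_j\ge0$.
\item On coordinates $j$ where $g_j<0$, we have $g_j^+=0$, so $p_j=n'_j\ge0$.
\end{itemize}
Applying \cref{prop:step} at the index $\bp$ gives
\[
\sqrt{\bn'!}\,\psi_{\bn'}=\balpha^{\bg}\sqrt{\bn!}\,\psi_{\bn}.
\]
Since $\balpha^{\bn'-\bn}=\balpha^{\bg}$, this is precisely $\psi_{\bn'}=\Gamma(\bn,\bn')\psi_{\bn}$. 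The domain condition holds by definition of the kernels, so $\ket{\psi}\in\cK$.

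The only genuinely nontrivial point is the choice $\bp=\min(\bn,\bn')$ componentwise, which guarantees that $\bp$ has nonnegative entries; once this is observed, everything else is direct substitution into \cref{prop:step}. No argument about chaining along generators of $\im{G}$ is needed, because the intersection is taken over all of $\im{G}$, so every difference $\bn'-\bn$ appears as a single constraint.
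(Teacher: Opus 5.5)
Your proof is correct and matches the paper's argument. For the intersection $\subseteq \cK$, you apply \cref{prop:step} at the base point $\bn-\bg^-$, which you rightly identify as the componentwise minimum of $\bn$ and $\bn'$. For the reverse inclusion, you check the Fock coefficients of $(\a^{\bg^+}-\balpha^{\bg}\a^{\bg^-})\ket{\psi}$ using $\bn+\bg^+\sim_G\bn+\bg^-$. Your coordinate check that $\bn-\bg^-\in\N^m$ simply spells out the paper's remark that $\bg^+$ and $\bg^-$ have disjoint supports.
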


\begin{proof}
   Let $\ket{\psi}\in\bigcap_{\bg\in\im{G}}\ker{\a^{\bg^+} - \balpha^{\bg}\a^{\bg^-}}$
   and $\bn,\bn'\in\N^m$ such that $\bn\sim_G \bn'$. Consider the vector $\bg=\bn'-\bn\in\im{G}$. 
   $\ket{\psi}$ belongs in particular to 
   $\ker{\a^{\bg^+} - \balpha^{\bg}\a^{\bg^-}}$. So, by \cref{prop:step}, 
   $\forall \bl \in \N^m, \sqrt{(\bl+\bg^+)!}\psi_{\bl+\bg^+}=\balpha^{\bg}\sqrt{(\bl+\bg^-)!}\psi_{\bl+\bg^-}$.
   Because $\bn'=\bn+\bg$ and  $\supp{\bg^+}\cap\supp{\bg^-}=\varnothing$, 
   $\bn-\bg^-$ belongs to $\N^m$. So, taking $\bl=\bn-\bg^-$, we get 
   \begin{equation}
      \sqrt{\bn'!}\psi_{\bn'}=\balpha^{\bg}\sqrt{\bn!}\psi_{\bn} \Rightarrow \psi_{\bn'}
      =\Gamma(\bn,\bn')\psi_{\bn}.
   \end{equation}

   Now, let $\ket{\psi}\in\cK$ and consider its decomposition in 
   the Fock basis $\sum_{\bn \in \N^m} \psi_{\bn} \ket{\bn}$. 
   For any $\bg\in \im{G}$, 
   \begin{equation}
      (\a^{\bg^+} - \balpha^{\bg}\a^{\bg^-})\ket{\psi}=
      \sum_{\bn \in \N^m} \left(\sqrt{\frac{(\bn+\bg^+)!}{\bn!}}
      \psi_{\bn+\bg^+}-\balpha^{\bg}\sqrt{\frac{(\bn+\bg^-)!}{\bn!}}\psi_{\bn+\bg^-}\right)\ket{\bn}.
   \end{equation}
   Because for all $\bn\in\N^m, \bn+\bg^+\sim_G\bn+\bg^-$, 
   $\psi_{\bn+\bg^+}=\Gamma(\bn+\bg^-,\bn+\bg^+)\psi_{\bn+\bg^-}=\sqrt{\frac{(\bn+\bg^-)!}{(\bn+\bg^+)!}}\balpha^{\bg}\psi_{\bn+\bg^-}$ 
   implying that $(\a^{\bg^+} - \balpha^{\bg}\a^{\bg^-})\ket{\psi}=0$. 
\end{proof}

The definition of $\cK$ suggests a natural orthonormal basis:
for a point $\bn\in\N^m$, if $\psi_{\bn}=0$ and $\ket{\psi}\in\cK$, 
then for all the points $\bk$ in the 
equivalence class $[\bn]\in\N^m/\!\!\sim_G$, $\psi_{\bk}=0$. 
So, for each $[\bn]\in\N^m/\!\!\sim_G$, define the projector
$\Pi_{[\bn]}:=\sum_{k\in[\bn]}\ket{\bk}\bra{\bk}$ 
and the normalised projected coherent states 
$\ket{\balpha_{[\bn]}}:=\frac{1}{A_{[\bn]}}\Pi_{[\bn]}\ket{\balpha}$
where $A_{[\bn]}:=\|\Pi_{[\bn]}\ket{\balpha}\|$ and 
$\ket{\balpha}:=e^{-\left|\balpha\right|^2/2}\sum_{\bn\in\N^m}\frac{\balpha^{\bn}}{\sqrt{\bn!}}\ket{\bn}$.

\begin{proposition} \label{prop:basis}
   The family of vectors $\left\{\ket{\balpha_{[\bn]}} : [\bn] \in \N^m/\!\!\sim_G\right\}$ 
   forms an orthonormal basis of $\cK$.
\end{proposition}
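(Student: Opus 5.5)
The proof has three parts: each $\ket{\balpha_{[\bn]}}$ is a well-defined unit vector lying in $\cK$, the family is orthonormal, and its closed span is all of $\cK$.

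\textbf{Well-definedness.} Each class $[\bn]$ contains $\bn$, and $\balpha$ has no zero entry, so the coefficient $e^{-|\balpha|^2/2}\balpha^{\bn}/\sqrt{\bn!}$ of $\ket{\bn}$ in $\ket{\balpha}$ is non-zero. Hence $A_{[\bn]}>0$ and the normalisation makes sense.

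\textbf{Orthonormality.} Distinct equivalence classes are disjoint subsets of $\N^m$. The corresponding vectors are therefore supported on disjoint sets of Fock states and are orthogonal. Each has norm one by construction.

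\textbf{Membership in $\cK$.} Write $c_{\bk}:=e^{-|\balpha|^2/2}\balpha^{\bk}/\sqrt{\bk!}$ for $\bk\in[\bn]$. For $\bk\sim_G\bk'$ a direct computation gives
\begin{equation}
c_{\bk'}=\sqrt{\tfrac{\bk!}{\bk'!}}\,\balpha^{\bk'-\bk}\,c_{\bk}=\Gamma(\bk,\bk')\,c_{\bk}.
\end{equation}
If $\bk\not\sim_G\bn$, both coefficients vanish. So the proportionality condition of $\cK$ holds.

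It remains to check the domain condition $\ket{\balpha_{[\bn]}}\in\cD(\a^{\bg^+}-\balpha^{\bg}\a^{\bg^-})$ for every $\bg\in\im G$. The plan is to note that $\Pi_{[\bn]}$ is diagonal in the Fock basis, so its coefficients are dominated in modulus by those of $\ket{\balpha}$. The coherent state $\ket{\balpha}$ lies in the domain of every polynomial in $\a$, since $\sum_{\bk}\bk^{\bu}|c_{\bk}|^2<\infty$ for all $\bu$. Any vector with coefficients dominated by those of $\ket{\balpha}$ then lies in the same domains. Alternatively, one can apply the preceding proposition in the direction ``$\cK\subseteq$ kernel'': its computation only needs the series to converge, which holds for the same reason.

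\textbf{Completeness.} Let $\ket{\psi}\in\cK$ and fix a class $[\bn]$. For every $\bk\in[\bn]$,
\begin{equation}
\psi_{\bk}=\Gamma(\bn,\bk)\,\psi_{\bn}=\frac{\psi_{\bn}}{c_{\bn}}\,c_{\bk},
\end{equation}
using the multiplicativity of $\Gamma$. Hence $\Pi_{[\bn]}\ket{\psi}=\lambda_{[\bn]}\ket{\balpha_{[\bn]}}$ for some scalar $\lambda_{[\bn]}$.

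The classes partition $\N^m$, so the projectors $\Pi_{[\bn]}$ are mutually orthogonal and sum strongly to the identity. Therefore
\begin{equation}
\ket{\psi}=\sum_{[\bn]}\lambda_{[\bn]}\ket{\balpha_{[\bn]}},
\end{equation}
with the series converging in $\H$. This shows the family spans a dense subspace of $\cK$, and together with orthonormality it is an orthonormal basis.

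The only step needing care is the unbounded-domain issue in the membership part. It is handled by the domination of coefficients by $\ket{\balpha}$, which lies in the domain of every polynomial in $\a$.
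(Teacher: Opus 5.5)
Your proof is correct and follows essentially the same route as the paper. Disjoint Fock supports give orthonormality, and on each class the proportionality relation forces $\Pi_{[\bn]}\ket{\psi}$ to be a multiple of $\ket{\balpha_{[\bn]}}$; your coefficient $\lambda_{[\bn]}=A_{[\bn]}\psi_{\bn}/c_{\bn}$ is exactly the paper's $e^{|\balpha|^2/2}A_{[\bn]}\frac{\sqrt{\bn!}}{\balpha^{\bn}}\psi_{\bn}$. Writing the expansion as $\sum_{[\bn]}\Pi_{[\bn]}\ket{\psi}$ gives norm convergence directly, which the paper handles only in a footnote, and your domination-by-$\ket{\balpha}$ argument makes explicit the domain condition the paper calls clear.
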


\begin{proof}
   It is clear that $\ket{\balpha_{[\bn]}} \in \cK$ and $\|\ket{\balpha_{[\bn]}}\|=1$ 
   for all equivalence class $[\bn]$. 
   The orthogonality comes from the fact that
   for two different equivalence classes $[\bn],[\bn'] \in \N^m/\!\!\sim_G$, 
   $\Pi_{[\bn]}\Pi_{[\bn']}=\indicator{\left\{\n \in [\bn]\cap[\bn']\right\}}=0$
   as $[\bn]\cap[\bn']=\varnothing$. In other words, the support of $\ket{\balpha_{[\bn]}}$ 
   and $\ket{\balpha_{[\bn']}}$ in the Fock basis do not overlap. 
   Finally, consider $\ket{\phi} \in \H$ such that   
   \begin{equation}
      \forall \bn,\bn'\in\N^m, \bn\sim_G \bn' \Rightarrow \phi_{\bn'}
      =\Gamma(\bn,\bn')\phi_{\bn},
   \end{equation}
   and define 
   \begin{equation}
      \ket{\tilde{\phi}} =
      e^{|\balpha|^2/2} \sum_{[\bn] \in \N^m/\!\sim_G}
      A_{[\bn]}\frac{\sqrt{\bn!}}{\balpha^{\bn}}\phi_{\bn} \ket{\balpha_{[\bn]}}
      \footnote{Strictly speaking, one should first verify that $\ket{\tilde{\phi}}$ defines a genuine vector, i.e. that it has finite norm. For the sake of clarity, we postpone this verification: the Fock-basis decomposition of $\ket{\tilde{\phi}}$ coincides with that of a finite-norm vector, which retroactively justifies that $\ket{\tilde{\phi}}$ is well defined.}. 
   \end{equation}
   This expression does not depend on the choice of representatives:
   for two different representatives $\bn,\bn'\in[\bn]$, as $\ket{\phi}\in\cK$,
   $\frac{\sqrt{\bn!}}{\balpha^{\bn}}\phi_{\bn} = \frac{\sqrt{\bn!}}{\balpha^{\bn}}\Gamma(\bn',\bn)\phi_{\bn'} = \frac{\sqrt{\bn'!}}{\balpha^{\bn'}}\phi_{\bn'}$.
   We show that $\ket{\phi}$ and $\ket{\tilde{\phi}}$ have the same 
   decomposition in the Fock basis: for all $\bk \in \N^m$, $\bk$ belongs 
   to a unique equivalence class, namely $[\bk]$. Because,
   \begin{equation}
      \langle \bk|\balpha_{[\bn]}\rangle=
      \frac{1}{A_{[\bn]}}\bra{\bk}\Pi_{[\bn]}\ket{\balpha}
      =\left\{\begin{aligned}
         &0 &\text{if } \bk\notin[\bn] \\
         &\frac{\balpha^{\bk}}{A_{[\bk]}\sqrt{\bk!}}e^{-|\balpha|^2/2} &\text{otherwise}
      \end{aligned}\right.
   \end{equation}
   we conclude that
   \begin{equation}
      \langle \bk| \tilde{\phi}\rangle = e^{|\balpha|^2/2}\sum_{[\bn] \in \N^m/\!\sim_G}
      A_{[\bn]}\frac{\sqrt{\bn!}}{\balpha^{\bn}}\phi_{\bn}\langle \bk|\balpha_{[\bn]}\rangle=\phi_{\bk}.
   \end{equation}
\end{proof}

\begin{remark}
   With \cref{prop:basis} in hand, we can simplify the
   definition of $\cK$ to get rid of the domain condition.
   We clearly have 
   \begin{equation}
      \cK\subset\left\{\ket{\psi}\in\H :
      \forall \bn,\bn'\in\N^m, \bn\sim_G \bn' \Rightarrow \psi_{\bn'}
      =\Gamma(\bn,\bn')\psi_{\bn}\right\}.
   \end{equation}
   In the proof of \cref{prop:basis}, we decomposed any 
   $\ket{\phi}\in\H$ satisfying    
   \begin{equation}
      \forall \bn,\bn'\in\N^m, \bn\sim_G \bn' \Rightarrow \phi_{\bn'}
      =\Gamma(\bn,\bn')\phi_{\bn},
   \end{equation}
   in the basis 
   $\left\{\ket{\balpha_{[\bn]}} : [\bn] \in \N^m/\!\!\sim_G\right\}\subset\cK$,
   thus proving the other inclusion. Hence,
   \begin{equation}
      \cK=\left\{\ket{\psi}\in\H :
      \forall \bn,\bn'\in\N^m, \bn\sim_G \bn' \Rightarrow \psi_{\bn'}
      =\Gamma(\bn,\bn')\psi_{\bn}\right\}.
   \end{equation}
\end{remark}

\paragraph{Adding the $\n$-type constraints.}
\cref{prop:basis} provides a basis of $\cK$.
It remains to project it onto 
$\bigcap_{i\in\llbracket1,s\rrbracket}\ker{\bh_i\cdot\n - \Delta_i}$. 

\begin{theorem}\label{theorem:basis}
   If $\cT(G,H, \balpha, \bDelta)\neq0$,
   let $ \bn_0\in\Z^m$ such that $H\bn_0=\bDelta$. Then, the family of vectors 
   $\left\{\ket{\balpha_{[\bn]}} : [\bn] \in (\bn_0+\ker{H})\cap\N^m/\!\!\sim_G\right\}$ forms an 
   orthonormal basis of the code.
\end{theorem}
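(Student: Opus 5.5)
The plan is to combine \cref{prop:basis} with the observation that the $\n$-type constraints are diagonal in the Fock basis and compatible with the equivalence classes of $\sim_G$. The first step is to describe the common kernel of the $\n$-type operators. Since each $\bh_i\cdot\n-\Delta_i$ is diagonal in the Fock basis with eigenvalue $\bh_i\cdot\bn-\Delta_i$ on $\ket{\bn}$, a vector $\ket{\psi}$ lies in $\bigcap_i\ker{\bh_i\cdot\n-\Delta_i}$ if and only if $\psi_{\bn}=0$ whenever $H\bn\neq\bDelta$. The domain condition is harmless here: such a vector is supported on $K_H^{\bDelta}$, where the operator acts as $0$, so it lies in the domain. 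Hence this kernel is the closed span of $\{\ket{\bn}:\bn\in K_H^{\bDelta}\}$, with $K_H^{\bDelta}=(\bn_0+\ker{H})\cap\N^m$; this set is independent of $\bn_0$, and such an $\bn_0$ exists because the code is nonzero.

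The second step is to check that $K_H^{\bDelta}$ is a union of $\sim_G$-classes. If $\bn\in K_H^{\bDelta}$ and $\bn'\sim_G\bn$, then $\bn'-\bn\in\im{G}\subset\ker{H}$ because $HG=0$. Therefore $H\bn'=H\bn=\bDelta$, and $\bn'\in\N^m$ by definition of the relation. So every class $[\bn]\in\N^m/\!\!\sim_G$ is either contained in $K_H^{\bDelta}$ or disjoint from it. Consequently, each basis vector $\ket{\balpha_{[\bn]}}$ of \cref{prop:basis} is either supported entirely in $K_H^{\bDelta}$, and then lies in the code, or supported entirely outside it.

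The third step concludes. The vectors $\ket{\balpha_{[\bn]}}$ with $[\bn]\subset K_H^{\bDelta}$ belong to $\cK\cap\bigcap_i\ker{\bh_i\cdot\n-\Delta_i}=\cT(G,H,\balpha,\bDelta)$, and they are orthonormal by \cref{prop:basis}. For completeness, let $\ket{\psi}$ be in the code. Then $\ket{\psi}\in\cK$, so it expands as $\ket{\psi}=\sum_{[\bn]}c_{[\bn]}\ket{\balpha_{[\bn]}}$ with $c_{[\bn]}=\braket{\balpha_{[\bn]}|\psi}$. For a class disjoint from $K_H^{\bDelta}$, the support of $\ket{\balpha_{[\bn]}}$ lies where $\psi$ vanishes, so that coefficient is zero. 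Only the classes inside $K_H^{\bDelta}$ survive, which gives completeness.

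I do not expect a substantial obstacle. The only delicate point is the unbounded-operator bookkeeping: one must make sure that the intersection of kernels is exactly the set of vectors with Fock support in $K_H^{\bDelta}$, including the domain condition. This is handled by the diagonal action of $\n$ noted in the first step.
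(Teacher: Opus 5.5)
Your proposal is correct and is essentially the paper's proof: the paper likewise uses $\im{G}\subset\ker{H}$ to show that each basis vector $\ket{\balpha_{[\bn]}}$ of $\cK$ is either fixed or annihilated by the Fock-diagonal projector $\Pi_H$ onto $\bigcap_i\ker{\bh_i\cdot\n-\Delta_i}$, and it keeps exactly the fixed vectors. Your coefficient argument for completeness makes explicit a step the paper leaves tacit (that $\cT(G,H,\balpha,\bDelta)=\Pi_H\cK$), and your domain remark relies, as the paper implicitly does, on reading $\bh_i\cdot\n$ as the closed operator diagonal in the Fock basis.
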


\begin{proof}
   We can straightforwardly write the orthogonal projector $\Pi_{H}$ onto 
   $\bigcap_{i\in\llbracket1,s\rrbracket}\ker{\bh_i\cdot\n - \Delta_i}$ defined as
   \begin{equation}
      \Pi_{H}\ket{\bn} := \indicator{\{H\bn=\bDelta\}}\ket{\bn}
   \end{equation}
   for all $\bn\in\N^m$.

   As \cref{prop:basis} provides an orthonormal basis of $\cK$, we get a generating set of 
   $\cT(G,H, \balpha, \bDelta)$
   by projecting each vector in the family. For each $[\bn] \in \N^m/\!\!\sim_G$,
   \begin{equation}
      \Pi_{H}\ket{\balpha_{[\bn]}}
      =\frac{e^{-|\balpha|^2/2}}{A_{[\bn]}}
      \sum_{\bn'\in[\bn]}\frac{\balpha^{\bn'}}{\sqrt{\bn'!}}
      \indicator{\{H\bn'=\bDelta\}}\ket{\bn'}
      =\indicator{\{H\bn=\bDelta\}}\ket{\balpha_{[\bn]}}
   \end{equation}
   because if $\bn'\in[\bn], H\bn'=H\bn+H(\bn'-\bn)=H\bn$ as $\bn'-\bn\in\im{G}$ and $\im{G}\subset \ker{H}$.
   So, $\left\{\ket{\balpha_{[\bn]}} : \forall [\bn] \in \N^m/\!\!\sim_G, H\bn=\bDelta\right\}$ is an 
   orthonormal generating set, hence an orthonormal basis.
   
   We remarked earlier that by definition, $\cK\neq0$. So, the Tiger code is trivial (reduced to the null vector) whenever 
   $\bigcap_{i\in\llbracket1,s\rrbracket}\ker{\bh_i\cdot\n - \Delta_i}=0$,
   meaning that there is no integer vector $\bn\in\N^m$ satisfying $H\bn=\bDelta$.
   When the code is not trivial, there exists $\bn_0\in\Z^m$ such that $H\bn_0=\bDelta$,
   and we get $\{\bn\in\N^m:H\bn=\bDelta\}=(\bn_0+\ker{H})\cap\N^m$.
\end{proof}

In this work, we assume that $\cT(G,H, \balpha, \bDelta)\neq0$,
meaning that there always exists $\bn_0\in\Z^m$ such that $H\bn_0=\bDelta$.
This allows to get rid of many trivial examples such as $H=\begin{pmatrix}
   2
\end{pmatrix}$ and $\bDelta=\begin{pmatrix}
   1
\end{pmatrix}$.

\subsection{Finite generation of the code}\label{sec:finite_gen}

So far, the definition of Tiger codes (\cref{def:TC}) involves an infinite 
intersection of kernels. As discussed in the introduction, one expects that 
the codespace could be stabilised via a suitable Lindblad dynamics. 

However, such a construction would require infinitely many dissipative 
processes, which is not physically realistic. To bridge the gap between the 
abstract \cref{def:TC} and a practical implementation via reservoir 
engineering, we show that the infinite intersection can in fact be reduced to 
a finite one. More precisely, we prove that it suffices to consider a finite 
set of constraints associated with suitably chosen elements in $\im{G}$.

To that end, we generalise the notion of Markov basis introduced in 
\cite{MarkovBasesDiacoSturm}. Due to the proximity with the original 
definition of Markov bases, we choose to keep the same name.

\begin{definition}\label{def:Markov_basis}
   Given a matrix $A\in\Z^{m\times r}$, a Markov basis of $A$ is a finite set 
   of integer vectors $\{\bb_1,\ldots,\bb_M\}\subset \im{A}$ such that for 
   every pair of vectors $\bu, \bv\in\N^m$ for which $\bu-\bv\in\im{A}$, 
   there exists a sequence of indices $i_1,\ldots,i_N\in\llbracket1,M\rrbracket$ 
   such that
   \begin{equation}
      \bu + \bb_{i_1} + . . . + \bb_{i_N} = \bv
   \end{equation}
   where $\bu\to \bu+\bb_{i_1}\to \bu+\bb_{i_1}+\bb_{i_2}\to\ldots\to \bv$ is a path of $\N^m$, 
   namely for every $k\in\llbracket 1,N\rrbracket$, $\bu+\sum_{j=1}^k\bb_{i_j}\in\N^m$.
\end{definition}

We explain in Appendix \ref{app:image_and_kernel} why this definition is a 
generalisation of the usual concept as introduced in \cite{MarkovBasesDiacoSturm}. 

Let $\K$ be a field and consider the polynomial ideal 
\begin{equation}
   \I_A:=\langle \bx^{\bu}-\bx^{\bv} : \bu,\bv\in\N^m, \bu-\bv\in\im{A}\rangle \subset \K[x_1,\ldots,x_m].
\end{equation}
The following theorem characterises Markov bases in terms of ideals.
It is also a straightforward generalisation of a standard theorem on Markov bases 
\cite{MarkovBasesDiacoSturm}.

\begin{theorem}\label{theorem:Markov_ideal}
   A finite set of vectors $\{\pm\bb_1\ldots,\pm\bb_M\}\subset\im{A}$ is a Markov basis 
   of $A\in\Z^{m\times r}$ if and only if 
   \begin{equation}
      \I_A=\langle \bx^{\bb_i^+}-\bx^{\bb_i^-} : i\in\llbracket1,M\rrbracket\rangle.
   \end{equation}
\end{theorem}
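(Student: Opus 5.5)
We prove both implications by comparing connectivity in the graph on $\N^m$ whose edges are the moves $\pm\bb_i$ with membership in the binomial ideal. Write $\J:=\langle \bx^{\bb_i^+}-\bx^{\bb_i^-} : i\in\llbracket1,M\rrbracket\rangle$. Since each $\bb_i\in\im{A}$, every generator of $\J$ is a generator of $\I_A$, so $\J\subseteq\I_A$ holds unconditionally. Only the reverse inclusion needs work.

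\textbf{($\Rightarrow$)} Assume $\{\pm\bb_i\}$ is a Markov basis and take a generator $\bx^{\bu}-\bx^{\bv}$ of $\I_A$, so $\bu-\bv\in\im{A}$. By \cref{def:Markov_basis} there is a path $\bu=\bw_0\to\bw_1\to\cdots\to\bw_N=\bv$ in $\N^m$ with $\bw_{k}-\bw_{k-1}=\epsilon_k\bb_{i_k}$ and $\epsilon_k\in\{\pm1\}$. Telescope:
\begin{equation}
   \bx^{\bu}-\bx^{\bv}=\sum_{k=1}^N\left(\bx^{\bw_{k-1}}-\bx^{\bw_k}\right).
\end{equation}
Each step is a monomial multiple of a generator. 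If $\bw_k=\bw_{k-1}+\bb$, then $\bw_{k-1}-\bb^-=\bw_k-\bb^+=:\bc$. This vector is in $\N^m$: it equals $\min(\bw_{k-1},\bw_k)$ componentwise, because $\bb^+$ and $\bb^-$ have disjoint supports. Hence $\bx^{\bw_{k-1}}-\bx^{\bw_k}=-\bx^{\bc}(\bx^{\bb^+}-\bx^{\bb^-})\in\J$. The case $\epsilon_k=-1$ is symmetric. So $\I_A\subseteq\J$.

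\textbf{($\Leftarrow$)} This is the main step, following the Diaconis--Sturmfels argument. Assume $\I_A=\J$ and fix $\bu,\bv\in\N^m$ with $\bu-\bv\in\im{A}$. Then $\bx^{\bu}-\bx^{\bv}\in\J$, so we may write
\begin{equation}
   \bx^{\bu}-\bx^{\bv}=\sum_{k=1}^{L}\lambda_k\bx^{\bc_k}\left(\bx^{\bb_{i_k}^+}-\bx^{\bb_{i_k}^-}\right),
\end{equation}
with $\lambda_k\in\K$ and $\bc_k\in\N^m$. Splitting polynomial coefficients into sums lets us take each $\lambda_k=\pm1$, repetitions allowed.

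Build a graph $\Gamma$ on $\N^m$ with an edge between $\bc_k+\bb_{i_k}^+$ and $\bc_k+\bb_{i_k}^-$ for each $k$. Each edge is a single move by $\pm\bb_{i_k}$ between points of $\N^m$. Now let $C$ be the connected component of $\Gamma$ containing $\bu$, and apply the $\K$-linear functional $\theta_C$ that sends $\bx^{\bw}$ to $1$ if $\bw\in C$ and to $0$ otherwise. Each edge has both endpoints in $C$ or neither, so $\theta_C$ kills every summand on the right-hand side. On the left it gives $1-\indicator{\{\bv\in C\}}$. Hence $\bv\in C$, and a path from $\bu$ to $\bv$ in $\Gamma$ is exactly the sequence of moves that \cref{def:Markov_basis} requires.

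The one subtle case is $\bu=\bv$, which is trivial with the empty path. Otherwise $\bx^{\bu}\neq\bx^{\bv}$ and the functional argument applies directly.
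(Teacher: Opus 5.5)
Your proof is correct. The forward direction is the same telescoping argument as in the paper, with each step written as a monomial multiple of a generator.

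For the converse you take a genuinely different route. The paper argues greedily. After merging repeated terms, it asserts that some term of the representation contains $\bx^{\bu}$ with coefficient exactly $1$. It then performs that single move, subtracts the term, and iterates on the shorter sum.

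You instead put every move occurring in an arbitrary representation into one graph on $\N^m$. You then evaluate the indicator functional of the connected component of $\bu$, which annihilates each summand whatever its coefficient.

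Your version is more robust. It works verbatim over any field $\K$ and needs neither a normalisation of the coefficients nor any bookkeeping about termination. The paper's ``coefficient equals $1$'' step is only automatic over $\F_2$. Over $\Q$, for instance, $\bx^{\bu}$ may be split across several terms with coefficients $\tfrac12$. Removing one of them then does not leave a sum of the form $\bx^{\bu-\bb_{i_1}}-\bx^{\bv}$, so the iteration as written does not go through.

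The paper's route, where it applies (that is, with coefficients in $\{0,\pm1\}$, as in the Diaconis--Sturmfels induction on the number of terms), produces the path term by term. Yours first certifies that $\bv$ lies in the component of $\bu$ and then extracts the path by searching the finite edge set.

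One small correction: your sentence about splitting coefficients so that every $\lambda_k=\pm1$ is not justified for a general field, since a coefficient such as $\tfrac12\in\Q$ is not a sum of $\pm1$'s. Your functional argument never uses this normalisation, so simply delete that sentence.
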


\begin{proof}
   Assume $\{\pm\bb_1,\ldots,\pm\bb_M\}\subset \im{A}$ is a Markov basis of $A$. \\
   Let $\I'=\langle \bx^{\bb_i^+}-\bx^{\bb_i^-} : i\in\llbracket1,M\rrbracket\rangle$. 
   It is clear that $\I'\subset\I_A$. For the other inclusion, take $\bu,\bv\in\N^m$ such that $\bu-\bv\in\im{A}$.
   There exists a path of $\N^m$ such that $\bu + \bb_{i_1} + . . . + \bb_{i_N} = \bv$. So,
   \begin{equation}
      \begin{aligned}
         \bx^{\bu}-\bx^{\bv} 
         &= (\bx^{\bu}-\bx^{\bu+\bb_{i_1}})+(\bx^{\bu+\bb_{i_1}}-\bx^{\bu+\bb_{i_1}+\bb_{i_2}})+\ldots+(\bx^{\bu+\bb_{i_1}+\ldots+\bb_{i_{N-1}}}-\bx^{\bv}) \\
         &= \bx^{\bu-\bb_{i_1}^-}(\bx^{\bb_{i_1}^-}-\bx^{\bb_{i_1}^+})+(\bx^{\bu+\bb_{i_1}}-\bx^{\bu+\bb_{i_1}+\bb_{i_2}})+\ldots+(\bx^{\bu+\bb_{i_1}+\ldots+\bb_{i_{N-1}}}-\bx^{\bv}) \\
         &=\bx^{\bu-\bb_{i_1}^-}(\bx^{\bb_{i_1}^-}-\bx^{\bb_{i_1}^+})
         +\sum_{k=2}^{N-1} \bx^{\bu+\sum_{j=1}^{k-1}\bb_{i_j}-\bb_{i_k}^-}\left(\bx^{\bb_{i_k}^-}-\bx^{\bb_{i_k}^+}\right) \in \I'
      \end{aligned}
   \end{equation}
   where the last equality is justified by the fact that 
   $\forall \bn\in\N^m, \bn+\bb^+-\bb^-\in\N^m \Rightarrow \bn-\bb^-\in\N^m$
   as $\supp{\bb^+}\cap\supp{\bb^-}=\varnothing$.

   Now, assume that $\I'=\I_A$ and consider $\bu,\bv\in\N^m$ such 
   that $\bu-\bv\in\im{A}$. Then, $\bx^{\bu}-\bx^{\bv}\in\I_A=\I'$. 
   So, $\bx^{\bu}-\bx^{\bv}$ can be decomposed as
   \begin{equation}\label{eq:decomp_ideal}
      \begin{aligned}
         \bx^{\bu}-\bx^{\bv} &= \sum_{i=1}^{M}p_i(\bx)\left(\bx^{\bb_i^+}-\bx^{\bb_i^-}\right) \\
         &=\sum_{i=1}^{M} \sum_{\bj\in\N^n}\mu_{\bj}^{(i)}
         \left(\bx^{\bb_i^++\bj}-\bx^{\bb_i^-+\bj}\right),
      \end{aligned}
   \end{equation}
   where $p_i(\bx)=\sum_{\bj\in\N^n}\mu_{\bj}^{(i)}\bx^{\bj}$ for a given sequence of polynomials 
   $\{p_i\}_{1\leq i\leq M}$. 

   Up to a factorisation, we assume without loss of generality that each term 
   $\left(\bx^{\bb_i^++\bj}-\bx^{\bb_i^-+\bj}\right)$ appears only once in the sum 
   in \cref{eq:decomp_ideal}.
   Hence, because the sum is equal to $\bx^{\bu}-\bx^{\bv}$, there exists 
   $(i_1,\bj_1)\in\llbracket1,M\rrbracket\times\N^m$
   such that one of the term of the sum contains $\bx^{\bu}$, i.e., $\bu=\bb_{i_1}^++\bj_1$
   and $\mu_{\bj_1}^{(i_1)}=1$. Thus, $\bb_{i_1}^-+\bj_1=\bu-\bb_{i_1}$ and \cref{eq:decomp_ideal} 
   then becomes
   \begin{equation}
      \begin{aligned}
         &\bx^{\bu}-\bx^{\bv} =(\bx^{\bu}-\bx^{\bu-\bb_{i_1}})+\sum_{i,j\neq i_1,j_1}\mu_{\bj}^{(i)}\left(\bx^{\bb_i^++\bj}-\bx^{\bb_i^-+\bj}\right)\\
         \Leftrightarrow\quad&\bx^{\bu-\bb_{i_1}}-\bx^{\bv} =\sum_{i,j\neq i_1,j_1}\mu_j^{(i)}\left(\bx^{\bb_i^++\bj}-\bx^{\bb_i^-+\bj}\right).
      \end{aligned}
   \end{equation}
   As $\bu\to \bu-\bb_{i_1}$ is a path of $\N^m$, iterating this process until there is no 
   term left in the sum gives a path of $\N^m$ from $\bu$ to $\bv$.
\end{proof}

Because the ring $\K[x_1,\ldots,x_m]$ is Noetherian, by the Hilbert basis theorem, this proposition guarantees that any matrix $A$ admits
a finite Markov basis. In our case, for computations, the simplest choice
is $\K=\F_2$, the finite field with two elements.

In particular, there is a known generating set of $\I_A$ obtained via the Graver basis 
\cite{sturmfelsgröbner,Santos_2003,Hoekstra2013}. 
Again, it is a generalisation of the original concept already existing in the literature, 
but we choose to keep the same terminology.

\begin{definition}\label{def:graver_basis}
   Given a matrix $A\in\Z^{m\times r}$, the Graver basis of $A$ denoted $\cG_A$ is 
   the set of minimal elements of $\im{A}\setminus\{0\}$ according to the 
   partial order over $\Z^m$ defined as $\ba\preceq \bb$ iff for all $i\in\llbracket1,m\rrbracket$
   $0\leq a_i\leq b_i$ or $b_i\leq a_i\leq 0$.
\end{definition}

The partial order defined in \cref{def:graver_basis} is illustrated in \cref{fig:graver_basis}. 
As an example, considering $A=\begin{pmatrix}
   2\\-2
\end{pmatrix}$, as $\im{A}$ intersects nontrivially with only two orthants
of $\Z^2$, the Graver basis contains only two elements which are 
$\begin{pmatrix}
   2\\-2
\end{pmatrix}$ and $\begin{pmatrix}
   -2\\2
\end{pmatrix}$.

\begin{figure} 
   \centering
   \begin{tikzpicture}[scale=1.4]
      \draw[->] (-1,0) -- (5,0);
      \draw[->] (0,-4) -- (0,2);

      \coordinate (b) at (3,-2);

      \fill[blue!20] (0,0) rectangle (b);

      \draw[blue, thick] (0,0) rectangle (b);

      \foreach \x in {-1,0,1,2,3,4}
      \foreach \y in {-4,-3,-2,-1,0,1}
         \fill (\x,\y) circle (1pt);

      \draw[->, thick, red] (0,0) -- (b);

      \node[below right] at (b) {$b$};
   \end{tikzpicture}
    \caption{
      \textbf{Graphical representation of the partial order in \cref{def:graver_basis}:} 
      Let $\bb=\begin{pmatrix}
         3 & -2
      \end{pmatrix}^\top\in\Z^2$. The blue shaded area represents all the vectors of $\Z^2$
      less than or equal to $\bb$.}
    \label{fig:graver_basis}
\end{figure}

\begin{proposition}
   For any $A\in\Z^{n\times r}$, $\cG_A$ is finite and 
   $\I_A=\langle \bx^{\bu^+}-\bx^{\bu^-}:u\in\cG_A\rangle$.
\end{proposition}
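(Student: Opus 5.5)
Two claims need proof. The first is that $\cG_A$ is finite. The second is that the binomials indexed by $\cG_A$ generate $\I_A$. For the second claim, \cref{theorem:Markov_ideal} shows it suffices to prove that $\cG_A$ (which is symmetric under $\bu\mapsto-\bu$) is a Markov basis of $A$.

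\textbf{Finiteness.} I work orthant by orthant. Fix a sign vector $\sigma\in\{\pm1\}^m$ and consider the monoid $\im{A}\cap O_\sigma$, where $O_\sigma=\{\bx\in\Z^m:\sigma_ix_i\geq0\}$. The map $\bx\mapsto(\sigma_ix_i)_i$ sends $O_\sigma$ bijectively onto $\N^m$, and on $O_\sigma$ the partial order $\preceq$ becomes the componentwise order of $\N^m$. Every element of $\cG_A$ lies in some orthant. An element of $\cG_A\cap O_\sigma$ is minimal in $(\im{A}\setminus\{0\})\cap O_\sigma$: anything below it in $\preceq$ lies in the same orthant. By Dickson's lemma, the set of componentwise-minimal elements of any subset of $\N^m$ is finite. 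Since there are only $2^m$ orthants, $\cG_A$ is finite.

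\textbf{Markov property.} Let $\bu,\bv\in\N^m$ with $\bw:=\bv-\bu\in\im{A}\setminus\{0\}$. The key step is a conformal decomposition: every nonzero $\bw\in\im{A}$ can be written as $\bw=\bg_1+\cdots+\bg_N$ with $\bg_k\in\cG_A$ and each $\bg_k\preceq\bw$. I prove this by induction on $\|\bw\|_1$. The set $\{\bx\in\im{A}\setminus\{0\}:\bx\preceq\bw\}$ is finite and nonempty, since it contains $\bw$. Choose a $\preceq$-minimal element $\bg$ of it; minimality within this set is the same as global minimality, because $\preceq$ is transitive, so $\bg\in\cG_A$. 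Then $\bw-\bg\preceq\bw$ still holds, because within a fixed orthant subtracting a smaller vector stays sign-compatible. Also $\bw-\bg\in\im{A}$ and $\|\bw-\bg\|_1<\|\bw\|_1$, so either $\bw-\bg=0$ or the induction hypothesis applies. This yields the decomposition.

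It remains to check that the walk $\bu\to\bu+\bg_1\to\cdots\to\bv$ stays in $\N^m$, in any order of the steps. Look at coordinate $i$. All the $(g_k)_i$ have the same sign as $w_i$. So along the walk the $i$-th coordinate changes monotonically from $u_i$ to $v_i$, and it always lies between $u_i\geq0$ and $v_i\geq0$. Every intermediate point is therefore in $\N^m$, so $\cG_A$ is a Markov basis and \cref{theorem:Markov_ideal} concludes.

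The main obstacle is the conformal decomposition: one must make sure the remainder $\bw-\bg$ stays sign-compatible with $\bw$, so that the walk never leaves $\N^m$. The finiteness is a routine application of Dickson's lemma once one restricts to a single orthant.
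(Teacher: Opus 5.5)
Your proof is correct. The finiteness part coincides with the paper's: on each of the $2^m$ orthants, $\preceq$ is the product order of $\N^m$, so Dickson's lemma applies.

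For the generation statement you take a different route. The paper works directly inside the ideal. It takes a $\preceq$-minimal difference $\bu-\bv$ whose binomial $\bx^{\bu}-\bx^{\bv}$ is not in $\I_{\cG_A}$, and chooses a Graver element $\ba\prec\bu-\bv$. It then splits $\bx^{\bu}-\bx^{\bv}$ into a monomial multiple of $\bx^{\ba^+}-\bx^{\ba^-}$ plus a monomial multiple of a binomial with exponent difference $\bu-\bv-\ba\prec\bu-\bv$, contradicting minimality. This needs no walks in $\N^m$ and no appeal to \cref{theorem:Markov_ideal}.

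You instead prove the stronger combinatorial statement that every nonzero element of $\im{A}$ is a \emph{conformal} sum of Graver elements. This gives monotone walks in $\N^m$, valid in any order of the steps, so $\cG_A$ is a Markov basis. You then only need the easy telescoping direction of \cref{theorem:Markov_ideal}.

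The core step is the same in both: peel off a Graver element lying below the current difference and recurse on a $\preceq$-smaller vector. The payoff differs. The paper's argument is self-contained for the ideal identity. Yours also delivers the Markov property of $\cG_A$ directly, whereas the paper obtains it afterwards from the converse direction of \cref{theorem:Markov_ideal}, whose proof there is the more delicate one.

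The only thing you leave implicit is the trivial case $\bu=\bv$, which the empty walk covers.
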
 

\begin{proof}
   For each orthant of $\Z^m$---namely, $\sigma_1\N\times\ldots\times\sigma_m\N$ 
   where $\sigma_i\in\{+,-\}$---the partial order $\preceq$ coincides with the partial
   product order of $\N^m$. In this respect, $\im{A}$ restricted to 
   an orthant is a subset of $\N^m$ and so 
   admits a finite set of least elements. As there is only a finite number 
   of orthants, there is a finite number of least elements in $\im{A}$.

   Now, let $\I_{\cG_A}=\langle \bx^{\bu^+}-\bx^{\bu^-}:\bu\in\cG_A\rangle$. 
   $\I_{\cG_A}\subset\I_A$ is clear as $\cG_A\subset\im{A}$.
   By contradiction, assume there exist $\bu,\bv\in\N^m$ such that $\bu-\bv\in\im{A}$ 
   and $\bx^{\bu}-\bx^{\bv}\notin\I_{\cG_A}$. Consider a least element $\bu-\bv$ of
   \begin{equation}
      \left(\left\{\bu-\bv\in\im{A}:
      \bx^{\bu}-\bx^{\bv}\notin\I_{\cG_A}\right\},\preceq\right).
   \end{equation}
   Because $\bu-\bv$ is not in the Graver basis, 
   it means that there exists a Graver basis element $\ba$ such that 
   $\ba\prec \bu-\bv$, which means $\bw^+:=\bu-\ba^+ \geq0$ and $\bw^-:=\bv-\ba^-\geq0$ 
   componentwise and $\bw^+,\bw^-\neq0$.
   Hence, 
   \begin{equation}
      \bx^{\bu}-\bx^{\bv}=\bx^{\bw^+}\bx^{\ba^+}-\bx^{\bw^-}\bx^{\ba^-}
      =\bx^{\bw^+}(\bx^{\ba^+}-\bx^{\ba^-}) + \bx^{\ba^-}(\bx^{\bw^+}-\bx^{\bw^-}).
   \end{equation}
   Because $\bx^{\bw^+}(\bx^{\ba^+}-\bx^{\ba^-})\in\I_{\cG_A}$, we get that 
   $\bx^{\bw^++\ba^-}-\bx^{\bw^-+\ba^-}\notin\I_{\cG_A}$.
   But as 
   \begin{equation}
      \bw^+-\bw^-=\bu-\bv-\ba \in\im{A} \qquad \text{and} \qquad
      \bw^+-\bw^-\prec \bu-\bv,
   \end{equation}
   this contradicts the minimality of $\bu-\bv$.
   Hence, $\left\{\bu-\bv\in\im{A}:\bx^{\bu}-\bx^{\bv}\notin\I_{\cG_A}\right\}=\varnothing$ 
   which proves that $\I_A\subset\I_{\cG_A}$. 
\end{proof}

By \cref{theorem:Markov_ideal}, the Graver basis is a Markov
basis. It may therefore be used directly. Alternatively, 
one may work with the associated polynomial ideal and 
compute a Gröbner basis for a chosen monomial order, 
which can yield a Markov basis with fewer elements. 

We can now prove that any Tiger code can be generated 
as the intersection of a finite number of kernels.

\begin{proposition}\label{prop:finite_generation}
   Let $\{\bb_1,\ldots,\bb_M\}\subset\im{G}$ be a Markov basis of $G$. 
   Then, 
   \begin{equation}
      \cK = \bigcap_{i\in\llbracket1,M\rrbracket}\ker{\a^{\bb_i^+}-\balpha^{\bb_i}\a^{\bb_i^-}}.
   \end{equation}
\end{proposition}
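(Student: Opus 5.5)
The inclusion $\cK\subseteq\bigcap_{i}\ker{\a^{\bb_i^+}-\balpha^{\bb_i}\a^{\bb_i^-}}$ is immediate. Each $\bb_i$ belongs to $\im{G}$, and $\cK$ was shown to equal $\bigcap_{\bg\in\im{G}}\ker{\a^{\bg^+}-\balpha^{\bg}\a^{\bg^-}}$. The work is in the reverse inclusion. The plan is to take $\ket{\psi}$ in the finite intersection and verify the defining coefficient relation of $\cK$ in the form given by the Remark following \cref{prop:basis}, which avoids any domain issue: for all $\bn\sim_G\bn'$ we need $\psi_{\bn'}=\Gamma(\bn,\bn')\psi_{\bn}$.

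First I record a one-step relation. For each $i$, \cref{prop:step} applied with $\bg=\bb_i$ gives, for all $\bl\in\N^m$,
\begin{equation}
\sqrt{(\bl+\bb_i^+)!}\,\psi_{\bl+\bb_i^+}=\balpha^{\bb_i}\sqrt{(\bl+\bb_i^-)!}\,\psi_{\bl+\bb_i^-}.
\end{equation}
Take any $\bp\in\N^m$ with $\bp+\bb_i\in\N^m$. Because $\supp{\bb_i^+}\cap\supp{\bb_i^-}=\varnothing$, the vector $\bl:=\bp-\bb_i^-$ lies in $\N^m$; this is the same observation used in the proof of \cref{theorem:Markov_ideal}. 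Substituting this $\bl$ gives $\psi_{\bp+\bb_i}=\Gamma(\bp,\bp+\bb_i)\psi_{\bp}$. Since $\Gamma(\bp',\bp)=1/\Gamma(\bp,\bp')$, the same relation holds for a step by $-\bb_i$ whenever it stays in $\N^m$. So it does not matter whether the Markov basis is symmetric under negation.

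Next I chain these steps. Let $\bn\sim_G\bn'$, so $\bn'-\bn\in\im{G}$. By \cref{def:Markov_basis} there is a path $\bn=\bp_0\to\bp_1\to\cdots\to\bp_N=\bn'$ inside $\N^m$, where each $\bp_k-\bp_{k-1}$ is one of the $\bb_i$. The one-step relation gives $\psi_{\bp_k}=\Gamma(\bp_{k-1},\bp_k)\psi_{\bp_{k-1}}$ for each $k$. The cocycle identity $\Gamma(\bn,\bn')\Gamma(\bn',\bn'')=\Gamma(\bn,\bn'')$ then telescopes the product to $\psi_{\bn'}=\Gamma(\bn,\bn')\psi_{\bn}$. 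By the Remark, this already places $\ket{\psi}$ in $\cK$.

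The main obstacle I expect is the hypothesis that the path stays in $\N^m$. This is exactly what allows \cref{prop:step} to be applied with a nonnegative shift $\bl$ at each step, and it is why a Markov basis is needed rather than an arbitrary generating set of $\im{G}$. The domain subtlety needs no separate treatment, because the Remark after \cref{prop:basis} characterises $\cK$ purely through the coefficient relations.
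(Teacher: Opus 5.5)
Your proof is correct and follows the same route as the paper. You take the easy inclusion from the description of $\cK$ as the full intersection over $\im{G}$. You then use a Markov-basis path inside $\N^m$ between $\bn$ and $\bn'$, obtain the one-step relation from \cref{prop:step} via the shift $\bl=\bp-\bb_i^-$, and telescope with the cocycle identity for $\Gamma$. Your explicit appeal to the Remark after \cref{prop:basis} to discharge the domain condition makes precise a point the paper's proof leaves implicit.
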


\begin{proof}
   Because $\cK = \bigcap_{\bg\in\im{G}}\ker{\a^{\bg^+} - \balpha^{\bg}\a^{\bg^-}}\subset\bigcap_{i\in\llbracket1,M\rrbracket}\ker{\a^{\bb_i^+}-\balpha^{\bb_i}\a^{\bb_i^-}}$, 
   one inclusion is clear. 
   Let $\ket{\psi}\in\bigcap_{i\in\llbracket1,M\rrbracket}\ker{\a^{\bb_i^+}-\balpha^{\bb_i}\a^{\bb_i^-}}$ 
   and $\bn,\bn'\in\N^m$ such that $\bn\sim_G\bn'$.
   Consider the path $\bn'\to \bn' + \bb_{i_1}\to\ldots\to \bn' + \bb_{i_1}+\ldots+ \bb_{i_N} = \bn$ in $\N^m$. 
   As \break $\bn'\sim_G \bn' + \bb_{i_1} \sim_G \ldots \sim_G \bn' + \bb_{i_1} + \ldots + \bb_{i_N}$, 
   using \cref{prop:step}, we have
   \begin{equation}
      \begin{aligned}
         \psi_{\bn'}&=\Gamma(\bn' + \bb_{i_1},\bn')\psi_{\bn'+\bb_{i_1}}\\
         &=\ldots \\
         &=\Gamma(\bn'+\bb_{i_1} + \ldots + \bb_{i_N},\bn'+ \bb_{i_1} + \ldots + \bb_{i_{N-1}})\ldots\Gamma(\bn' + \bb_{i_1},\bn')\psi_{\bn' + \bb_{i_1} + \ldots + \bb_{i_N}} \\
         &=\Gamma(\bn,\bn')\psi_{\bn}
      \end{aligned}
   \end{equation}
   where we used the fact that $\Gamma(\bn,\bn')\Gamma(\bn',\bn'')=\Gamma(\bn,\bn'')$. 
\end{proof}

Note that the previous proposition does not take into account 
the $\n$-type constraints. In fact, when taken into account, it
is possible for some Tiger code to be generated by fewer
$\a$-type constraints as shown in the next example.

\begin{example}\label{ex:binomial_code}
   The two-mode binomial code \cite{Michael_2016} is a bosonic code
   defined by an integer $\Delta\in\N$ as 
   \begin{equation}
      \cB_\Delta=\ker{\hn_1+\hn_2-\Delta}\cap\ker{\ha_1^2-\ha_2^2}.
   \end{equation} 
   This corresponds to the Tiger code where $H=\begin{pmatrix}
      1&1
   \end{pmatrix}$, $G=\begin{pmatrix}
      2\\
      -2
   \end{pmatrix}$, $\bDelta=\Delta$ and $\balpha=\begin{pmatrix}
      \alpha\\
      \alpha
   \end{pmatrix}$ for some $\alpha\in\C$ such that $\alpha\neq0$.
   Here, 
   \begin{equation}
      \left\{\begin{pmatrix}
      2\\
      -2
   \end{pmatrix},\begin{pmatrix}
      -2\\
      2
   \end{pmatrix}\right\}
   \end{equation}
   clearly forms a Markov basis of $G$. As such, according to \cref{prop:finite_generation},
   \begin{equation}
      \bigcap_{\bg\in\im{G}}\ker{\a^{\bg^+}-\balpha^{\bg}\a^{\bg^-}}=
      \ker{\ha_1^2-\ha_2^2}\cap\ker{\ha_2^2-\ha_1^2}
      =\ker{\ha_1^2-\ha_2^2}.
   \end{equation}

   In the particular case where $\Delta=1$, then 
   $\ker{\hn_1+\hn_2-\Delta}=\Span{\ket{0,1},\ket{1,0}}$ and so
   $\ker{\hn_1+\hn_2-\Delta}\subset\ker{\hat{a}_1^2-\hat{a}_2^2}$.
   Hence, 
   \begin{equation}
      \cB_\Delta=\Span{\ket{0,1},\ket{1,0}}=\ker{\hn_1+\hn_2-\Delta}.
   \end{equation} 
   Thus, for $\Delta=1$, the $\a$-type constraint is superfluous.
\end{example}

\section{Homology and the codespace}\label{sec:homology}

According to \cref{theorem:basis}, the algebraic structure of a Tiger 
code is dictated by the set \break $(\bn_0+\ker{H})\cap\N^m/\!\!\sim_G$. 
Due to its resemblance with the \emph{homology group} \break
$H_1(C_\bullet)=\ker{H}\!/\!\im{G}$ of the chain complex 
$C_\bullet:\Z^r\xrightarrow{G}\Z^m\xrightarrow{H}\Z^s$, 
we investigate in this section in which case the algebraic structure of 
our set of interest can be captured by that of 
the homology group---which is arguably easier to study.

For readability, denote $K_H^{\bDelta}:=(\bn_0+\ker{H})\cap\N^m$.
First, we introduce an embedding of $K_H^{\bDelta}/\!\!\sim_G$ into 
$H_1(C_\bullet)$:
\begin{equation}\label{eq:iota}
   \begin{aligned}
      \iota : K_H^{\bDelta}/\!\!\sim_G & \rightarrow H_1(C_\bullet) \\
      [\bn]&\mapsto [\bn- \bn_0].
   \end{aligned}
\end{equation}
While this map is injective, it turns out to be surjective as well in some cases.
Below, we present a sufficient condition for the map $\iota$ to be surjective.

\begin{definition}
   A Tiger code $\cT(G,H, \balpha, \bDelta)$
   is \emph{positive} if there exists a vector $\bg\in\im{G}$ such that $g>0$ 
   componentwise.
   A Tiger code that is not positive is called a \emph{non-positive} Tiger code.
\end{definition}

\begin{proposition}\label{prop:iota}
   $\iota$ is always injective, and is bijective if the Tiger code 
   is positive.
\end{proposition}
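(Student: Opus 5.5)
First I will check that $\iota$ is well defined. If $\bn\in K_H^{\bDelta}$, then $H(\bn-\bn_0)=\bDelta-\bDelta=0$, so $\bn-\bn_0\in\ker{H}$. If $\bn\sim_G\bn'$, then $(\bn'-\bn_0)-(\bn-\bn_0)=\bn'-\bn\in\im{G}$, so both give the same class in $H_1(C_\bullet)$.

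Injectivity is equally direct. Suppose $[\bn-\bn_0]=[\bn'-\bn_0]$ in $\ker{H}/\im{G}$. Then $\bn'-\bn\in\im{G}$, and since $\bn,\bn'\in\N^m$, this means $\bn\sim_G\bn'$ by definition.

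For surjectivity in the positive case, fix $\bg\in\im{G}$ with $\bg>0$ componentwise, and take any class $[\bk]\in H_1(C_\bullet)$ with $\bk\in\ker{H}$. The plan is to shift the representative $\bn_0+\bk$ along $\bg$ until it lies in the nonnegative orthant. Because every entry of $\bg$ is at least $1$, choose
\begin{equation}
t:=\max_{1\le j\le m}\max\bigl(0,-(\bn_0+\bk)_j\bigr)\in\N .
\end{equation}
Then $\bn:=\bn_0+\bk+t\bg$ satisfies $\bn\ge0$ componentwise, so $\bn\in\N^m$. Since $\im{G}\subset\ker{H}$, we also have $H\bn=H\bn_0=\bDelta$, hence $\bn\in K_H^{\bDelta}$. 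Finally $\iota([\bn])=[\bk+t\bg]=[\bk]$ because $t\bg\in\im{G}$.

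There is no real obstacle here. The only point needing care is that the shift stays inside $\im{G}\subset\ker{H}$ and that the positivity of $\bg$ gives a uniform bound on $t$, and both are immediate.
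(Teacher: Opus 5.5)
Your proof is correct and follows the same route as the paper: injectivity is read off directly from the definition of $\sim_G$, and surjectivity comes from translating the representative $\bn_0+\bk$ by a large multiple of a componentwise-positive $\bg\in\im{G}\subset\ker{H}$. Your explicit bound on $t$ and the well-definedness check are small details that the paper leaves implicit.
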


\begin{proof}
   Consider two elements $[\bn],[\bn']\in K_H^{\bDelta}/\!\!\sim_G$ 
   such that $\iota([\bn])=\iota([\bn'])$. Then, 
   \break $\bn-\bn'=\bn-\bn_0-(\bn'-\bn_0)\in\im{G}$, 
   which proves that $\bn\sim_G\bn'$ and $[\bn]=[\bn']$.
   For surjectivity, consider $[\bn]\in H_1(C_\bullet)$. As the Tiger code is positive, 
   there exists $\bg\in\im{G}\subset\ker{H}$ such that $\bg>0$. So, for $\lambda\in\N$ large enough, 
   $\bn_0+\bn+\lambda \bg \in K_H^{\bDelta}$.
   Thus, $\iota([\bn+\bn_0+\lambda \bg])=[\bn+\lambda \bg]=[\bn]$.
\end{proof}

Note that this proposition gives a sufficient condition which is not a necessary 
one. The non-positive Tiger code in the following example makes $\iota$ surjective.

\begin{example}\label{ex:binomial_code2}
   Consider the two-mode binomial code introduced in \cref{ex:binomial_code}. This Tiger code 
   is non-positive. However, for $\Delta>0$,
   \begin{equation}
      K_H^{\bDelta}=\{\bn\in\N^2: H\bn=\bDelta\}=\left\{\begin{pmatrix}
         i\\
         \Delta-i
      \end{pmatrix}:i\in\llbracket0,\Delta\rrbracket\right\}
   \end{equation}
   and thus 
   \begin{equation}
      K_H^{\bDelta}/\!\!\sim_G=\left\{\left[\begin{pmatrix}
         0\\
         \Delta
      \end{pmatrix}\right],\left[\begin{pmatrix}
         1\\
         \Delta-1
      \end{pmatrix}\right]\right\}.
   \end{equation}
   On \cref{fig:binomial_code}, the equivalence class 
   $\left[\begin{pmatrix}
         0\\
         \Delta
      \end{pmatrix}\right]$ is represented by the red points while 
   the equivalence class of $\left[\begin{pmatrix}
         1\\
         \Delta-1
      \end{pmatrix}\right]$ is represented by the blue points.
   \begin{figure} 
      \centering
      \begin{tikzpicture}[scale=2]
        \begin{scope}[shift={(0.5,0.5)}]
            \draw[thick,->] (0,0) -- (4.5,0) node[right] {$n_1$};
            \draw[thick,->] (0,0) -- (0,4.5) node[above] {$n_2$};
        
            \foreach \x in {0,1,2,3} {
                \node[below] at (\x+0.5,0) {\x};
            }
            \foreach \y in {0,1,2,3} {
                \node[left] at (0,\y+0.5) {\y};
            }
        \end{scope}

        \draw[dotted, step=1] (1,1) grid (4,4);

        \foreach \x in {1,2,3,4} {
            \draw[dotted, step=1] (\x,4) -- (\x,5);
        }
        \foreach \y in {1,2,3,4} {
            \draw[dotted, step=1] (4,\y) -- (5,\y);
        }

        \draw[color=black] (4.25,0.75) -- (0.75,4.25);

        \node[circle,fill=red,inner sep=1.5pt] (0) at (1,4) {};
        \node[circle,fill=blue,inner sep=1.5pt] (1) at (2,3) {};
        \node[circle,fill=red,inner sep=1.5pt] (2) at (3,2) {};
        \node[circle,fill=blue,inner sep=1.5pt] (3) at (4,1) {};
        \node[] (4) at (2,4.5) {$\Delta=3$};

    \end{tikzpicture}
      \caption{\textbf{Graphical representation of the two-mode binomial code for $\Delta=3$:} 
      Due to the constraint $H\bn=n_1+n_2=\Delta$, the points in $\{\bn\in\N^2:H\bn=\Delta\}$
      belong to the black line. Two points on this line distant by a multiple 
      of $\begin{pmatrix}2 & -2\end{pmatrix}^\top$ belong to the same equivalence class.
      Hence, the points are partitioned into two sets---red and blue---corresponding to the two
      equivalence classes.}
      \label{fig:binomial_code}
   \end{figure}
   Meanwhile, 
   \begin{equation}
      H_1(C_\bullet)=\text{ker}\begin{pmatrix}
         1 & 1
      \end{pmatrix}/\text{im}\begin{pmatrix}
         2\\ -2
      \end{pmatrix}=\{\left[\begin{pmatrix}
         0\\0
      \end{pmatrix}\right],\left[\begin{pmatrix}
         1\\-1
      \end{pmatrix}\right]\}.
   \end{equation}
   Hence, 
   \begin{equation}
      \iota\left(\left[\begin{pmatrix}
         0\\0
      \end{pmatrix}\right]\right) = \left[\begin{pmatrix}
         0\\ \Delta
      \end{pmatrix}\right] \quad \text{and} \quad 
      \iota\left(\left[\begin{pmatrix}
         1\\-1
      \end{pmatrix}\right]\right)=\left[\begin{pmatrix}
         1\\\Delta-1
      \end{pmatrix}\right]
   \end{equation}
   thus proving that $\iota$ is indeed bijective.
\end{example}

Combined with \cref{theorem:basis}, the injection $\iota$ of \cref{eq:iota} 
allows for picturing the codespace of $\cT(G,H,\balpha,\bDelta)$ as a subspace of 
$L^2\left(H_1(C_\bullet)\right)$.

\begin{theorem} \label{theorem:pi}
   Consider the bounded linear map 
   \begin{equation}
      \begin{aligned}
         \pi : L^2\left(H_1(C_\bullet)\right) & \rightarrow \cT(G,H, \balpha, \bDelta) \\
         f &\mapsto 
         \sum_{[\bn]\in K_H^{\bDelta}/\!\sim_G}f\circ\iota([\bn])
         \ket{\balpha_{[\bn]}}.
      \end{aligned}
   \end{equation}
   Then, $\cT(G,H, \balpha, \bDelta)\cong L^2\left(H_1(C_\bullet)\right)/\ker{\pi}$.
   Using the isomorphism $H_1(C_\bullet)\cong \Z^f\times \Z_{d_1}\times\ldots\times\Z_{d_t}$ (\cref{theorem:H1isom}), 
   we deduce that 
   \begin{equation}
      \cT(G,H, \balpha, \bDelta)\cong L^2(\Z^f)\otimes \C^{d_1}\otimes\ldots\otimes \C^{d_t}/\ker{\pi}
   \end{equation}
   where $d_{i}$ divides $d_{i+1}$. In particular, for positive Tiger codes, $\ker{\pi}=0$.
\end{theorem}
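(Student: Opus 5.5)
The plan is to show that $\pi$ is a bounded, surjective linear map. The isomorphism $\cT(G,H,\balpha,\bDelta)\cong L^2(H_1(C_\bullet))/\ker{\pi}$ then follows from the first isomorphism theorem, and it is topological since $\ker{\pi}$ is closed.

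\textbf{Boundedness and well-definedness.} Since $\iota$ is injective (\cref{prop:iota}), the map $f\mapsto (f\circ\iota([\bn]))_{[\bn]}$ is a restriction of $f$ to the subset $\im{\iota}\subset H_1(C_\bullet)$. Because $H_1(C_\bullet)$ is countable and carries counting measure, this gives
\begin{equation}
   \sum_{[\bn]\in K_H^{\bDelta}/\sim_G}|f(\iota([\bn]))|^2\leq\|f\|^2 .
\end{equation}
By \cref{theorem:basis} the vectors $\ket{\balpha_{[\bn]}}$ are orthonormal, so the series defining $\pi(f)$ converges in $\H$ to an element of the closed subspace $\cT(G,H,\balpha,\bDelta)$. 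Moreover $\|\pi(f)\|\leq\|f\|$, so $\pi$ is a contraction, and in fact a partial isometry.

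\textbf{Surjectivity.} Take any $\ket{\psi}\in\cT(G,H,\balpha,\bDelta)$. \cref{theorem:basis} gives an expansion $\ket{\psi}=\sum_{[\bn]}c_{[\bn]}\ket{\balpha_{[\bn]}}$ with $(c_{[\bn]})\in\ell^2$. Define $f(\iota([\bn])):=c_{[\bn]}$ on $\im{\iota}$ and $f:=0$ elsewhere. This is well defined by injectivity of $\iota$, satisfies $f\in L^2$, and gives $\pi(f)=\ket{\psi}$.

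\textbf{Kernel.} By the same computation, $\ker{\pi}=\{f: f|_{\im{\iota}}=0\}=L^2(H_1(C_\bullet)\setminus\im{\iota})$, which is closed. Consequently the induced map $L^2(H_1(C_\bullet))/\ker{\pi}\to\cT$ is a bijective bounded map, and it is isometric for the quotient norm. For positive Tiger codes, $\iota$ is bijective by \cref{prop:iota}, so $\ker{\pi}=0$.

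\textbf{Tensor form.} We invoke the structure theorem for the finitely generated abelian group $H_1(C_\bullet)$ (\cref{theorem:H1isom}), which gives $H_1(C_\bullet)\cong\Z^f\times\Z_{d_1}\times\cdots\times\Z_{d_t}$ with $d_i\mid d_{i+1}$. A group isomorphism is in particular a bijection of countable sets, so it induces a unitary between the corresponding $L^2$ spaces with counting measure. The factorisation then comes from $L^2(X\times Y)\cong L^2(X)\otimes L^2(Y)$ for countable discrete sets, together with $L^2(\Z_d)\cong\C^d$. Transporting $\ker{\pi}$ through this unitary yields the stated isomorphism.

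The only delicate point is checking that the functional-analytic details hold, namely convergence of the series and closedness of the kernel. Both follow from orthonormality, so I expect no real obstacle.
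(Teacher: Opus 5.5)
Your proof is correct and follows the same route as the paper. Boundedness comes from the orthonormality of $\{\ket{\balpha_{[\bn]}}\}$ together with the injectivity of $\iota$, and surjectivity comes from extending the coefficients of $\ket{\psi}$ by zero outside $\im{\iota}$. Your explicit description $\ker{\pi}=\{f : f|_{\im{\iota}}=0\}$, the quotient-norm isometry and the tensor-product bookkeeping are details the paper leaves implicit, and they correctly give $\ker{\pi}=0$ in the positive case via \cref{prop:iota}.
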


\begin{proof}
   We prove that $\pi$ is bounded and then prove surjectivity.
   For any $f\in L^2\left(H_1(C_\bullet)\right)$,
   \begin{equation}
      \|\pi(f)\|^2=
      \sum_{[\bn]\in K_H^{\bDelta}/\!\sim_G}|f\circ\iota([\bn])|^2=
      \sum_{[\bx]\in \iota\left(K_H^{\bDelta}/\!\!\sim_G\right)}|f\left([\bx]\right)|^2
      \leq \|f\|^2
   \end{equation}
   as $\iota\left(K_H^{\bDelta}/\!\!\sim_G\right)\subseteq H_1(C_\bullet)$.

   Then, because $\left\{\ket{\balpha_{[\bn]}} : [\bn] \in K_H^{\bDelta}/\!\!\sim_G\right\}$
   forms an orthonormal basis of the codespace (\cref{theorem:basis}) 
   any vector $\ket{\psi}\in\cT(G,H, \balpha, \bDelta)$ can be decomposed as 
   \begin{equation}
      \ket{\psi}
      =\sum_{[\bn] \in K_H^{\bDelta}/\!\sim_G}\psi_{[\bn]}\ket{\balpha_{[\bn]}}.
   \end{equation}
   So, as $\iota$ is injective by \cref{prop:iota}, taking 
   \begin{equation}
      f:[\bx]\mapsto\left\{\begin{aligned}
         &\psi_{\iota^{-1}([\bx])} & \text{if } [\bx]\in \iota(K_H^{\bDelta}/\!\!\sim_G)\\
         &0 & \text{otherwise},
      \end{aligned}\right.
   \end{equation}
   $f$ belongs to $L^2\left(H_1(C_\bullet)\right)$ and
   we get $\pi(f)=\ket{\psi}$.
\end{proof}

According to \cref{theorem:pi}, whenever $\ker{\pi}=0$, the
Tiger code contains $f$ rotors---meaning $f$ copies of $L^2(\Z)$---and qudits of dimension $d_1,\ldots,d_t$, respectively.
$\ker{\pi}$ happens to be trivial when $\iota\left(K_H^{\bDelta}/\!\!\sim_G\right)=H_1(C_\bullet)$,
in which case $\pi$ is an isometry: it preserves the norm and has a bounded inverse.

As a direct consequence of \cref{prop:iota}, we know that being positive 
is a \emph{sufficient} but not \emph{necessary} condition
for a Tiger code to claim that $\ker{\pi}=0$. 

To conclude this section, we explicitly describe the construction 
of the isomorphism \break 
$H_1(C_\bullet)\cong \Z^f\times \Z_{d_1}\times\ldots\times\Z_{d_t}$
aforementioned in \cref{theorem:pi}.

\begin{theorem}\label{theorem:H1isom}
   There exist $f,d_1,\ldots,d_t\in\Z$ such that
   \begin{equation}
      H_1(C_\bullet)\cong \Z^f\times \Z_{d_1}\times\ldots\times\Z_{d_t}
   \end{equation}
   where for all $i\in\llbracket1,t-1\rrbracket$, $d_i$ divides $d_{i+1}$
   and $f+t\leq m$.
\end{theorem}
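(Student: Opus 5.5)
The plan is to reduce everything to the structure theorem for finitely generated abelian groups, made explicit through Smith normal forms, because the paper promises an explicit construction.

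\textbf{Step 1: $\ker{H}$ is free.} Since $\ker{H}$ is a subgroup of $\Z^m$ and $\Z$ is a PID, $\ker{H}$ is free of some rank $k\le m$. Concretely, I would compute the Smith normal form $H=UDV$ with $U\in GL_s(\Z)$, $V\in GL_m(\Z)$ and $D$ diagonal with $\rho:=\operatorname{rank}H$ nonzero entries. Then $\ker{H}=V^{-1}(\{0\}^{\rho}\times\Z^{m-\rho})$. So the last $k:=m-\rho$ columns of $V^{-1}$ form a $\Z$-basis of $\ker{H}$; call this basis matrix $K\in\Z^{m\times k}$.

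\textbf{Step 2: express $G$ in these coordinates.} Because $HG=0$, every column of $G$ lies in $\ker{H}$. Hence there is a unique integer matrix $G'\in\Z^{k\times r}$ with $G=KG'$. Explicitly, $G'$ is the last $k$ rows of $VG$. The coordinate map $K:\Z^k\to\ker{H}$ is an isomorphism, and it sends $\im{G'}$ onto $\im{G}$. Therefore
\[
H_1(C_\bullet)\cong \Z^k/\im{G'}.
\]

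\textbf{Step 3: diagonalise $G'$.} Take a second Smith normal form $G'=PSQ$ with $P\in GL_k(\Z)$ and $Q\in GL_r(\Z)$. Here $S$ is diagonal with entries $e_1\mid e_2\mid\cdots\mid e_\ell$, where $\ell=\operatorname{rank}G'\le k$, followed by zeros. Since $Q$ is invertible, $\im{G'}=P\,\im{S}$. Applying $P^{-1}$ gives
\[
H_1(C_\bullet)\cong \Z^k/\im{S}\cong \Z_{e_1}\times\cdots\times\Z_{e_\ell}\times\Z^{k-\ell}.
\]

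\textbf{Step 4: tidy up.} The factors with $e_i=1$ are trivial, so I discard them. The remaining $e_i>1$ become $d_1\mid\cdots\mid d_t$, and $f:=k-\ell$. Then
\[
f+t\le (k-\ell)+\ell=k\le m.
\]
The divisibility $d_i\mid d_{i+1}$ is inherited from the Smith normal form.

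The only genuinely nontrivial input is the existence of Smith normal forms over $\Z$, which I will cite as standard. I see no real obstacle. The points to check carefully are that $\ker{H}$ is free with an explicitly computable basis (Step 1) and that the change of coordinates in Step 2 carries $\im{G}$ onto $\im{G'}$; the latter holds because $K$ is injective.
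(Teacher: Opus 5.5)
Your argument is correct, but it takes a different route from the paper.

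\textbf{The paper's route.} It uses a single Smith decomposition $LGC=D$ of $G$ in the ambient lattice $\Z^m$. It notes that the torsion generators $\bl_{d_i}$ (the first $t$ columns of $L^{-1}$) lie in $\ker{H}$, because $H\bl_{d_i}=\frac{1}{d_i}HGC\be_i=0$. Then, ``up to a change of basis'', it completes them to a basis of $\ker{H}$ using some of the remaining columns $\bl_1,\ldots,\bl_{f'}$.

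\textbf{Your route.} You coordinatise $\ker{H}$ first via a Smith form of $H$, rewrite $G=KG'$, and then need only one Smith form, that of $G'$.

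\textbf{What yours buys.} Every step is mechanical, and you avoid the completion step, which the paper only sketches. That step needs two facts: $\ker{H}\cap\operatorname{span}(\bl_1,\ldots,\bl_{f'})$ is a direct summand of that span (true because it is a kernel, hence saturated), and re-basing the last $f'$ columns of $L^{-1}$ preserves $LGC=D$ (true because the last $f'$ rows of $D$ vanish). Your route also gives the bound explicitly as $f+t\le k=m-\operatorname{rank}H$. In fact your two decompositions assemble into exactly the adapted Smith decomposition the paper takes for granted. Since $VG=\begin{pmatrix}0\\G'\end{pmatrix}$, move the last $k$ rows of $\operatorname{diag}(I_\rho,P^{-1})V$ (with $\rho=\operatorname{rank}H$) to the top. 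This gives an $L$ that, with $C=Q^{-1}$, satisfies $LGC=D$, and whose inverse has the columns of $KP$ (a basis of $\ker{H}$) as its first $k$ columns.

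\textbf{What the paper's buys.} Generators and invariant factors are read off a Smith decomposition of $G$ itself. This makes it visible that $H_1(C_\bullet)$ and $\Z^m/\im{G}$ share their torsion. It also supplies the matrices $L$ and $\tilde L^{-1}$ (hence $\eta$) that are reused for the dual generators in \cref{eq:basis_ker_2pi} and in \cref{lemma:fixed_last_components,theorem:Zrotations}. For the same reason the paper's $t$ counts all nonzero invariant factors, units included. Dropping the trivial factors as you do is harmless for the statement itself.
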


\begin{proof}
   Consider the Smith normal form of $G$ written:
   \begin{equation}
      LGC=D
   \end{equation}
   where $L\in\Z^{m\times m}$ and $C\in\Z^{r\times r}$ are invertible and 
   \begin{equation}
      D = \diag{m\times r}{d_1,\ldots,d_t}
      = \begin{pmatrix}
         d_1 & 0   & \cdots & 0 & 0 \\
         0   & d_2 & \ddots & \vdots & \vdots \\
         \vdots & \ddots & \ddots & 0 & \vdots \\
         0   & \cdots & 0 & d_t & 0 \\
         \vdots &        &        & \vdots & \vdots \\
         0   & \cdots & \cdots & 0 & 0
         \end{pmatrix}
   \end{equation}
   where $d_i$ divides $d_{i+1}$.
   Denoting $\{\bl_{d_1},\ldots,\bl_{d_t},\bl_1, \ldots,\bl_{f'}\}$ the columns of $L^{-1}$
   such that $m=t+f'$, they form a basis of $\Z^m$. Because $GC=L^{-1}D$ and $\im{GC}=\im{G}$,
   we have 
   \begin{equation}
      \begin{aligned}
         \Z^m/\im{G} &= \Z_{d_1}\bl_{d_1}\oplus \ldots \oplus \Z_{d_t}\bl_{d_t} \oplus \Z\bl_1 \oplus \ldots\oplus\Z\bl_{f'} \\
         &\cong \Z_{d_1} \times \ldots \times \Z_{d_t}\times \Z^{f'}.
      \end{aligned}
   \end{equation} 
   Now restricting to $\ker{H}$; note that  
   the first $t$ columns $\{\bl_{d_1},\ldots,\bl_{d_t}\}$ of $L^{-1}$ belong to $\ker{H}$.
   Indeed, denoting $\{\be_i\}_{1\leq i\leq m}$ the canonical basis of $\R^m$,
   \begin{equation}
      H\bl_{d_i}=HL^{-1}D\frac{\be_i}{d_i}=\frac{1}{d_i}HGC\be_i=0.
   \end{equation}
   Hence, $\ker{H}\!/\!\im{G}$ has the same torsion 
   as $\Z^m/\im{G}$. It remains to complete $\{\bl_{d_1},\ldots,\bl_{d_t}\}$ to a basis
   of $\ker{H}$ with elements generated by
   the last columns $\{\bl_1, \ldots,\bl_{f'}\}$ of $L^{-1}$ to get the free part of 
   $H_1(C_\bullet)$. 
   Up to a change of basis, we assume without loss of generality that the first columns 
   $\{\bl_1, \ldots,\bl_{f}\}$ with $f\leq f'$ complete the basis, then
   \begin{equation}
      \begin{aligned}
         H_1(C_\bullet) &= \Z_{d_1}\bl_{d_1}\oplus \ldots \oplus \Z_{d_t}\bl_{d_t} \oplus \Z\bl_1 \oplus \ldots\oplus\Z\bl_{f} \\
         &\cong \Z_{d_1} \times \ldots \times \Z_{d_t}\times \Z^f.
      \end{aligned}
   \end{equation}
   So, denoting $\tilde{L}\in\Z^{(t+f)\times m}$ the matrix containing the 
   $t+f$ first rows of $L$, its right inverse is 
   $\tilde{L}^{-1}=\left[\bl_{d_1}| \ldots | \bl_{d_t} |\bl_1 | \ldots| \bl_f\right]$, 
   meaning the matrix containing the $t+f$ first columns of $L^{-1}$ and
   \begin{equation}\label{eq:eta}
      \begin{aligned}
         \eta : \Z_{d_1} \times \ldots \times \Z_{d_t}\times \Z^f & \rightarrow H_1(C_\bullet) \\
         \bgamma=(\gamma_{d_1},\ldots,\gamma_{d_t}, \gamma_1, \ldots,\gamma_f)^\top & \mapsto 
         [\tilde{L}^{-1}\bgamma]=\left[\gamma_{d_1}\bl_{d_1}+\ldots+\gamma_{d_t}\bl_{d_t} + \gamma_1\bl_1+ \ldots+\gamma_f\bl_f\right]
      \end{aligned}
   \end{equation}
   is an isomorphism.
\end{proof}

\section{Coherent states description of the code}\label{sec:Fourier_coherent}

Having introduced an orthonormal basis of the codespace and used 
it to obtain a homological description of the code, we now turn 
to a different description based on coherent states.

For instance, the cat code---that corresponds to $\ker{\hat{a}^2-\alpha^2}$---can 
be described by the span of either the coherent states belonging 
to the codespace or the \emph{even} and \emph{odd} cats:
\begin{equation}
   \{\ket{\alpha},\ket{-\alpha}\} \qquad \text{or} \qquad
   \left\{\sum_{n\in\N}\frac{\alpha^{2n}}{\sqrt{(2n)!}}\ket{2n},
   \sum_{n\in\N}\frac{\alpha^{2n+1}}{\sqrt{(2n+1)!}}\ket{2n+1}\right\}.
\end{equation}
So far, the Hilbert basis constructed in \cref{sec:code_construction} plays, 
for general Tiger codes, the role of the even/odd basis of the cat code. 
In the cat-code case, this basis is related by a Fourier transform 
to the coherent-state basis
\begin{equation}
   \left\{
   \frac{e^{|\alpha|^2/2}}{2}(\ket{\alpha}+\ket{-\alpha}),
   \frac{e^{|\alpha|^2/2}}{2}(\ket{\alpha}-\ket{-\alpha})
   \right\}.
\end{equation}

This example suggests that the codespace of certain Tiger codes may 
be described as the closure of the span of the coherent states 
it contains. However, as pointed out in the introduction, 
this property fails for general polynomial constraints. Indeed, 
for polynomials $P_1,\ldots,P_r\in\C[X_1,\ldots,X_m]$, the inclusion
\begin{equation}
\overline{\Span{\ket{\balpha}:P_i(\balpha)=0,\forall i\in\llbracket1,r\rrbracket}}
\subsetneq\bigcap_{i=1}^r\ker{P_i(\a)}
\end{equation}
is strict in general. For instance, for $P=X^2$, one has
\begin{equation}
   \Span{\ket{0}}\subsetneq\ker{\hat{a}^2}.
\end{equation}
By contrast, when the polynomials arise from the $\a$-type constraints 
of a Tiger code, this inclusion becomes an equality. Thus, the 
codespace can be described entirely in terms of the coherent states 
it contains.

To prove this non-trivial property of Tiger codes, we introduce 
a Fourier transform on the codespace, 
inherited directly from the Fourier transform on the homology 
group of the corresponding Tiger code.

\subsection{Pontryagin duality and Fourier transform}\label{sec:Pontryagin}

In this section, our focus is to define a Fourier transform between the homology and cohomology groups of a given
chain complex. We start by introducing the Pontryagin dual which allows for defining a Fourier transform between
a locally compact abelian group and its dual \cite{folland2015course}.

\paragraph{Pontryagin duality} For any locally compact abelian (LCA) group $\cG$, the Pontryagin dual $\hat{\cG}$
corresponds to the LCA group of \emph{continuous} homomorphisms from 
$\cG$ to $\T:=\{z \in \C : |z|=1\}$
\begin{equation}
    \hat{\cG} := \rm{Hom}(\cG,\T).
\end{equation}
In this work, we will mainly work with the dual of some known groups, namely
$\hat{\Z}\cong \T$, $\hat{\T}\cong\Z$ and $\hat{\Z}_d\cong\frac{2\pi}{d}\Z_d$ (Theorem 4.6 \cite{folland2015course}).
For two LCA groups $\cG_1$ and $\cG_2$, we can consider the 
product topology to form a new LCA group $\cG_1\times\cG_2$.
In particular, we have:

\begin{proposition}\label{prop:product_hom}
   (Proposition 4.7 \cite{folland2015course}) $\widehat{\cG_1\times\cG_2}\cong \widehat{\cG_1}\times \widehat{\cG_2}$.
\end{proposition}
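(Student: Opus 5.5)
The plan is to build an explicit map and check it is a topological group isomorphism. Define
\begin{equation}
   \Phi:\widehat{\cG_1}\times\widehat{\cG_2}\to\widehat{\cG_1\times\cG_2},\qquad
   \Phi(\chi_1,\chi_2)(x_1,x_2):=\chi_1(x_1)\chi_2(x_2).
\end{equation}
First, $\Phi(\chi_1,\chi_2)$ is a character of $\cG_1\times\cG_2$. It is a homomorphism because $\T$ is abelian, and it is continuous because it is the product of the continuous maps $\chi_i\circ p_i$, where $p_i$ denotes the $i$-th projection. Second, $\Phi$ is a homomorphism, since pointwise multiplication factorises.

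Next we show $\Phi$ is bijective by writing down its inverse. Set
\begin{equation}
   \Psi(\chi):=(\chi\circ j_1,\ \chi\circ j_2),
\end{equation}
where $j_1:x\mapsto(x,0)$ and $j_2:y\mapsto(0,y)$ are the continuous inclusions. Since
\begin{equation}
   \chi(x_1,x_2)=\chi(x_1,0)\,\chi(0,x_2),
\end{equation}
we get $\Phi\circ\Psi=\mathrm{id}$. Since $\chi_2(0)=1$, we also get $\Psi\circ\Phi=\mathrm{id}$.

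The remaining work, and the main obstacle, is topological. Both sides carry the compact-open topology, which on dual groups is the topology of uniform convergence on compact sets. We prove $\Phi$ and $\Psi$ are continuous as follows.

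\emph{Continuity of $\Psi$.} Let $K_1\subset\cG_1$ be compact. Then $j_1(K_1)$ is compact in $\cG_1\times\cG_2$. If $\chi$ is uniformly close to $\chi'$ on $j_1(K_1)$, then $\chi\circ j_1$ is uniformly close to $\chi'\circ j_1$ on $K_1$. The same argument handles the second component.

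\emph{Continuity of $\Phi$.} Let $K\subset\cG_1\times\cG_2$ be compact. It is contained in $K_1\times K_2$, where $K_i:=p_i(K)$ is compact. Suppose $|\chi_i-\chi_i'|<\varepsilon$ on $K_i$ for $i=1,2$. Using $|ab-a'b'|\le|a-a'|+|b-b'|$ for $a,b,a',b'\in\T$, we obtain
\begin{equation}
   |\Phi(\chi_1,\chi_2)-\Phi(\chi_1',\chi_2')|<2\varepsilon \quad\text{on } K.
\end{equation}
Because these are group topologies, continuity at the identity suffices in each case.

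Hence $\Phi$ is a homeomorphic group isomorphism. For the concrete groups used later ($\Z$, $\T$, $\Z_d$), it suffices to note that compact subsets of discrete groups are finite, so the topological step becomes immediate there.
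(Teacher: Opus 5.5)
Your proof is correct. It is the standard argument of the cited reference (Folland), which the paper invokes without reproducing: the pairing $(\chi_1,\chi_2)\mapsto\chi_1(x_1)\chi_2(x_2)$, inversion by restriction to the coordinate subgroups, and the fact that every compact subset of $\cG_1\times\cG_2$ lies in the product of its compact projections. One small slip in your closing aside is that $\T$ is not discrete; it is compact, so one may take $K_1=\T$ there instead. This is harmless, since the paper only applies the proposition to products of the discrete groups $\Z$ and $\Z_d$, and your general argument covers every case anyway.
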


So, the Pontryagin dual of the group $\Z_{d_1}\times\ldots\times\Z_{d_t}\times\Z^f$ 
is isomorphic to \break
$\frac{2\pi}{d_1}\Z_{d_1}\times\ldots\times\frac{2\pi}{d_t}\Z_{d_t}\times [0,2\pi)^f$ 
via the LCA group isomorphism
\begin{equation}\label{eq:upsilon}
   \begin{aligned}
      \upsilon : \frac{2\pi}{d_1}\Z_{d_1}\times\ldots\times\frac{2\pi}{d_t}\Z_{d_t}\times[0,2\pi)^f &\rightarrow \rm{Hom}(\Z_{d_1}\times\ldots\times\Z_{d_t}\times\Z^f,\T) \\
      \bphi & \mapsto (\bx\mapsto e^{i\bphi\cdot\bx}).
   \end{aligned}
\end{equation}
Finally, given a map $A:\cG_1 \rightarrow \cG_2$ between two 
LCA groups, $A^*:=\rm{Hom}(A,\T)$ is defined as 
\begin{equation}
   \begin{aligned}
      A^* : \widehat{\cG_2} & \rightarrow \widehat{\cG_1} \\
      \varphi & \mapsto \varphi\circ A.
   \end{aligned}
\end{equation}
In other words, $\rm{Hom}(-,\T)$ is a contravariant functor.

\paragraph{Cohomology} Given the chain complex $C_\bullet:\Z^r\xrightarrow{G}\Z^m\xrightarrow{H}\Z^s$, 
applying the functor $\rm{Hom}(-,\T)$ to each of the maps and groups yields its associated cochain complex
\begin{equation}
   \widehat{\Z^r}\xleftarrow{G^*}\widehat{\Z^m}\xleftarrow{H^*}\widehat{\Z^s}.
\end{equation}
We define the cohomology to be $H^1(C_\bullet)=\ker{G^*}/\im{H^*}$. It turns out that the cohomology 
is isomorphic to the Pontryagin dual of the homology group.

\begin{theorem}\label{theorem:UCT}
   The map $h:H^1(C_\bullet)\rightarrow \widehat{H_1(C_\bullet)}, [\varphi]\mapsto ([x]\mapsto \varphi(x))$
   is a well defined group isomorphism.
\end{theorem}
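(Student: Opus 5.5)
We need $h$ to be well defined and a homomorphism, then injective, then surjective. The one substantive input is that $\T$ is an injective (divisible) abelian group; everything else is definition-chasing. Since all groups $\Z^r,\Z^m,\Z^s$ are discrete, every homomorphism into $\T$ is automatically continuous. So $\widehat{\Z^k}=\mathrm{Hom}(\Z^k,\T)$ as abstract groups, and likewise for the quotient $H_1(C_\bullet)=\ker{H}/\im{G}$, which is a discrete finitely generated group.

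\textbf{Well-definedness and homomorphism.} Let $\varphi\in\ker{G^*}$, so $\varphi\circ G=1$, i.e.\ $\varphi$ is trivial on $\im{G}$. Then the restriction $\varphi\restriction_{\ker{H}}$ factors through $\ker{H}/\im{G}$, so $[\bx]\mapsto\varphi(\bx)$ is a well-defined character of $H_1(C_\bullet)$.

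If $\varphi$ is replaced by $\varphi\cdot H^*(\chi)=\varphi\cdot(\chi\circ H)$, then on $\bx\in\ker{H}$ we have $\chi(H\bx)=\chi(0)=1$. Hence the image does not depend on the representative of the class $[\varphi]\in H^1(C_\bullet)$. The homomorphism property is immediate, since the group laws on both sides are pointwise multiplication.

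\textbf{Injectivity.} Suppose $h([\varphi])$ is trivial, i.e.\ $\varphi\restriction_{\ker{H}}=1$. Then $\varphi$ descends to a character $\bar\varphi$ of $\Z^m/\ker{H}\cong\im{H}\subset\Z^s$, via $\bar\varphi(H\bx)=\varphi(\bx)$. I will extend $\bar\varphi$ from the subgroup $\im{H}$ to a character $\chi$ of all of $\Z^s$; this is possible because $\T$ is divisible, hence injective as a $\Z$-module. Then $\varphi=\chi\circ H=H^*(\chi)\in\im{H^*}$, so $[\varphi]=0$.

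If one prefers to avoid the abstract extension result, an explicit route is the Smith normal form of $H$, as in the proof of \cref{theorem:H1isom}. This reduces the problem to extending a character from $d\Z$ to $\Z$, which amounts to choosing a $d$-th root in $\T$.

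\textbf{Surjectivity.} This is the step I expect to need the most care, since it is again an extension argument. Take $\psi\in\widehat{H_1(C_\bullet)}$ and pull it back to a character $\psi'$ of $\ker{H}$ that is trivial on $\im{G}$. By injectivity of $\T$, extend $\psi'$ to $\varphi\in\mathrm{Hom}(\Z^m,\T)$. Since $\varphi$ agrees with $\psi'$ on $\ker{H}\supseteq\im{G}$, it is trivial on $\im{G}$, i.e.\ $\varphi\in\ker{G^*}$, and by construction $h([\varphi])=\psi$.

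For a concrete proof of the extension, use that $\Z^m/\ker{H}\cong\im{H}$ is torsion-free and hence free. Therefore $\ker{H}$ is a direct summand of $\Z^m$, and $\psi'$ extends by declaring $\varphi$ trivial on a complement. This makes the surjectivity argument elementary, and only the injectivity step genuinely uses divisibility of $\T$, which handles the possibly non-saturated subgroup $\im{H}\subset\Z^s$.
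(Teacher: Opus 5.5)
Your proof is correct, and it reaches the result by a more elementary route than the paper. The paper disposes of the statement in one line by citing the Universal Coefficient Theorem. For this complex of free groups, that theorem gives the exact sequence $0\to\operatorname{Ext}(\Z^s/\im{H},\T)\to H^1(C_\bullet)\xrightarrow{h}\widehat{H_1(C_\bullet)}\to 0$, and the paper kills the $\operatorname{Ext}$ term because $\T$ is divisible.

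Your argument is this theorem unpacked in the present case:
\begin{itemize}
\item Your surjectivity step uses the splitting $\Z^m\cong\ker{H}\oplus\im{H}$, available because $\im{H}\subset\Z^s$ is free. This is how the surjective half of the Universal Coefficient Theorem is proved, and it needs no hypothesis on the coefficient group.
\item Your injectivity step extends a character from $\im{H}$ to $\Z^s$. This is exactly the vanishing of $\operatorname{Ext}(\Z^s/\im{H},\T)$, and it is the only place divisibility of $\T$ enters, through the torsion of $\Z^s/\im{H}$.
\end{itemize}

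The citation is shorter and places the result in standard homological algebra. Your version is self-contained and shows which hypothesis is used at which step. Via the Smith normal form it is also constructive, which fits the explicit isomorphisms ($\tau$, $\vartheta$) the paper builds right afterwards.
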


\begin{proof}
   This is a consequence of both the \emph{Universal Coefficient Theorem} \cite{Hatcher:478079} and the fact 
   that $\T$ is a divisible group.
\end{proof}

Although the definition of the cohomology group is abstract, 
it admits the following more operational description. 
Consider the quotient group
\begin{equation}
   \kerpi{G^\top}/\impi{H^\top}:=
   \{\bphi\in[0,2\pi)^m : G^\top\bphi=0 \text{ mod } 2\pi\}
   /\{H^\top\bphi \text{ mod } 2\pi : \bphi\in[0,2\pi)^s\}.
\end{equation}
The proposition below shows that this group is isomorphic 
to the cohomology group and that a generating set can be 
read off directly from the Smith normal form of $G$.
Let 
\begin{equation}\label{eq:tau}
   \begin{aligned}
      \tau : \kerpi{G^\top}/\impi{H^\top} & \to H^1(C_\bullet) \\
      [\bphi] & \mapsto [\bx\mapsto e^{i\bphi\cdot\bx}].
   \end{aligned}
\end{equation}
It is a well-defined morphism: if $\bphi\in\kerpi{G^\top}$, then 
$G^*(\bx\mapsto e^{i\bphi\cdot\bx}) = (\bx\mapsto e^{i\bphi^\top G\bx})=1$
which proves that $\bx\mapsto e^{i\bphi\cdot\bx}\in\ker{G^*}$. 
And if $\bphi=H^\top\bpsi\in\impi{H^\top}$, 
then \break 
$(\bx\mapsto e^{i(H^\top\bpsi)\cdot\bx})=H^*(\bx\mapsto e^{i\bpsi\cdot\bx})\in \im{H^*}$.

\begin{proposition}
   All the maps in the sequence 
   \begin{equation}
      \kerpi{G^\top}/\impi{H^\top} \xrightarrow{\tau} H^1(C_\bullet) 
      \xrightarrow{h} \widehat{H_1(C_\bullet)} 
      \xrightarrow{\upsilon^{-1}\circ\eta^*} 
      \frac{2\pi}{d_1}\Z_{d_1}\times\ldots\times\frac{2\pi}{d_t}\Z_{d_t}\times [0,2\pi)^f
   \end{equation}
   are group isomorphisms.
\end{proposition}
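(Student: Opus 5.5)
We treat the three maps separately. Two of them are isomorphisms for formal reasons; all the real work is in $\tau$.

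\textbf{The maps $h$ and $\upsilon^{-1}\circ\eta^*$.} The map $h$ is an isomorphism by \cref{theorem:UCT}. Next, $\eta$ from \cref{eq:eta} is a group isomorphism, so by contravariant functoriality of $\rm{Hom}(-,\T)$ the map $\eta^*$ is an isomorphism, with inverse $(\eta^{-1})^*$. Finally, $\upsilon$ from \cref{eq:upsilon} is an isomorphism by \cref{prop:product_hom} together with the duals $\hat\Z\cong\T$ and $\hat\Z_d\cong\frac{2\pi}{d}\Z_d$. Hence $\upsilon^{-1}\circ\eta^*$ is an isomorphism.

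\textbf{The map $\tau$.} We already know $\tau$ is a well-defined morphism, so it remains to show it is bijective. We use the identification $\widehat{\Z^m}\cong[0,2\pi)^m$, $\bphi\mapsto(\bx\mapsto e^{i\bphi\cdot\bx})$, which is the case $f=m$, $t=0$ of $\upsilon$. Under this identification:
\begin{itemize}
\item every character of $\Z^m$ has the form $\bx\mapsto e^{i\bphi\cdot\bx}$;
\item the map $G^*$ becomes $\bphi\mapsto G^\top\bphi \bmod 2\pi$, because $e^{i\bphi\cdot G\by}=e^{i(G^\top\bphi)\cdot\by}$;
\item similarly, $H^*$ becomes $\bpsi\mapsto H^\top\bpsi \bmod 2\pi$.
\end{itemize}
Therefore $\ker{G^*}$ corresponds exactly to $\kerpi{G^\top}$, and $\im{H^*}$ corresponds exactly to $\impi{H^\top}$. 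The map $\tau$ is then the map induced on quotients by this isomorphism of pairs $(\kerpi{G^\top},\impi{H^\top})\cong(\ker{G^*},\im{H^*})$.

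From this, surjectivity is immediate: any class $[\varphi]\in\ker{G^*}/\im{H^*}$ has a representative $\varphi=e^{i\bphi\cdot(\cdot)}$ with $\bphi\in\kerpi{G^\top}$.

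For injectivity, suppose $\tau([\bphi])=0$. Then $e^{i\bphi\cdot(\cdot)}=\psi\circ H$ for some character $\psi$ of $\Z^s$. Writing $\psi=e^{i\bpsi\cdot(\cdot)}$, we get $e^{i\bphi\cdot\bx}=e^{i(H^\top\bpsi)\cdot\bx}$ for all $\bx$. Testing on the canonical basis gives $\bphi=H^\top\bpsi \bmod 2\pi$, so $[\bphi]=0$.

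\textbf{Main obstacle.} The main subtlety is bookkeeping rather than any deep step. We must check that the identification $\widehat{\Z^m}\cong[0,2\pi)^m$ is compatible with reducing the group laws modulo $2\pi$, so that the quotient $\kerpi{G^\top}/\impi{H^\top}$ is a genuine group quotient and $\tau$ is the induced map. We must also keep the transposes consistent with the paper's convention $HG=0$ rather than $HG^\top=0$.
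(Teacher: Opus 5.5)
Your proposal is correct and follows the paper's own argument. You get $h$ from \cref{theorem:UCT}, and $\upsilon^{-1}\circ\eta^*$ from the functoriality of $\mathrm{Hom}(-,\T)$ applied to the isomorphisms $\eta$ and $\upsilon$. You get bijectivity of $\tau$ by writing every character of $\Z^m$ (resp. $\Z^s$) as $\bx\mapsto e^{i\bphi\cdot\bx}$, so that $G^*$ and $H^*$ become $G^\top$ and $H^\top$ modulo $2\pi$. Your remark that this identification carries $\kerpi{G^\top}$ onto $\ker{G^*}$ and $\impi{H^\top}$ onto $\im{H^*}$ already makes the separate surjectivity and injectivity checks redundant, though those checks are exactly the ones in the paper.
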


\begin{proof}
   The last map $\upsilon^{-1}\circ\eta^*$ is an isomorphism as 
   $\eta$ in \cref{eq:eta} 
   and $\upsilon$ in \cref{eq:upsilon} are both isomorphisms and that 
   the Pontryagin dual of an isomorphism is an isomorphism.
   \cref{theorem:UCT} guarantees that $h$ is an isomorphism as well. 
   So, it remains to prove that $\tau$ is an isomorphism. 

   It is surjective: for any $[\varphi]\in H^1(C_\bullet)$ as 
   $\varphi \in \widehat{\Z^m}$, there exists 
   $\bphi\in[0,2\pi)^m$ such that $\forall \bx\in\Z^m,\varphi(\bx)=e^{i\bphi\cdot\bx}$. 
   Because $G^*\varphi=0$, for any $\bx\in\Z^m$,
   $e^{i\bphi\cdot G\bx}=1$ implying that \break $\bphi\in\kerpi{G^\top}$. 
   Hence, $\tau([\bphi])=[\varphi]$.

   It is injective as well as if $(\bx\mapsto e^{i\bphi\cdot\bx})=H^*\varphi'$ 
   for some $\varphi'\in\widehat{\Z^s}$,
   there exists $\bphi'\in[0,2\pi)^s$ such that 
   $\forall \by\in\Z^s,\varphi'(\by)=e^{i\bphi'\cdot\by}$ and 
   $e^{i\bphi\cdot\bx}=e^{i\bphi'\cdot H\bx} \Rightarrow e^{i(\bphi-H^\top\bphi')\cdot\bx}=1$
   for all $\bx\in\Z^m$, meaning that $\bphi=H^\top\bphi'\in\impi{H^\top}$.
\end{proof}

Thus, we can construct a generating set of $\kerpi{G^\top}/\impi{H^\top}$ by 
tracking \break 
$[\bphi]\in \kerpi{G^\top}/\impi{H^\top}$ through the chain of isomorphisms:
\begin{equation}\label{eq:sequence_isom}
   [\bphi]\xmapsto{\quad\tau\quad}[\bx\mapsto e^{i\bphi\cdot\bx}]
   \xmapsto{\quad h\quad}([\bx]\mapsto e^{i\bphi\cdot\bx})
   \xmapsto{\quad \eta^*\quad} (\bz\mapsto e^{i\bphi\cdot\tilde{L}^{-1}\bz})
   \xmapsto{\quad\upsilon^{-1}\quad} (\tilde{L}^{-1})^\top\bphi.
\end{equation}
So, $\upsilon^{-1}\circ\eta^*\circ h\circ\tau$ is an isomorphism with inverse 
$\bphi\mapsto[\tilde{L}^\top\bphi]$, where $\tilde{L}$ is defined in the proof 
of \cref{theorem:H1isom}. 
Hence, denoting $\{\bw_{d_1},\ldots,\bw_{d_t},\bw_1\ldots,\bw_f\}$
the columns of $\tilde{L}^\top$, we have
\begin{equation}\label{eq:basis_ker_2pi}
   \kerpi{G^\top}/\impi{H^\top}= \frac{2\pi}{d_1}\Z_{d_1}\bw_{d_1}\oplus\ldots
   \oplus\frac{2\pi}{d_t}\Z_{d_t}\bw_{d_t}\oplus [0,2\pi)\bw_1 \oplus\ldots\oplus[0,2\pi)\bw_f.
\end{equation}
In the following, we call $\vartheta:=\upsilon^{-1}\circ\eta^*\circ h$
\begin{equation}\label{eq:vartheta}
   \vartheta:
   H^1(C_\bullet)\to\frac{2\pi}{d_1}\Z_{d_1}\times\ldots\times\frac{2\pi}{d_t}\Z_{d_t}\times [0,2\pi)^f.
\end{equation}

\paragraph{Fourier transform}
We define the Fourier transform between $L^2$ functions over $H_1(C_\bullet)$ and $H^1(C_\bullet)$ by carrying the Fourier transform 
between $\cA:=\frac{2\pi}{d_1}\Z_{d_1}\times\ldots\times\frac{2\pi}{d_t}\Z_{d_t}\times [0,2\pi)^f$ and
$\cZ:=\Z_{d_1} \times \ldots \times \Z_{d_t}\times \Z^f$ through the isomorphisms previously constructed in the 
following manner:
\begin{equation}
   \begin{tikzcd}
      {L^2(\cZ)} & {L^2(\cA)}\\
      {L^2(H_1(C_\bullet))} & {L^2(H^1(C_\bullet))}
      \arrow["\cF_{2\pi}", from=1-1, to=1-2]
      \arrow["\eta"', from=1-1, to=2-1]
      \arrow["\vartheta"', from=2-2, to=1-2]
   \end{tikzcd}
\end{equation}
To do so, we first need to define the measure space for each of these groups. The strategy is to 
define the usual measure space over $\cZ$ and $\cA$, and turn them into measure spaces for 
$H_1(C_\bullet)$ and $H^1(C_\bullet)$ via the isomorphisms $\eta$ (\cref{eq:eta}) and 
$\vartheta$ (\cref{eq:vartheta}), respectively.

On one hand, consider the usual topologies on $\Z$, $\Z_d$ and $[0,2\pi)$,
and let $\cB(\cZ)$ and $\cB(\cA)$ be the corresponding Borel
$\sigma$-algebras with the product topology on
\begin{equation}
   \cZ:=\Z_{d_1}\times\cdots\times\Z_{d_t}\times\Z^f,
   \qquad
   \cA:=\frac{2\pi}{d_1}\Z_{d_1}\times\cdots\times\frac{2\pi}{d_t}\Z_{d_t}\times[0,2\pi)^f.
\end{equation}
Let $\delta$ be the counting measure on $\cB(\Z)$, $\delta_d$ be the counting measure on $\cB(\Z_d)$ 
and $\lambda_{[0,2\pi)}$ be the Lebesgue measure on $\cB([0,2\pi))$. We equip $\cB(\cZ)$ and $\cB(\cA)$
with the product measures 
\begin{equation}
   \mu_{\cZ}:=\delta_{d_1}\otimes\cdots\otimes\delta_{d_t}\otimes\delta^{\otimes f},
   \quad \text{and} \quad 
   \mu_{\cA}
   :=
   \left(\frac1{d_1}\delta_{d_1}\right)\otimes\cdots\otimes
   \left(\frac1{d_t}\delta_{d_t}\right)\otimes
   \left(\frac{1}{2\pi}\lambda_{[0,2\pi)}\right)^{\otimes f}.
\end{equation}
With these measure spaces, the Fourier transform between 
$(\cZ,\cB(\cZ),\mu_{\cZ})$ and $(\cA,\cB(\cA),\mu_{\cA})$ is
\begin{equation} \label{eq:TF_ZT}
   \begin{aligned}
      \cF_{2\pi} : L^2\left(\cZ\right) & \to L^2\left(\cA\right) \\
      f&\mapsto \left(\bphi \mapsto \sum_{\bz\in \cZ} f(\bz)e^{-i\bphi\cdot\bz}\right)
   \end{aligned}
\end{equation}
with inverse 
\begin{equation}
   \begin{aligned}
      \cF_{2\pi}^{-1} : L^2\left(\cA\right) & \to L^2\left(\cZ\right) \\
      g&\mapsto \left(\bz \mapsto \int_{\cA} g(\bphi)e^{i\bphi\cdot\bz}d\mu_{\cA}(\bphi)\right).
   \end{aligned}
\end{equation}
These are isometries for the $L^2$ norm \cite{folland2015course}. 

On the other hand, define the two measure spaces:
\begin{itemize}
   \item $(H_1(C_\bullet),\eta(\cB(\cZ)),\mu_{H_1})$ where $\eta(\cB(\cZ))$ is the inverse image $\sigma$-algebra and $\mu_{H_1}=\mu_{\cZ}\circ\eta^{-1}$
   is the image measure, 
   \item $(H^1(C_\bullet),\vartheta^{-1}(\cB(\cA)),\mu_{H^1})$ where $\vartheta^{-1}(\cB(\cA))$ 
   is the inverse image $\sigma$-algebra and $\mu_{H^1}=\mu_{\cA}\circ\vartheta$ is the image measure, 
\end{itemize}
Note that this definition of the $\sigma$-algebras immediately makes $\eta$
and $\vartheta$ measurable. Defining 
\begin{equation}
   \begin{aligned}
      V_1 : L^2\left(H_1(C_\bullet)\right) & \to L^2\left(\cZ\right) \\
      f&\mapsto f \circ \eta^{-1}
   \end{aligned} \quad ; \quad
   \begin{aligned}
      V_2 : L^2\left(\cA\right) & \to L^2\left(H^1(C_\bullet)\right) \\
      f&\mapsto f\circ \vartheta
   \end{aligned}
\end{equation}
these two maps are isometries by the change of variable formula.
Composing these isometries with the Fourier transform in \cref{eq:TF_ZT}, 
we define a new Fourier transform.

\begin{definition}
   The Fourier transform $\cF$ between $L^2\left(H_1(C_\bullet)\right)$ and 
   $L^2\left(H^1(C_\bullet)\right)$ is defined as $\cF = V_2\circ\cF_{2\pi}\circ V_1$,
   namely,
   \begin{equation} \label{eq:TF_H}
      \begin{aligned}
         \cF : L^2\left(H_1(C_\bullet)\right) & \to L^2\left(H^1(C_\bullet)\right) \\
         f&\mapsto \left([\varphi] \mapsto \sum_{[\bn]\in H_1(C_\bullet)} f([\bn])
         \overline{\varphi(\bn)}\right).
      \end{aligned}
   \end{equation}
\end{definition}

Note that this definition does not depend on the representative
as for any $\varphi\in \ker{H^*}$ and $\bn\in\im{G}$, $\varphi(\bn)=1$.
Moreover, it inherits all the properties of the Fourier transform $\cF_{2\pi}$.
In particular, $\cF$ is an isometry for the $L^2$ norm and maps convolutions 
to multiplications.
The inverse $\cF^{-1}=V_1^{-1}\circ\cF_{2\pi}\circ V_2^{-1}$ can be written 
explicitly
\begin{equation} \label{eq:TF_H_inverse}
   \begin{aligned}
      \cF^{-1} : L^2\left(H^1(C_\bullet)\right) & \to L^2\left(H_1(C_\bullet)\right) \\
      g&\mapsto \left([\bn] \mapsto 
      \int_{H^1(C_\bullet)} g([\varphi])\varphi(\bn)d\mu_{H^1}([\varphi])\right)
   \end{aligned}
\end{equation}
based on the expression of $\cF_{2\pi}^{-1}$.

\subsection{Duality in the codespace}

The connection between the coherent states of the codespace and the Fourier transform 
can be cast as a commutative diagram
\begin{equation}\label{eq:diagram}
   \begin{tikzcd}
      {L^2\left(H_1(C_\bullet)\right)} & {L^2\left(H^1(C_\bullet)\right)} \\
      {\cT(G,H, \balpha, \bDelta)} & {\cT(G,H, \balpha, \bDelta)} 
      \arrow["\cF", from=1-1, to=1-2]
      \arrow["\pi"', from=1-1, to=2-1]
      \arrow["\rho", from=1-2, to=2-2]
      \arrow["\cN^{-1}"', from=2-1, to=2-2]
   \end{tikzcd}
\end{equation}
where
\begin{equation}\label{eq:rho}
   \left\{\begin{aligned}
      &\pi:f \mapsto \sum_{[\bn]\in K_H^{\bDelta}/\!\sim_G}
      f\circ\iota([\bn])\ket{\balpha_{[\bn]}}\\
      &\rho:g \mapsto  \int_{H^1(C_\bullet)} g([\varphi])\varphi(\n-\bn_0) 
      \Pi_{H} \ket{\balpha} d\mu_{H^1}([\varphi])
   \end{aligned}\right.
\end{equation}
and $\cN$ is a unbounded normalisation operator, diagonal in the 
$\{\ket{\balpha_{[\bn]}}:[\bn]\in K_H^{\bDelta}/\!\!\sim_G\}$
basis, defined as 
\begin{equation}
   \cN \ket{\balpha_{[\bn]}}:=
   \frac{1}{A_{[\bn]}}\ket{\balpha_{[\bn]}}
   \quad \text{and} \quad
   \cD(\cN):=\Span{\ket{\balpha_{[\bn]}}:[\bn]\in K_H^{\bDelta}/\!\!\sim_G}
\end{equation}
where we recall that $A_{[\bn]}=\|\Pi_{[\bn]}\ket{\balpha}\|$
with $\Pi_{[\bn]}=\sum_{\bn\in[\bn]}\ket{\bn}\bra{\bn}$.
It is an invertible operator with bounded inverse
\begin{equation}
   \cN^{-1} \ket{\balpha_{[\bn]}}=
   A_{[\bn]}\ket{\balpha_{[\bn]}}.
\end{equation}
and 
$\|\cN^{-1}\|=\sup_{[\bn]\in K_H^{\bDelta}/\!\!\sim_G} A_{[\bn]}\leq1$

The integral in $\rho$ is defined as a Bochner integral.
Indeed, the function \break
$[\varphi]\mapsto g([\varphi])\varphi(\n-\bn_0)\Pi_{H}\ket{\balpha}$
is Bochner-integrable as
\begin{equation}
   \begin{aligned}
      \int_{H^1(C_\bullet)} \|g([\varphi])\varphi(\n-\bn_0) \Pi_{H} \ket{\balpha}\| d\mu_{H^1}([\varphi])
      &\leq \int_{H^1(C_\bullet)} |g([\varphi])|d\mu_{H^1}([\varphi])\\
      &\leq \|g\|_2 \mu_{H^1}(H^1(C_\bullet))^{1/2}
   \end{aligned}
\end{equation}
where the last inequality is a Cauchy-Schwarz inequality.
This proves that $\cD(\rho)=L^2(H^1(C_\bullet))$ and hence $\rho$ is a bounded operator.
Furthermore, the integral does not depend on the choice of 
representative $\varphi$ of $[\varphi]$ as for any $\bn\in\N^m$,
\begin{equation}
   \varphi(\n-\bn_0)\Pi_{H}\ket{\bn}=\left\{
   \begin{aligned}
      &\varphi(\bn-\bn_0)\ket{\bn} &\text{if } H\bn=\bDelta\\
      &0&\text{otherwise}
   \end{aligned}\right.
\end{equation} 
and $\bn-\bn_0\in\ker{H}$ if $H\bn=\bDelta$.

To prove the commutativity of the diagram in \cref{eq:diagram}, we use 
the following lemma, which follows from the fact that the sum of 
the roots of unity is zero.

\begin{lemma}\label{lemma:sumofroots}
   For all $[\bn]\in H_1(C_\bullet)$,
   \begin{equation}
      \cF^{-1}([\varphi]\mapsto \varphi(0))([\bn])
      =\int_{H^1(C_\bullet)} \varphi(\bn)d\mu_{H^1}([\varphi])
      =\indicator{\{[\bn]=[0]\}}.
   \end{equation}
\end{lemma}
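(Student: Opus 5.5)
The plan is to pull the integral back to the standard group $\cA$ via the isomorphism $\vartheta$ and then compute it coordinate by coordinate. The first equality is immediate from the explicit formula for $\cF^{-1}$ in \cref{eq:TF_H_inverse}, since $g([\varphi])=\varphi(0)=1$. For the second equality, I will use that $\mu_{H^1}=\mu_{\cA}\circ\vartheta$ is the image measure, so by the change of variables formula
\begin{equation}
   \int_{H^1(C_\bullet)} \varphi(\bn)\,d\mu_{H^1}([\varphi])=\int_{\cA} \big(\vartheta^{-1}(\bphi)\big)(\bn)\,d\mu_{\cA}(\bphi).
\end{equation}

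Next I make the integrand explicit. By construction of $\vartheta=\upsilon^{-1}\circ\eta^*\circ h$, the class $[\varphi]=\vartheta^{-1}(\bphi)$ acts on $[\bn]$ through $h$ as a character of $H_1(C_\bullet)$. Writing $[\bn]=\eta(\bz)$ with $\bz=(z_{d_1},\ldots,z_{d_t},z_1,\ldots,z_f)\in\cZ$, we have $\varphi(\bn)=(\eta^*h[\varphi])(\bz)=\upsilon(\bphi)(\bz)=e^{i\bphi\cdot\bz}$. This is well defined: for $\bphi$ with components in $\frac{2\pi}{d_j}\Z_{d_j}$, the factor $e^{i\phi_jz_j}$ depends only on $z_j$ mod $d_j$. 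Since $\mu_{\cA}$ is a product measure and the integrand factorises, Fubini gives
\begin{equation}
   \prod_{j=1}^t \frac{1}{d_j}\sum_{k=0}^{d_j-1}e^{2i\pi k z_{d_j}/d_j}\;\prod_{l=1}^f\frac{1}{2\pi}\int_0^{2\pi}e^{i\phi z_l}\,d\phi .
\end{equation}

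Each finite factor is a normalised sum of $d_j$-th roots of unity. It equals $1$ if $z_{d_j}=0$ in $\Z_{d_j}$ and $0$ otherwise. Each continuous factor equals $\indicator{\{z_l=0\}}$. The product is therefore $\indicator{\{\bz=0\}}$, and since $\eta$ is an isomorphism this is $\indicator{\{[\bn]=[0]\}}$.

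The only delicate step is the identification $\varphi(\bn)=e^{i\bphi\cdot\bz}$. It requires chasing the definitions of $h$, $\eta^*$ and $\upsilon$ carefully, and checking independence of representatives: $\varphi\in\ker{G^*}$ kills $\im{G}$, as already noted after the definition of $\cF$. Everything else is routine. Integrability is automatic because the integrand has modulus one and $\mu_{\cA}$ is a finite measure, which is what justifies both the change of variables and Fubini.
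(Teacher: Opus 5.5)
Your proposal is correct and follows essentially the same route as the paper: the paper also sets $\bgamma=\eta^{-1}([\bn])$, rewrites the integral as $\int_{\cA}e^{i\bphi\cdot\bgamma}\,d\mu_{\cA}(\bphi)$, and factorises it into normalised sums of roots of unity and normalised integrals over $[0,2\pi)$, each of which gives an indicator. The only difference is that you spell out the change of variables through $\vartheta$ and the identification $\varphi(\bn)=e^{i\bphi\cdot\bz}$, which the paper uses without comment.
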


\begin{proof}
   Denoting $\bgamma=\eta^{-1}([\bn])=(\gamma_{d_1},\ldots,\gamma_{d_t},\gamma_1,\ldots,\gamma_f)$, we have 
   \begin{equation}
      \begin{aligned}
         \int_{H^1(C_\bullet)} \varphi(\bn)d\mu_{H^1}([\varphi]) 
         &= \int_{\cA} e^{i\bphi\cdot\bgamma}d\mu_{\cA}(\bphi) \\
         &= \frac{1}{(2\pi)^fd_1\ldots d_t}\int_{[0,2\pi)^f}e^{i(\phi_1\gamma_1+\ldots\phi_f\gamma_f)}d\phi_1\ldots d\phi_f \\
         &\qquad\qquad\times\sum_{k_1=1}^{d_1} e^{\frac{2i\pi}{d_1}k_1\gamma_{d_1}} \ldots \sum_{k_t=1}^{d_t} e^{\frac{2i\pi}{d_t}k_t\gamma_{d_t}} \nonumber \\
         &= \indicator{\{\gamma_1=0\}}\ldots\indicator{\{\gamma_f=0\}}\indicator{\{\gamma_{d_1}=0\}} \ldots \indicator{\{\gamma_{d_t}=0\}} \nonumber \\
         &=\indicator{\{\bgamma=0\}}=\indicator{\{[\bn]=[0]\}}. 
      \end{aligned}
   \end{equation}
\end{proof}

\begin{theorem}\label{theorem:diagram}
   The diagram in \cref{eq:diagram} commutes.
\end{theorem}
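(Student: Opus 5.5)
The plan is to verify the identity $\rho\circ\cF=\cN^{-1}\circ\pi$ on $L^2(H_1(C_\bullet))$ by comparing Fock-basis coefficients of both sides. Since every vector in $\H$ is determined by its coefficients $\langle\bk|\cdot\rangle$, no density or extension argument is needed. Concretely, I would prove the more general formula
\begin{equation*}
   \rho(g)=\sum_{[\bn]\in K_H^{\bDelta}/\!\sim_G}(\cF^{-1}g)(\iota([\bn]))\,A_{[\bn]}\ket{\balpha_{[\bn]}}
   \qquad\text{for all } g\in L^2(H^1(C_\bullet)),
\end{equation*}
and then substitute $g=\cF f$. Because $\cF^{-1}\cF f=f$, the right-hand side becomes $\sum f\circ\iota([\bn])\,A_{[\bn]}\ket{\balpha_{[\bn]}}$. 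This equals $\cN^{-1}\pi(f)$, since $\cN^{-1}$ is bounded and diagonal with eigenvalues $A_{[\bn]}$, so it passes through the norm-convergent series defining $\pi(f)$.

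For the general formula, fix $\bk\in\N^m$ and apply the bounded functional $\bra{\bk}$ to the Bochner integral defining $\rho(g)$. Bounded linear maps commute with Bochner integrals, which gives
\begin{equation*}
   \langle\bk|\rho(g)\rangle=\int_{H^1(C_\bullet)} g([\varphi])\,\bra{\bk}\varphi(\n-\bn_0)\Pi_H\ket{\balpha}\,d\mu_{H^1}([\varphi]).
\end{equation*}
By the computation after \cref{eq:rho}, the integrand vanishes unless $H\bk=\bDelta$. In that case it equals $g([\varphi])\varphi(\bk-\bn_0)\,e^{-|\balpha|^2/2}\balpha^{\bk}/\sqrt{\bk!}$, so
\begin{equation*}
   \langle\bk|\rho(g)\rangle=\indicator{\{H\bk=\bDelta\}}\,e^{-|\balpha|^2/2}\frac{\balpha^{\bk}}{\sqrt{\bk!}}\int g([\varphi])\varphi(\bk-\bn_0)\,d\mu_{H^1}([\varphi]).
\end{equation*}
I then identify the remaining integral with $(\cF^{-1}g)([\bk-\bn_0])=(\cF^{-1}g)(\iota([\bk]))$ using \cref{eq:TF_H_inverse}. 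Finally, the proof of \cref{prop:basis} gives $\langle\bk|\balpha_{[\bn]}\rangle=\indicator{\{\bk\in[\bn]\}}e^{-|\balpha|^2/2}\balpha^{\bk}/(A_{[\bn]}\sqrt{\bk!})$. Multiplying by $A_{[\bn]}$ shows that the coefficients of the right-hand side of the general formula are exactly the ones just computed.

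The main obstacle is that identification of the integral with $\cF^{-1}g$. The inverse transform is a priori an $L^2$ isometry, and one has to check that its value is given pointwise by the integral formula. I expect this to hold because $\mu_{H^1}$ is a finite (in fact probability) measure, so $L^2\subset L^1$ and the integral converges absolutely for every $[\bk]$. It also defines a bounded map into $\ell^\infty$ that agrees with $\cF^{-1}$ on a dense set such as trigonometric polynomials, and hence everywhere.

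As a sanity check, and as an alternative route, I would also verify the identity on the delta functions $\indicator{\{[\bn]\}}$ directly. There $\cF$ produces $[\varphi]\mapsto\overline{\varphi(\bn)}$, and \cref{lemma:sumofroots} applied to $\bk-\bn_0-\bn$ gives the coefficient $\indicator{\{[\bk-\bn_0]=[\bn]\}}$. Extending from delta functions to all of $L^2(H_1(C_\bullet))$ then uses linearity and boundedness of $\rho\circ\cF$ and $\cN^{-1}\circ\pi$.
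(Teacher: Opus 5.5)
Your proposal is correct. Your fallback route is exactly the paper's proof: evaluate on the deltas $\indicator{\{[\bn]\}}$, where $\cF$ produces an explicit character, apply \cref{lemma:sumofroots}, then extend by linearity and boundedness of $\pi$, $\rho$, $\cF$, $\cN^{-1}$.

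Your main route is genuinely different. Instead of testing on a basis, you pair $\rho(g)$ with $\bra{\bk}$ for an arbitrary $g\in L^2(H^1(C_\bullet))$. This gives a closed Fock-coordinate formula that amounts to $\rho=\cN^{-1}\pi\cF^{-1}$ on the whole domain, and you then specialise to $g=\cF f$.

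What this buys:
\begin{itemize}
   \item It gives an explicit description of $\rho$ everywhere.
   \item It needs no extension-by-continuity step for $\rho\cF=\cN^{-1}\pi$, so it does not use the boundedness of $\rho$.
   \item It only needs two facts: bounded functionals pass through Bochner integrals, and the series on both sides converge in norm. The latter holds since $A_{[\bn]}\le 1$ and $\cF^{-1}g\in\ell^2$.
\end{itemize}

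The price is the pointwise identification of $\cF^{-1}g$ with the absolutely convergent integral. Your argument for it is sound:
\begin{itemize}
   \item $\mu_{H^1}$ is a probability measure, so $L^2\subset L^1$.
   \item Both maps are bounded from $L^2(H^1(C_\bullet))$ into $\ell^\infty(H_1(C_\bullet))$.
   \item They agree on $\cF$ of finitely supported functions, which are dense because $\cF$ is unitary.
\end{itemize}
Note, however, that checking agreement on that dense set is itself the orthogonality of characters, i.e.\ \cref{lemma:sumofroots}. So the key ingredient is relocated rather than avoided, unless you simply take the explicit formula \cref{eq:TF_H_inverse} as given, as the paper does.

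The paper's route is the more economical one: a single application of the lemma on the deltas, then continuity. Yours yields slightly more information about $\rho$.
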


\begin{proof}
   We prove that $\rho\cF=\cN^{-1}\pi$ on the canonical basis
   $\{\indicator{\{[\bl]\}}:[\bl]\in H_1(C_\bullet)\}$ of
   $L^2(H_1(C_\bullet))$.
   For $[\varphi]\in H^1(C_\bullet)$,
   \begin{equation}
      \cF(\indicator{\{[\bl]\}})([\varphi])
      =\sum_{[\bn]\in H_1(C_\bullet)} \indicator{\{[\bl]\}}([\bn])\,\overline{\varphi(\bn)}
      =\overline{\varphi(\bl)}
      =\varphi(-\bl).
   \end{equation}
   Applying $\rho$ gives
   \begin{equation}
      \begin{aligned}
         \rho\left([\varphi] \mapsto \varphi(-\bl)\right) 
         &= \int_{H^1(C_\bullet)}\varphi(-\bl)\varphi(\n-\bn_0)\Pi_{H} 
         \ket{\balpha}d\mu_{H^1}([\varphi]) \\
         &= e^{-|\balpha|^2/2}\sum_{\substack{
            \bn \in \N^m \\
            H\bn=\bDelta
         }} \frac{\balpha^{\bn}}{\sqrt{\bn!}} \int_{H^1(C_\bullet)}\varphi(\bn-\bn_0-\bl)d\mu_{H^1}([\varphi]) \ket{\bn} \\
         &= e^{-|\balpha|^2/2}\sum_{\substack{
            \bn \in \N^m \\
            H\bn=\bDelta
         }} \frac{\balpha^{\bn}}{\sqrt{\bn!}} 
         \indicator{\{\bn \in [\bl+\bn_0]\}} \ket{\bn}
      \end{aligned}
   \end{equation}
   where the last equality follows from \cref{lemma:sumofroots}.
   Note that $\indicator{\{\bn \in [\bl+\bn_0]\}}$ is always 
   zero if \break $[\bl+\bn_0]\cap\N^m=\varnothing$, which is equivalent 
   to $[\bl]\notin\iota(K_H^{\bDelta}/\!\!\sim_G)$ where $\iota$ is defined 
   in \cref{eq:iota}. Therefore,
   \begin{equation}
      \rho\cF\left(\indicator{\{[\bl]\}}\right) 
      = \left\{\begin{aligned}
         &0 \quad &\text{if } [\bl]\notin\iota(K_H^{\bDelta}/\!\!\sim_G), \\
         &A_{\iota^{-1}([\bl])}
         \ket{\balpha_{\iota^{-1}([\bl])}} \quad &\text{otherwise.}
      \end{aligned}\right.
   \end{equation}
   On the other hand,
   \begin{equation}
      \pi(\indicator{\{[\bl]\}})=
      \begin{cases}
         0 & [\bl]\notin \iota(K_H^{\bDelta}/\!\!\sim_G)\\
         \ket{\balpha_{\iota^{-1}([\bl])}}
         & [\bl]\in \iota(K_H^{\bDelta}/\!\!\sim_G),
      \end{cases}
   \end{equation}
   so applying $\cN^{-1}$ yields the same expression.
   Since $\pi$, $\rho$, $\cF$ and $\cN^{-1}$ are bounded operators, the
   equality extends by linearity and continuity to all of
   $L^2(H_1(C_\bullet))$.
\end{proof}

This theorem allows to deduce the main result mentioned in the introduction.
First, we can prove that, up to a constant, the states in the integral 
in the definition of $\rho$ (\cref{eq:rho}) correspond 
to projected coherent states, i.e.
\begin{equation}
   \{\varphi(\n-\bn_0)\Pi_H\ket{\balpha}: \varphi\in H^1(C_\bullet)\}
   \subset
   \{e^{i\theta}\Pi_H\ket{\bbeta}:\theta\in[0,2\pi),\forall \bg\in\im{G},
   \bbeta^{\bg^+}=\balpha^{\bg}\bbeta^{\bg^-}\}.
\end{equation}
Indeed, using the isomorphism $\tau$ introduced in \cref{eq:tau}, we can 
find $\bphi\in\kerpi{G^\top}/\impi{H^\top}$ such that 
$[\varphi]=[x\mapsto e^{i\bphi\cdot\bx}]$. So, as 
$[\varphi(\n-\bn_0),\Pi_H]=0$,
\begin{equation}
   \varphi(\n-\bn_0)\Pi_H\ket{\balpha}=
   e^{-i\bphi\cdot\bn_0}\Pi_H\ket{e^{i\phi_1}\alpha_1,\ldots,e^{i\phi_m}\alpha_m}.
\end{equation}
As for all $\bg\in\im{G}, e^{i\bphi\cdot\bg}=1$, 
the coherent state $\ket{e^{i\phi_1}\alpha_1,\ldots,e^{i\phi_m}\alpha_m}$ satisfies 
\begin{equation}
   e^{i\bphi\cdot\bg^+}\balpha^{\bg^+}-\balpha^{\bg}e^{i\bphi\cdot\bg^-}\balpha^{\bg^-}
   =e^{i\bphi\cdot\bg^+}(\balpha^{\bg^+}-\balpha^{\bg}\balpha^{\bg^-})=0.
\end{equation}
Moreover, as $\pi$, $\cF^{-1}$ and $\cN^{-1}$
are surjective (\cref{theorem:pi}), so is $\rho=\cN^{-1}\pi\cF^{-1}$.
So, any vector in $\cT(G,H,\balpha,\bDelta)$ can be written as a 
Bochner integral over the states \break
$\{\varphi(\n-\bn_0)\Pi_{H}\ket{\balpha}: \varphi\in H^1(C_\bullet)\}$.
Because for all $g\in L^2(H^1(C_\bullet))$, 
\begin{equation}
   \int_{H^1(C_\bullet)} g([\varphi])\varphi(\n-\bn_0) 
      \Pi_{H} \ket{\balpha} d\mu_{H^1}([\varphi])
   \in\overline{\Span{\varphi(\n-\bn_0)\Pi_{H}\ket{\balpha}: \varphi\in H^1(C_\bullet)}}
\end{equation}
This implies that 
$\{\varphi(\n-\bn_0)\Pi_{H}\ket{\balpha}: \varphi\in H^1(C_\bullet)\}$
is a total set in $\cT(G,H,\balpha,\bDelta)$, thus proving 
\begin{equation}
   \cT(G,H, \balpha, \bDelta)=\overline{t(G,H,\balpha,\bDelta)}
\end{equation}
where $t(G,H,\balpha,\bDelta):=\overline{\Span{\varphi(\n-\bn_0)\Pi_{H}\ket{\balpha}: \varphi\in H^1(C_\bullet)}}$ 
is defined in \cref{eq:original_def}. 
This implies that the following set is also total in the codespace
\begin{equation}
   \Span{\Pi_{H}\ket{\bbeta}:\forall \bg\in\im{G},
   \bbeta^{\bg^+}-\balpha^{\bg}\bbeta^{\bg^-}=0}
\end{equation}
since
\begin{equation}
   \overline{t(G,H,\balpha,\bDelta)}\subset
   \overline{\Span{\Pi_{H}\ket{\bbeta}:\forall \bg\in\im{G},
   \bbeta^{\bg^+}-\balpha^{\bg}\bbeta^{\bg^-}=0}}
   \subset \cT(G,H, \balpha, \bDelta).
\end{equation}

Hence, for Tiger codes, projected coherent states belonging to 
the codespace are enough to fully characterise the code.
Besides, this draws a connection between the algebraic variety
induced by the polynomial ideal 
$\{\bx^{\bg^+}-\balpha^{\bg}\bx^{\bg^-}=0 : \bg\in\im{G}\}$ 
and the codespace.

Importantly, note that if at least one of the components of $\balpha$
is zero, the coherent states in the code are no longer enough to characterise the 
codespace. For instance, consider the cat code with $\balpha=0$ defined 
as $\cT(0,(2),0,0)=\ker{\hat{a}^2}=\Span{\ket{0},\ket{1}}$.
Then, the only coherent state in the code is $\ket{0}$ and 
$\overline{\Span{\ket{0}}}\neq\cT(0,(2),0,0)$.

\subsection{X and Z bases}

\Cref{theorem:diagram} motivates the introduction of two dual bases of the code.
Following \cite{xu2024lettingtigercagebosonic} conventions, we introduce $X$ and $Z$ bases.

\begin{definition}
   The $X$ and $Z$ bases of the Tiger code $\cT(G,H, \balpha, \bDelta)$
   are 
   \begin{equation}
      \begin{aligned}
         &\mathcal{B}_X:=\left\{\ket{\balpha_{[\bn]}}:=
         \frac{1}{A_{[\bn]}} \Pi_{[\bn]}\ket{\balpha}: [\bn]\in K_H^{\bDelta}/\!\!\sim_G\right\}\\
         &\mathcal{B}_Z:=\left\{\ket{\omega_{[\varphi]}}:=\varphi(\n-\bn_0)\Pi_{H} \ket{\balpha} : 
         [\varphi]\in H^1(C_\bullet)\right\},
      \end{aligned}
   \end{equation}
   respectively.
\end{definition}

To be precise, while the $X$ basis is an orthonormal basis of 
the code, the $Z$ basis is only a total set in the code, 
consisting of unnormalised vectors. Uniqueness of the 
decomposition in the $Z$ basis is equivalent to requiring 
$\rho=\cN^{-1}\pi\cF^{-1}$ to be injective. 
By \cref{theorem:pi,theorem:diagram}, this holds if and 
only if $\ker\pi=0$.

The two bases are linked by the Fourier transform
\begin{equation}\label{eq:XZ_bases_relations}
   \left\{\begin{aligned}
      &\ket{\balpha_{[\bn]}}=\frac{1}{A_{[\bn]}}
      \int_{H^1(C_\bullet)}\overline{\varphi(\bn-\bn_0)}\ket{\omega_{[\varphi]}}d\mu_{H^1}(\varphi) \\
      &\ket{\omega_{[\varphi]}}=
      \sum_{[\bn]\in K_H^{\bDelta}/\!\sim_G} \varphi(\bn-\bn_0)A_{[\bn]}\ket{\balpha_{[\bn]}}.
   \end{aligned}\right.
\end{equation}
The first equality corresponds to 
$\rho\cF\left(\indicator{\{\iota([\bn])\}}\right)=\cN^{-1}\pi\left(\indicator{\{\iota([\bn])\}}\right)$,
which results from \cref{theorem:diagram}.
The second is more subtle to obtain due to the fact that $\cN$ is 
an unbounded operator.
Indeed,
\begin{equation}
   \sum_{[\bn]\in K_H^{\bDelta}/\!\sim_G}
   \varphi(\iota([\bn]))A_{[\bn]}\ket{\balpha_{[\bn]}}
\end{equation}
does not belong to the domain of $\cN$ since 
$\sum_{[\bn]\in K_H^{\bDelta}/\!\sim_G}|\varphi(\iota([\bn]))|^2=\infty$ 
whenever the codespace is infinite-dimensional.
Nevertheless, the second equality can be understood 
formally as follows: one applies a generalised Fourier 
transform to the non-square-integrable function
\begin{equation}
   [\bn]\mapsto \varphi(\bn)\notin L^2(H_1(C_\bullet)),
\end{equation}
obtaining a Dirac distribution; applying $\rho$ to this 
distribution then gives $\ket{\omega_{[\varphi]}}$.
Formalising this argument would require extending the 
Fourier transform considered in this work to the setting 
of tempered distributions, and proving an analogue of the 
commutative diagram in \cref{eq:diagram} in that setting. 
For simplicity, we instead give a direct proof:
\begin{equation}
   \begin{aligned}
      \sum_{[\bn]\in K_H^{\bDelta}/\!\sim_G} \varphi(\bn-\bn_0)A_{[\bn]}\ket{\balpha_{[\bn]}}
      &=e^{-\frac{|\balpha|^2}{2}}\sum_{[\bn]\in K_H^{\bDelta}/\!\sim_G} \varphi(\bn-\bn_0)
      \sum_{\bk\in[\bn]}\frac{\balpha^{\bk}}{\sqrt{\bk!}}\ket{\bk}\\
      &=e^{-\frac{|\balpha|^2}{2}}\sum_{[\bn]\in K_H^{\bDelta}/\!\sim_G} \sum_{\bk\in[n]}
      \varphi(\bk-\bn_0)\frac{\balpha^{\bk}}{\sqrt{\bk!}}\ket{\bk}\\
      &=e^{-\frac{|\balpha|^2}{2}}\sum_{\substack{
            \bk\in\N^m \\
            H\bk=\bDelta 
         }}\varphi(\bk-\bn_0)\frac{\balpha^{\bk}}{\sqrt{\bk!}}\ket{\bk}\\
      &=\ket{\omega_{[\varphi]}}
   \end{aligned}
\end{equation}
where the second equality comes from the fact that 
$\forall \bg\in\im{G},\varphi(\bg)=1$ since $\varphi\in\ker{G^*}$.

\section{Non-linear constraints}\label{sec:non_linear}

This section introduces a generalisation of Tiger codes. 
One limitation of the algebraic structure considered so 
far is that the $\n$-type constraints do not affect the 
qudit degrees of freedom of the code. More precisely, 
when the map $\pi$ of \cref{theorem:pi} is bijective, the 
Tiger codes $\cT(G,H,\balpha,\bDelta)$ and $\cT(G,0,\balpha,0)$ 
host the same qudits. Algebraically, this reflects the fact 
that $\ker H/\im G$ and $\Z^m/\im G$ have the same torsion 
subgroup; see the proof of \cref{theorem:H1isom}.

However, several bosonic codes in the literature use the qudits 
hosted by an underlying Tiger code to encode a single, 
better-protected qudit. This is the case for the 
\emph{repetition cat code} \cite{Guillaud_2019}: its $\a$-type 
constraints,
$\{\ha_i^2-\alpha_i^2:1\leq i\leq m\}$,
are Tiger-code constraints, whereas its $\n$-type constraints,
$\{(-1)^{\hn_i+\hn_{i+1}}-1:1\leq i\leq m-1\}$,
are not. Similarly, the \emph{4-legged cat code} 
\cite{Mirrahimi_2014} has the $\a$-type constraint 
$\ha^4-\alpha^4$ together with the non-Tiger $\n$-type 
constraint $(-1)^{\hn}-1$. 
This is also the case of the two-mode bosonic Fourier code~\cite{Lev26}
with $\a$-type constraints $\ha_1^4-\alpha^4$ and $\ha_1^2+\ha_2^2$ and 
a non-linear $\n$-type constraint $1+(-1)^{\hn_1+\hn_2}$.
The generalisation introduced
below captures all these examples by allowing nonlinear $\n$-type 
constraints.

Consider the Tiger code
\begin{equation}
   \cT(G,0,\balpha,0)
   =\bigcap_{\bg\in\im{G}}\ker{\a^{\bg^+} - \balpha^{\bg}\a^{\bg^-}} .
\end{equation}

\begin{definition}
   Given $G\in\Z^{m \times r}$ and $\balpha\in\C^m$ such that $\balpha$
   has no zero entry, consider a finite family of kernel $G$-periodic functions 
   $\{h_\gamma:\Z^m\to\C\}_{1\leq\gamma\leq s}$, meaning that
   \begin{equation}\label{eq:condition_h_gamma}
      \forall \bn\in\Z^m,\forall \bg\in\im{G}, h_\gamma(\bn)=0
      \Rightarrow h_\gamma(\bn+\bg)=0.
   \end{equation}
   An extended Tiger code is defined as 
   \begin{equation}
      \cT\left(G, \balpha, \{h_\gamma\}_{1\leq\gamma\leq s}\right)
      :=\bigcap_{\bg\in\im{G}}\ker{\a^{\bg^+} - \balpha^{\bg}\a^{\bg^-}} 
      \cap\bigcap_{\gamma\in\llbracket1,s\rrbracket} \ker{h_\gamma(\n)}.
   \end{equation}
\end{definition}

To recover the usual definition of Tiger codes (\cref{def:TC}), we take \break
$\{h_\gamma\}=\{\bn\mapsto \bh_i\cdot\bn-\Delta_i\}_{1\leq i\leq s}$
where $\bh_i$ and $\Delta_i$ are the $i$-th row of $H$ and $\bDelta$, respectively.
Note that the intersection between
\begin{equation}
   \bigcap_{g\in\im G}\ker{\a^{g^+}-\balpha^g\a^{g^-}}
   \quad\text{and}\quad
   \bigcap_{\gamma\in\llbracket1,s\rrbracket} \ker{h_\gamma(\n)}
\end{equation}
need not be trivial even when the functions $h_\gamma$ 
are not kernel $G$-periodic. The exact condition for 
this intersection to be nontrivial is that there exist 
$\bn\in\N^m$ such that
\begin{equation}
\bn+\im G \subset 
\bigcap_{\gamma\in\llbracket1,s\rrbracket} \ker{h_\gamma(\n)}.
\end{equation}
Equivalently, there must exist a non-negative integer vector 
whose entire equivalence class modulo $\im G$ lies in the 
common zero set of the functions $h_\gamma$.
For simplicity, we impose the stronger condition that the 
$h_\gamma$ are kernel $G$-periodic. The following example 
illustrates how to write the Fourier code \cite{Lev26} 
as an extended Tiger code.

\begin{example}
   The four-dimensional two-mode bosonic Fourier code of~\cite{Lev26} is the
   extended Tiger code $\mathcal T(G_{\mathrm F},(\alpha,i\alpha)^\top,\{h_{\mathrm F}\})$, where
   $G_{\mathrm F}=\bigl(\begin{smallmatrix}4&2\\0&-2\end{smallmatrix}\bigr)$ and
   $h_{\mathrm F}:\bn\mapsto1+(-1)^{n_1+n_2}$. Its $\a$-type constraints are generated by
   $\ha_1^4-\alpha^4$ and $\ha_1^2+\hat a_2^2$, 
   while $h_{\mathrm F}$ is kernel $G_F$-periodic and
   selects the odd-total-parity sector.
\end{example}

The remainder of this section is dedicated to proving that 
all the properties of Tiger codes generalise to their 
extended versions.

\paragraph{Finite generation.}
First, according to \cref{prop:finite_generation}, the intersection \break
$\bigcap_{\bg\in\im{G}}\ker{\a^{\bg^+} - \balpha^{\bg}\a^{\bg^-}}$ 
can be reduced to a finite intersection. So, likewise to Tiger codes, as 
$\{h_\gamma\}_{1\leq\gamma\leq s}$ is a finite family, extended Tiger codes are 
generated by a finite number of constraints.

\paragraph{Fourier transform and coherent states.}
Here, we aim at generalising \cref{theorem:diagram}.
To do so, we remark that any extended Tiger code can be 
written as a Tiger code with no $\n$-type constraints 
projected on a space satisfying the non-linear $\n$-type constraints, namely
\begin{equation}\label{eq:ext_and_vanilla}
   \cT\left(G, \balpha, \{h_\gamma\}\right)
   =\cT(G,0, \balpha,0)
   \cap\bigcap_{\gamma\in\llbracket1,s\rrbracket} \ker{h_\gamma(\n)}.
\end{equation}
Because \cref{theorem:diagram} applies to $\cT(G,0, \balpha,0)$, 
it suffices to describe how the diagram in \cref{eq:diagram} 
is transformed when taking into account the extra non-linear constraints.
To streamline the notations, we introduce 
\begin{equation}
   K_{\gamma}:=\bigcap_{\gamma\in\llbracket1,s\rrbracket} \ker{h_\gamma}
   \subset \Z^m.
\end{equation}

First, when $H=0$, note that we can take $\bn_0=0$. 
Then, since all the $h_\gamma$ are kernel $G$-periodic,
the indicator function 
$\indicator{K_\gamma}:\Z^m\to\{0,1\}$ 
descends to the equivalence classes 
\begin{equation}
   \indicator{K_\gamma}([\bn])
   :=\indicator{K_\gamma}(\bn),
   \qquad \forall [n]\in\Z^m.
\end{equation}
Finally, we define the analogue of the projector $\Pi_H$ for 
non-linear constraints that is
\begin{equation}
   \Pi_{\{h_\gamma\}}:=
   \sum_{\bn\in\cap_{\gamma=1}^s \ker{h_\gamma}\cap\N^m}
   \ket{\bn}\bra{\bn}=\indicator{K_\gamma}(\n).
\end{equation}
The following diagram is the appropriate analogue of \cref{eq:diagram}
in the extended setting:
\begin{equation}\label{eq:diagram_ext}
   \begin{tikzcd}
      {\{\indicator{K_\gamma}\times f:f\in L^2\left(\Z^m/\im{G}\right)\}} 
      & {\{C(g): g\in L^2\left(\ker{G^*}\right)\}} \\
      {\cT\left(G, \balpha, \{h_\gamma\}\right)} & {\cT\left(G, \balpha, \{h_\gamma\}\right)} 
      \arrow["\cF", from=1-1, to=1-2]
      \arrow["\pi"', from=1-1, to=2-1]
      \arrow["\tilde{\rho}", from=1-2, to=2-2]
      \arrow["\cN^{-1}"', from=2-1, to=2-2]
   \end{tikzcd}
\end{equation}
where, 
\begin{equation}\label{eq:ext_maps}
   \left\{\begin{aligned}
      &\pi:f \mapsto \sum_{[\bn]\in \N^m/\!\sim_G}
      f\circ\iota([\bn])\ket{\balpha_{[\bn]}}\\
      &\tilde{\rho}:g \mapsto  \int_{\ker{G^*}} g(\varphi)
      \Pi_{\{h_\gamma\}}\varphi(\n) \ket{\balpha} 
      d\mu_{H^1}(\varphi).\\
      &\begin{aligned}
         C:L^2\left(\ker{G^*}\right)&\to L^2\left(\ker{G^*}\right)\\
         g&\mapsto \cF\left(\indicator{K_\gamma}
         \times\cF^{-1}(g)\right).
      \end{aligned}
   \end{aligned}\right.
\end{equation}
Before proving that the diagram in \cref{eq:diagram_ext} commutes, 
we record two useful observations about this construction 
and its relation to the previous diagram in \cref{eq:diagram}.
First, the maps $\tilde{\rho}$ and $C$ admit the 
following concrete descriptions. One has 
\begin{equation}
   \tilde{\rho}=\Pi_{\{h_\gamma\}}\rho,
\end{equation}
where $\rho$ is the map defined in \cref{eq:rho} in the case $H=0$.
Moreover, if
$\indicator{K_\gamma}\in L^2\left(\Z^m/\im G\right)$,
then $C$ is a convolution operator \cite{folland2015course}:
\begin{equation}
   C:g\longmapsto \cF\left(\indicator{K_\gamma}\right)*g .
\end{equation}
Second, in the case of linear constraints---namely
$\{h_\gamma\}=\{n\mapsto h_i(\bn)-\Delta_i\}_{1\leq i\leq s}$---the 
upper-left space in \cref{eq:diagram_ext} identifies 
to that of \cref{eq:diagram} since 
\begin{equation}
   \{\indicator{K_\gamma}\!\times\! f:f\in L^2\left(\Z^m/\im G\right)\}
   \cong L^2(H_1(C_\bullet)).
\end{equation}

\begin{theorem}\label{theorem:ext_diagram}
   The diagram in \cref{eq:diagram_ext} commutes.   
\end{theorem}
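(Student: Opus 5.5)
We must show $\tilde\rho\,\cF(\indicator{K_\gamma} f)=\cN^{-1}\pi(\indicator{K_\gamma} f)$ for every $f\in L^2(\Z^m/\im G)$. The plan is to reduce this to \cref{theorem:diagram} applied to the Tiger code $\cT(G,0,\balpha,0)$, where $H=0$, $\bn_0=0$, $H_1(C_\bullet)=\Z^m/\im G$ and $H^1(C_\bullet)=\ker{G^*}$. That theorem gives $\rho\cF=\cN^{-1}\pi$ on $L^2(\Z^m/\im G)$. We also use the identity $\tilde\rho=\Pi_{\{h_\gamma\}}\rho$ noted before the statement, which holds because $\Pi_{\{h_\gamma\}}=\indicator{K_\gamma}(\n)$ is bounded and can be pulled inside the Bochner integral. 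With these, the left-hand side becomes
\begin{equation}
   \tilde\rho\,\cF(\indicator{K_\gamma} f)=\Pi_{\{h_\gamma\}}\rho\cF(\indicator{K_\gamma} f)=\Pi_{\{h_\gamma\}}\cN^{-1}\pi(\indicator{K_\gamma} f).
\end{equation}
Here $\indicator{K_\gamma} f$ lies in $L^2(\Z^m/\im G)$ because $\indicator{K_\gamma}$ is well defined on classes (kernel $G$-periodicity) and is bounded by $1$.

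It then remains to show that $\Pi_{\{h_\gamma\}}$ acts as the identity on $\cN^{-1}\pi(\indicator{K_\gamma} f)$. We can write
\begin{equation}
   \cN^{-1}\pi(\indicator{K_\gamma} f)=\sum_{[\bn]\in\N^m/\sim_G}\indicator{K_\gamma}([\bn])\,f(\iota[\bn])\,A_{[\bn]}\ket{\balpha_{[\bn]}}.
\end{equation}
This is a norm-convergent sum, since $A_{[\bn]}\le1$ and the $\ket{\balpha_{[\bn]}}$ are orthonormal. Each $\ket{\balpha_{[\bn]}}$ is supported on Fock states $\bk\in[\bn]$. By kernel $G$-periodicity, $\bn\in K_\gamma$ forces $\bk\in K_\gamma$ for every $\bk\in[\bn]$; this uses $\bk=\bn+\bg$ with $\bg\in\im G$, and the condition in \cref{eq:condition_h_gamma} holds for all $\bg\in\im G$, so both directions are covered. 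Hence $\Pi_{\{h_\gamma\}}\ket{\balpha_{[\bn]}}=\indicator{K_\gamma}([\bn])\ket{\balpha_{[\bn]}}$, and since $\indicator{K_\gamma}^2=\indicator{K_\gamma}$ the projector leaves the sum unchanged. Continuity of $\Pi_{\{h_\gamma\}}$ justifies applying it termwise.

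Finally, we check that the diagram is well typed. First, $\cF$ maps the upper-left space onto $\{C(g)\}$: writing $g=\cF(f)$ gives $\cF(\indicator{K_\gamma}f)=C(g)$. Second, the images land in $\cT(G,\balpha,\{h_\gamma\})$. Each $\ket{\balpha_{[\bn]}}$ with $[\bn]\subset K_\gamma$ lies in $\cT(G,0,\balpha,0)$ and in every $\ker{h_\gamma(\n)}$, by \cref{eq:ext_and_vanilla}, so $\cN^{-1}\pi$ maps into the extended code, and then so does $\tilde\rho$ on the relevant domain. The only step needing care is this support argument, that kernel $G$-periodicity makes $\Pi_{\{h_\gamma\}}$ diagonal in the $X$ basis. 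Everything else is inherited from \cref{theorem:diagram}, and the equality extends from the canonical basis by linearity and boundedness exactly as in that proof.
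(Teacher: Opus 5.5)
Your proof is correct and follows essentially the paper's route. You reduce to \cref{theorem:diagram} for $\cT(G,0,\balpha,0)$ through $\tilde{\rho}=\Pi_{\{h_\gamma\}}\rho$, then use kernel $G$-periodicity to see that $\Pi_{\{h_\gamma\}}$ is diagonal on the $X$ basis with eigenvalue $\indicator{K_\gamma}([\bn])$. The paper splits this last step into $\Pi_{\{h_\gamma\}}\cN^{-1}=\cN^{-1}\Pi_{\{h_\gamma\}}$, $\pi M=\Pi_{\{h_\gamma\}}\pi$ and $M^2=M$, where $M$ is multiplication by $\indicator{K_\gamma}$; you instead apply the projector directly to the norm-convergent expansion of $\cN^{-1}\pi(\indicator{K_\gamma}f)$.
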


\begin{proof}
   First, as pointed out in \cref{eq:ext_and_vanilla}, 
   we have
   \begin{equation}
      \cT\left(G, \balpha, \{h_\gamma\}\right)
      =\Pi_{\{h_\gamma\}}
      \cT(G,0,\balpha,0).
   \end{equation}
   Then, define the bounded multiplication operator
   \begin{equation}
      \begin{aligned}
         M : L^2(\Z^m/\im{G})&\to L^2(\Z^m/\im{G}) \\
         f&\mapsto \indicator{K_\gamma}\times f
      \end{aligned}
   \end{equation}
   which is the orthogonal projection onto 
   $\{\indicator{K_\gamma}\times f:f\in L^2\left(\Z^m/\im{G}\right)\}$.
   Remark that, as defined in \cref{eq:ext_maps},
   $C=\cF M\cF^{-1}$.
   Hence, because $M$ is an orthogonal projector and 
   that $\cF$ is an isometry, $C$ is also an orthogonal projector.

   We aim at proving two points: 1) the codomain of the maps in \cref{eq:diagram_ext}
   are correct and, 2) the diagram commutes.
   Both points can be proven using the commutativity of the following diagram
   \begin{equation}
      \begin{tikzcd}
         {L^2(\Z^m/\im{G})} & & & {L^2(\ker{G^*})} \\
         & {L^2(\Z^m/\im{G})} & {L^2(\ker{G^*})} & \\
         & {\cT(G,0,\balpha,0)} & {\cT(G,0,\balpha,0)} & \\ 
         {\cT(G,0,\balpha,0)} & & & {\cT(G,0,\balpha,0)}\\ 
         \arrow["\cF", from=1-1, to=1-4]
         \arrow["\cF", from=2-2, to=2-3]
         \arrow["\pi"', from=1-1, to=4-1]
         \arrow["\pi"', from=2-2, to=3-2]
         \arrow["\rho", from=1-4, to=4-4]
         \arrow["\rho", from=2-3, to=3-3]
         \arrow["\cN^{-1}"', from=3-2, to=3-3]
         \arrow["\cN^{-1}"', from=4-1, to=4-4]
         \arrow["M"', from=1-1, to=2-2]
         \arrow["C", from=1-4, to=2-3]
         \arrow["\Pi_{\{h_\gamma\}}", from=4-1, to=3-2]
         \arrow["\Pi_{\{h_\gamma\}}"', from=4-4, to=3-3]
      \end{tikzcd}
   \end{equation}
   First, assume that this diagram commutes. 
   Both domains of $\pi$ and $\tilde{\rho}$ in 
   \cref{eq:diagram_ext} can be expressed as the image 
   of $M$ and $C$, respectively:
   \begin{equation}\label{eq:domain_as_image}
      \left\{\begin{aligned}
         &\im{M}
         =\{\indicator{K_\gamma}\times f:f\in L^2\left(\Z^m/\im{G}\right)\}, \\
         &\im{C} 
         =\{C(g): g\in L^2\left(\ker{G^*}\right)\}.
      \end{aligned}\right.
   \end{equation}
   Furthermore, the commutation relation
   $\pi M=\Pi_{\{h_\gamma\}}\pi$
   proves that 
   \begin{equation}
      \im{\pi M}\subset
      \Pi_{\{h_\gamma\}}\cT(G,0,\balpha,0)=\cT(G,\balpha,\{h_\gamma\}).
   \end{equation}
   Similarly, as 
   $\rho C =\Pi_{\{h_\gamma\}}\rho$, we get 
   $\im{\rho C}\subset\cT(G, \balpha, \{h_\gamma\})$. 
   So, combining these inclusions with the descriptions 
   of the domains in \cref{eq:domain_as_image}, this proves that 
   both codomains of $\pi$ and $\tilde{\rho}$ in \cref{eq:diagram_ext}
   are correct.
   Finally, 
   $\cF M=C \cF$
   and 
   $\cN^{-1}\Pi_{\{h_\gamma\}}=\Pi_{\{h_\gamma\}}\cN^{-1}$
   guarantee in the same manner that the codomains
   of $\cF$ and $\cN^{-1}$ in \cref{eq:diagram_ext} are correct.

   Point 2) results from a diagram chasing:
   \begin{equation}
      \tilde{\rho}\cF M
      =\Pi_{\{h_\gamma\}}\rho\cF M
      =\Pi_{\{h_\gamma\}}\cN^{-1}\pi M
      =\cN^{-1}\Pi_{\{h_\gamma\}}\pi M
      =\cN^{-1}\pi MM
      =\cN^{-1}\pi M
   \end{equation}
   where we used the fact that $M$ is 
   a projector.
   Hence, for any  
   $f\in\{\indicator{K_\gamma}\times f:f\in L^2\left(\Z^m/\im{G}\right)\}$
   we get 
   \begin{equation}
      \tilde{\rho}\cF f=\tilde{\rho}\cF Mf = \cN^{-1}\pi Mf=\cN^{-1}\pi f.
   \end{equation}

   Let us now prove the commutativity of the diagram in \cref{eq:diagram_ext}.
   The intuition is the following: the inner square has to be understood as the projection 
   of the outer square onto the space satisfying the 
   non-linear $\n$-type constraints.
   
   First, remark that the inner and outer squares both correspond to the diagram in \cref{eq:diagram} 
   in the particular case where $H=0$. According to \cref{theorem:diagram}, they both commute.
   It remains to prove that 
   \begin{equation}
      \left\{\begin{aligned}
         &C \cF
         =\cF M \\
         &\Pi_{\{h_\gamma\}}\cN^{-1} 
         =\cN^{-1} \Pi_{\{h_\gamma\}} \\
         &\pi M 
         = \Pi_{\{h_\gamma\}} \pi \\
         &\rho C  
         =\Pi_{\{h_\gamma\}}\rho.
      \end{aligned}\right.
   \end{equation}
   The first relation is clear since $C=\cF M\cF^{-1}$. For the second, 
   consider any basis vector \break 
   $\ket{\balpha_{[\bn]}}\in \cT(G,0,\balpha,0)$ 
   for some $[\bn]\in\N^m/\!\!\sim_G$.
   Because the $h_\gamma$'s are kernel $G$-periodic, \break
   $\Pi_{\{h_\gamma\}}\ket{\balpha_{[\bn]}}=\indicator{K_\gamma}(\bn)\ket{\balpha_{[\bn]}}$.
   Hence,
   \begin{equation}
      \cN^{-1}\Pi_{\{h_\gamma\}}\ket{\balpha_{[\bn]}}
      =\indicator{K_\gamma}(\bn)A_{[\bn]}\ket{\balpha_{[\bn]}}
      =\Pi_{\{h_\gamma\}}\cN^{-1}\ket{\balpha_{[\bn]}}.
   \end{equation}
   By linearity and continuity, we get the second commutation relation.
   For the third one, consider an element $\indicator{\{[\bl]\}}$ of the 
   orthonormal basis $\{\indicator{\{[\bn]\}}:[\bn]\in \Z^m/\im{G}\}$ of $L^2(\Z^m/\im{G})$.
   Then,
   \begin{equation}
      \pi M\indicator{\{[\bl]\}}
      =\pi \left(\indicator{K_\gamma}\indicator{\{[\bl]\}}\right)
      =\sum_{[\bn]\in\N^m/\!\sim_G} 
      \indicator{K_\gamma}(\iota([\bn]))
      \indicator{\{[\bl]\}}(\iota([\bn]))\ket{\balpha_{[\bn]}}.
   \end{equation}
   If $[\bl]\notin\im{\iota}$, $\indicator{\{[\bl]\}}([\bn])=0$
   for all $[\bn]\in\N^m/\!\!\sim_G$ and $\pi\indicator{\{[\bl]\}}=0$. Otherwise, 
   as $\iota$ is injective by \cref{prop:iota},
   \begin{equation}
      \pi M\indicator{\{[\bl]\}}
      =\indicator{K_\gamma}([\bl])\ket{\balpha_{\iota^{-1}([\bl])}}
      =\Pi_{\{h_\gamma\}}\pi\indicator{\{[\bl]\}}.
   \end{equation}
   Similarly, we conclude by linearity and continuity to obtain 
   the third commutation relation.
   Finally, using all the commutation relation we have proven so far,
   we deduce the last one:
   \begin{equation}
      \begin{aligned}
         \rho C  
         &=\rho \cF M \cF^{-1}\\
         &=\cN^{-1}\pi M \cF^{-1}\\
         &=\cN^{-1}\Pi_{\{h_\gamma\}}\pi \cF^{-1}\\
         &=\Pi_{\{h_\gamma\}}\cN^{-1}\pi \cF^{-1}\\
         &=\Pi_{\{h_\gamma\}}\rho.
      \end{aligned}
   \end{equation}
\end{proof}

\paragraph{$X$ and $Z$ bases.}
Again, we can define an $X$ and $Z$ basis in the same manner as before.

\begin{definition}
   Let the $X$ and $Z$ bases be
   \begin{equation}
      \left\{\begin{aligned}
         &\cB_X:=\left\{\ket{\balpha_{[\bn]}}=\frac{\Pi_{[\bn]}\ket{\balpha}}{A_{[\bn]}} 
         : [\bn]\in K_\gamma\cap\N^m/\im{G}\right\}\\
         &\cB_Z:=\left\{\ket{\omega_{\varphi}}=\varphi(\n) \Pi_{\{h_\gamma\}}\ket{\balpha} : \varphi\in \ker{G^*}\right\}
      \end{aligned}\right.
   \end{equation}
   respectively.
\end{definition}

As previously, while the $X$ basis is an orthonormal basis of $\cT\left(G, \balpha, \{h_\gamma\}\right)$,
the $Z$ basis is only a total set of $\cT\left(G,\balpha,\{h_\gamma\}\right)$.
The two bases are linked by the Fourier transform
\begin{equation}
   \left\{\begin{aligned}
      &\ket{\balpha_{[\bn]}}=\frac{1}{A_{[\bn]}}
      \int_{\ker{G^*}}\!\overline{\varphi(\bn)}\ket{\omega_{\varphi}}d\mu_{H^1}(\varphi) \\
      &\ket{\omega_{\varphi}}=
      \sum_{[\bn]\in K_\gamma\cap\N^m\!/\!\sim_G} \varphi(\bn)A_{[\bn]}\ket{\balpha_{[\bn]}}
   \end{aligned}\right.
\end{equation}
where both equalities are obtained by considering the equalities 
in \cref{eq:XZ_bases_relations} in the case $H=0$ and then 
projecting with $\Pi_{\{h_\gamma\}}$.

\section{Logical operators}\label{sec:logop}

Having described the structure of Tiger codes and introduced 
two equivalent descriptions of the codespace, through the $X$ 
and $Z$ bases, we now turn to the question of how computations 
can be performed on the encoded data. This requires 
characterising the logical operators of any Tiger code.

\begin{definition}\label{def:logop}
   A logical operator is an operator $L$ preserving the codespace, meaning 
   \begin{equation}
      L(\cT(G,H, \balpha, \bDelta))\subset \cT(G,H, \balpha, \bDelta).
   \end{equation}
   Two logical operators $L$ and $L'$ are said to be equivalent if
   they have the same action on the codespace, 
   i.e., 
   \begin{equation}
      L\!\restriction_{\cT(G,H, \balpha, \bDelta)}=
      L'\!\restriction_{\cT(G,H, \balpha, \bDelta)}.
   \end{equation}
   A logical operator is said to be \emph{trivial} if 
   it is equivalent to the identity.
\end{definition}

In this section, we assume that the code under consideration 
satisfies $\ker\pi=0$, i.e. that the codespace hosts only 
rotors and qudits; see \cref{theorem:pi}. 

This assumption 
has two important consequences that simplify the computations 
in what follows. First, the 
embedding $\iota$ defined in \cref{eq:iota} becomes a 
bijection by \cref{prop:iota}, thereby endowing the set 
$K_H^{\bDelta}$ with a group structure.
Second, by \cref{theorem:diagram}, the maps $\pi$ and 
$\cF$ in \cref{eq:diagram} become isometric isomorphisms,
while $\cN^{-1}$ and $\rho = \cN^{-1}\pi\cF^{-1}$ are
bounded and injective with dense range. They are surjective---hence
isomorphisms---if and only if the codespace is finite-dimensional: in the
infinite-dimensional case, $\cN$ is unbounded implying that its domain 
is a strict subspace of the codespace. In
particular, the injectivity of $\rho$ guarantees that expansions along the
$Z$-basis $\{\ket{\omega_{[\varphi]}}\}$ are unique whenever they exist.

Although this section focuses on Tiger codes, 
some of the logical operators derived for them 
generalise straightforwardly to their extended 
counterparts. We now give a sufficient condition 
for a logical operator of a Tiger code to remain a 
logical operator of the corresponding extended Tiger code.

\begin{proposition}
   Let $\cT(G, \balpha, \{h_\gamma\})$ be an extended 
   Tiger code and $L$ be a logical operator of 
   $\cT(G,0,\balpha,0)$.
   If $[L,\Pi_{\{h_\gamma\}}]=0$, then, 
   $L$ is a logical operator of 
   $\cT(G, \balpha, \{h_\gamma\})$.
\end{proposition}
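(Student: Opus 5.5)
The argument rests on the decomposition \cref{eq:ext_and_vanilla}, which says that $\cT(G,\balpha,\{h_\gamma\})=\cT(G,0,\balpha,0)\cap\bigcap_\gamma\ker{h_\gamma(\n)}$. Equivalently, as recorded at the start of the proof of \cref{theorem:ext_diagram}, we have $\cT(G,\balpha,\{h_\gamma\})=\Pi_{\{h_\gamma\}}\cT(G,0,\balpha,0)$. This uses that $\Pi_{\{h_\gamma\}}=\indicator{K_\gamma}(\n)$ is the orthogonal projector onto $\bigcap_\gamma\ker{h_\gamma(\n)}$, and that by kernel $G$-periodicity it maps each basis vector $\ket{\balpha_{[\bn]}}$ of $\cT(G,0,\balpha,0)$ to $\indicator{K_\gamma}(\bn)\ket{\balpha_{[\bn]}}$, so it preserves $\cT(G,0,\balpha,0)$.

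Let $\ket{\psi}\in\cT(G,\balpha,\{h_\gamma\})$. First, $\ket{\psi}\in\cT(G,0,\balpha,0)$, so $L\ket{\psi}\in\cT(G,0,\balpha,0)$ because $L$ is logical for that code. Second, $\ket{\psi}$ lies in the range of the projector, hence $\Pi_{\{h_\gamma\}}\ket{\psi}=\ket{\psi}$. Using $[L,\Pi_{\{h_\gamma\}}]=0$ we then get
\begin{equation}
   \Pi_{\{h_\gamma\}}L\ket{\psi}=L\Pi_{\{h_\gamma\}}\ket{\psi}=L\ket{\psi}.
\end{equation}
So $L\ket{\psi}$ lies in the range of $\Pi_{\{h_\gamma\}}$, namely $\bigcap_\gamma\ker{h_\gamma(\n)}$. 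Combining the two facts, $L\ket{\psi}$ belongs to $\cT(G,0,\balpha,0)\cap\bigcap_\gamma\ker{h_\gamma(\n)}=\cT(G,\balpha,\{h_\gamma\})$, which is what the proposition claims.

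The only delicate point is the meaning of the commutation hypothesis when $L$ is unbounded, for instance a polynomial in $\a,\a^\dag$. I would read $[L,\Pi_{\{h_\gamma\}}]=0$ as follows: $\Pi_{\{h_\gamma\}}\cD(L)\subset\cD(L)$, and $L\Pi_{\{h_\gamma\}}=\Pi_{\{h_\gamma\}}L$ on $\cD(L)$. Since $L$ is logical for $\cT(G,0,\balpha,0)$, implicitly $\cT(G,0,\balpha,0)\subset\cD(L)$. Under this reading the displayed identity is legitimate for every $\ket{\psi}$ in the codespace, and no further work is needed. I expect this domain bookkeeping to be the only obstacle, and it is resolved by fixing this convention.
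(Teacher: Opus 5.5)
Your proof is correct and is essentially the paper's argument: the paper simply invokes $\cT(G,\balpha,\{h_\gamma\})=\Pi_{\{h_\gamma\}}\cT(G,0,\balpha,0)$ together with $L\Pi_{\{h_\gamma\}}=\Pi_{\{h_\gamma\}}L$. Your version works from the intersection form instead, which does not need the kernel $G$-periodicity that the paper's identity relies on, and your domain convention for unbounded $L$ spells out a point the paper leaves implicit.
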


\begin{proof}
   This is a consequence of the fact that 
   $\cT(G, \balpha, \{h_\gamma\})=\Pi_{\{h_\gamma\}}\cT(G,0,\balpha,0)$.
\end{proof}

In \cref{sec:Pauli,sec:Zrotations}, we present logical operators
that are functions of the number operators $\n$ 
for arbitrary Tiger codes. 
As such, they commute with $\Pi_{\{h_\gamma\}}$, making these  
valid logical operators for extended Tiger codes as well.
We first study logical Pauli operators in \cref{sec:Pauli}
before turning to Z-rotations that implement Clifford and 
non-Clifford logical operators in \cref{sec:Zrotations}.

\subsection{Pauli logical operators}\label{sec:Pauli}

The following definition is motivated by the 
definition of the Heisenberg group for $L^2$ functions 
over a LCA group (Theorem 4.1 \cite{Prasad_2011}).

\begin{definition}\label{def:Pauli}
   The family of logical $X$ Pauli operators is the family
   \begin{equation}
      \cL_X:=\left\{\left[X_{[\bn]}\right]:[\bn]\in H_1(C_\bullet)\right\}
   \end{equation}
   of equivalence classes of logical operators indexed by the 
   homology group $H_1(C_\bullet)$. For each \break
   $[\bn']\in H_1(C_\bullet)$, any representative
   $X_{[\bn']}\in [X_{[\bn']}]$
   is required to be bounded on the codespace and to act on 
   the X basis by translation:
   \begin{equation}\label{eq:XPauli_def}
      X_{[\bn']}\ket{\balpha_{\iota^{-1}([\bn])}}
      =\ket{\balpha_{\iota^{-1}([\bn-\bn'])}},
      \qquad \forall[\bn]\in H_1(C_\bullet).
   \end{equation}

   The logical $Z$ Pauli operators form a family
   \begin{equation}
      \cL_Z:=\left\{\left[Z_{[\varphi]}\right]:[\varphi]\in H^1(C_\bullet)\right\}
   \end{equation}
   of equivalence classes of logical operators indexed 
   by the cohomology group $H^1(C_\bullet)$. For each 
   $[\varphi]\in H^1(C_\bullet)$, any representative 
   $Z_{[\varphi]}\in[Z_{[\varphi]}]$ is bounded on the codespace 
   and act on the X basis by adding a phase
   \begin{equation}\label{eq:ZPauli_def}
      Z_{[\varphi]}\ket{\balpha_{\iota^{-1}([\bn])}}
      =\varphi(\bn)\ket{\balpha_{\iota^{-1}([\bn])}},
      \qquad \forall[\bn]\in H_1(C_\bullet)
   \end{equation}
   where the value $\varphi(\bn)\in\T$ is independent of
   the chosen representatives of $[\varphi]$ and $[\bn]$.
\end{definition}

Because $\{\ket{\balpha_{\iota^{-1}([\bn])}}:[\bn]\in H_1(C_\bullet)\}$ 
form a basis of $\cT(G,H, \balpha, \bDelta)$ (\cref{theorem:basis})
and that the representatives are bounded on the codespace, X and 
Z logical Pauli operator representatives are unitaries on the codespace and
both \cref{eq:XPauli_def} and \cref{eq:ZPauli_def} uniquely 
define their action on the codespace.
With this definition, we retrieve the canonical commutation relation
on the codespace
\begin{equation}
   X_{[\bn]}Z_{[\varphi]}=\varphi(\bn)Z_{[\varphi]}X_{[\bn]}.
\end{equation}
Outside of $\cT(G,H,\balpha,\bDelta)$, this relation does 
not necessarily hold. Similarly, while a representative of 
a logical Pauli operator is unitary on the codespace,
it does not need to be unitary on the whole Hilbert space.

\begin{remark}
   Alternatively, we could have defined the Pauli logical operators
   using the isomorphism \cref{theorem:H1isom} 
   and the standard definition for Pauli operators 
   over qudits \cite{Sarkar_2024} and rotors 
   \cite{Vuillot_2024}. Both definitions coincide.
\end{remark}

Finding explicit representatives for each $X$ and $Z$ logical 
Pauli operators is straightforward. For instance, a naive approach 
consists in defining for any $[\bn']\in H_1(C_\bullet)$
\begin{equation}
   \sum_{[\bn]\in H_1(C_\bullet)} 
   \ket{\balpha_{\iota^{-1}([\bn-\bn'])}}\bra{\balpha_{\iota^{-1}([\bn])}}
\end{equation}
which is a representative of $[X_{[\bn']}]$.
However, even though it is theoretically valid,
this representative does not highlight any practical 
implementation of such logical operator.
That is why, in what follows, we narrow our focus down to certain 
representatives of logical operators. 
We investigate $X$ and $Z$ logical 
Pauli operator representatives that belong to 
\begin{equation}\label{eq:Pauli_representatives}
   \{P(\a^\dag,\a) : P\in\C[x_1,\ldots,x_m,y_1,\ldots,y_m]\} 
   \quad \text{and} \quad
   \{e^{i(\bphi\cdot\n+\theta)} : \bphi\in[0,2\pi)^m, \theta\in[0,2\pi)\}
\end{equation}
respectively. 

\paragraph{$Z$ logical Pauli operators.}
For the logical $Z$ Pauli operators, we show that each 
equivalence class $[Z_{[\varphi]}]$ admits a representative 
given by a physical rotation; see \cref{eq:Pauli_representatives}.

\begin{proposition}\label{prop:ZPaulilog}
   For all $[\varphi]\in H^1(C_\bullet)$, consider a representative 
   $\varphi\in [\varphi]$. Then, 
   \begin{equation}
      \varphi(\n-\bn_0)\in
      \{e^{i(\bphi\cdot\n+\theta)} : 
      \bphi\in[0,2\pi)^m, \theta\in[0,2\pi)\}
   \end{equation}
   and $\varphi(\n-\bn_0)\in[Z_{[\varphi]}]$.
\end{proposition}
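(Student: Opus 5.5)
The plan is to make $\varphi(\n-\bn_0)$ explicit as a phase rotation and then check its action on the $X$ basis directly. A representative $\varphi\in[\varphi]$ is an element of $\ker{G^*}\subset\widehat{\Z^m}$, so by the surjectivity argument in the proof that $\tau$ is an isomorphism, there is $\bphi\in\kerpi{G^\top}$ with $\varphi(\bx)=e^{i\bphi\cdot\bx}$ for all $\bx\in\Z^m$.

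By functional calculus on the Fock basis, $\varphi(\n-\bn_0)\ket{\bk}=e^{i\bphi\cdot(\bk-\bn_0)}\ket{\bk}$. Setting $\theta:=-\bphi\cdot\bn_0 \text{ mod } 2\pi$ therefore gives $\varphi(\n-\bn_0)=e^{i(\bphi\cdot\n+\theta)}$, which proves the first membership. This operator is unitary on $\H$, so in particular it is bounded on the codespace.

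For the second claim, fix $[\bn]\in H_1(C_\bullet)$ and let $\bp\in\iota^{-1}([\bn])$, so that $\bp\in K_H^{\bDelta}$ and $\bp-\bn_0\in[\bn]$. Every Fock state $\ket{\bk}$ in the support of $\ket{\balpha_{[\bp]}}$ satisfies $\bk-\bp\in\im{G}$. Since $\varphi\in\ker{G^*}$, we have $\varphi(\bg)=1$ for all $\bg\in\im{G}$. Hence
\begin{equation}
   \varphi(\bk-\bn_0)=\varphi(\bp-\bn_0)\varphi(\bk-\bp)=\varphi(\bp-\bn_0),
\end{equation}
and so $\varphi(\n-\bn_0)\ket{\balpha_{[\bp]}}=\varphi(\bp-\bn_0)\ket{\balpha_{[\bp]}}$. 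This is exactly \cref{eq:ZPauli_def}, with $\varphi(\bn)$ read on the representative $\bn=\bp-\bn_0$ of $[\bn]$.

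The main obstacle is to confirm that this eigenvalue matches the one in \cref{def:Pauli}, i.e.\ that it is independent of representatives. For the homology class, any other representative differs from $\bp-\bn_0$ by an element of $\im{G}$, which does not change $\varphi$. For the cohomology class, replacing $\varphi$ by $\varphi\cdot H^*\varphi'$ multiplies the value by $\varphi'(H(\bp-\bn_0))=\varphi'(0)=1$, because $\bp-\bn_0\in\ker{H}$. This is precisely where the shift by $\bn_0$ is needed. Finally, the operator preserves the codespace because it maps each $X$-basis vector to a multiple of itself and is bounded, so by continuity it preserves the closed span. It therefore belongs to $[Z_{[\varphi]}]$.
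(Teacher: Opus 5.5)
Your proof is correct and follows essentially the same route as the paper. You write the representative as $\bx\mapsto e^{i\bphi\cdot\bx}$ to get the phase-rotation form with $\theta=-\bphi\cdot\bn_0$, use $\varphi(\bg)=1$ for $\bg\in\im{G}$ to show each $X$-basis vector is an eigenvector with eigenvalue $\varphi(\bn)$, and get independence of representatives from $\bn-\bn_0\in\ker{H}$. One small improvement over the paper: you take $\bphi$ directly from $\varphi\in\widehat{\Z^m}$, whereas the paper appeals to $\tau^{-1}([\varphi])$, which only determines $\varphi$ up to an element of $\im{H^*}$ and so gives the operator identity only on the codespace.
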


\begin{proof}
   First, as $\varphi:\Z^m\to\T$, $\varphi(\n-n_0)$ is bounded 
   on the whole Hilbert space, so in particular on the codespace. 
   Furthermore, since $\tau$ introduce in \cref{eq:tau} is an 
   isomorphism,  \break
   $\varphi(\n-\bn_0)=e^{i\tau^{-1}([\varphi])\cdot(\n-\bn_0)}$ 
   where $\tau^{-1}([\varphi])\in[0,2\pi)^m$ (\cref{eq:sequence_isom}).
   So,
   \begin{equation}
      \varphi(\n-\bn_0)\in \{e^{i(\bphi\cdot\n+\theta)} : \bphi\in[0,2\pi)^m, \theta\in[0,2\pi)\}
   \end{equation} 
   for all $[\varphi]\in H^1(C_\bullet)$.
   Finally, for all $[\bn]\in H_1(C_\bullet)$
   \begin{equation}
      \varphi(\n-\bn_0)\ket{\balpha_{\iota^{-1}([\bn])}} 
      =\frac{e^{-|\balpha|^2/2}}{A_{[\bn]}}
      \sum_{\bk\in\iota^{-1}([\bn])}\frac{\balpha^{\bk}}{\sqrt{\bk!}}\varphi(\bk-\bn_0)\ket{\bk}
      =\varphi(\bn)\ket{\balpha_{\iota^{-1}([\bn])}}
   \end{equation}
   as $\varphi(\bg)=1$ for all $\bg\in\im{G}$. 
   So, $\varphi(\n-\bn_0)\in[Z_{[\varphi]}]$ and its action 
   on the codespace does not depend on 
   the choice of the representative $\varphi$ as $\bn-\bn_0\in\ker{H}$. 
   Hence, all the representatives 
   $\varphi\in[\varphi]$ yield equivalent logical operators.
\end{proof}

\paragraph{$X$ logical Pauli operators.}
While logical Z Pauli operators always admit representatives 
given by physical rotations, and hence can be implemented 
through Hamiltonian dynamics, it is less clear whether 
logical X Pauli operators admit comparably simple representatives.
In fact, finding a polynomial $P$ such that $P(\a^\dag,\a)$
is a $X$ logical Pauli operator representative turns out 
to be difficult in general. 
For this reason, we introduce two simplifying hypotheses: 
1) we study \emph{approximate} logical $X$ Pauli operators, and 
2)we assume that the logical space is finite-dimensional. 
Both hypotheses are motivated and discussed below.
\begin{definition}
   The approximate $X$ logical Pauli operators form a family
   \begin{equation}
      \cL_{\tilde{X}}:=\{[\tilde{X}_{[\bn]}]:[\bn]\in H_1(C_\bullet)\}.
   \end{equation}
   For each $[\bn]\in H_1(C_\bullet)$, any 
   representative $\tilde{X}_{[\bn]}\in[\tilde{X}_{[\bn]}]$
   must satisfy
   \begin{equation}\label{eq:appXlogical_def}
      \tilde{X}_{[\bn]}\ket{\omega_{[\varphi]}}
      =\varphi(\bn)\ket{\omega_{[\varphi]}},
      \qquad \forall[\varphi]\in H^1(C_\bullet).
   \end{equation}
   and $\tilde{X}_{[\bn]}$ must be bounded on the codespace.
\end{definition}

Recall that the family $\{\ket{\omega_{[\varphi]}}:[\varphi]\in H^1(C_\bullet)\}$ 
is a total set of the codespace by \cref{theorem:diagram}.
Therefore, similarly to \cref{def:Pauli}, once 
$\tilde{X}_{[\bn]}$ is required to be bounded 
on the codespace, its action on this family uniquely determines 
its action on the entire codespace.
But contrary to \cref{def:Pauli}, approximate $X$ logical
Pauli operators are \emph{not} unitary on the codespace as the 
$\ket{\omega_{[\varphi]}}$'s are not orthonormal.

For instance, the approximate $X$ logical Pauli operator of the 
cat code $\ker{\ha^2-\alpha^2}$ is simply $\frac{\ha}{\alpha}$ because
\begin{equation}
   \frac{\ha}{\alpha}\ket{\alpha}=\ket{\alpha} 
   \quad \text{and} \quad
   \frac{\ha}{\alpha}\ket{-\alpha}=-\ket{-\alpha}.
\end{equation}

For the same implementation reasons aforementioned, 
we are investigating for representatives of approximate 
$X$ logical Pauli operators that are still of the form
\begin{equation}
   \{P(\a^\dag,\a) : P\in\C[x_1,\ldots,x_m,y_1,\ldots,y_m]\}.
\end{equation}
Because no simple general boundedness statement on polynomials in 
$\a$ and $\a^\dag$ is available for arbitrary
infinite dimensional Tiger codes,
for the remainder of this subsection we restrict 
our attention to Tiger codes with finite-dimensional codespaces.

\begin{proposition}\label{prop:logicalPauliX}
   For positive Tiger codes with finite dimensional codespaces, 
   for all $[\bl]\in H_1(C_\bullet)$,
   $[\tilde{X}_{[\bl]}]=\left[\frac{\a^{\bl}}{\balpha^{\bl}}\right]$
   where $\bl\in[\bl]$ is any representative satisfying $\bl\geq0$ componentwise.
\end{proposition}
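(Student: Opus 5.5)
The plan is to verify the defining relation \cref{eq:appXlogical_def} directly on the $Z$ basis and then use that, in finite dimension, the $Z$ basis spans the codespace. First, a representative $\bl\geq0$ of $[\bl]$ exists because the code is positive. Pick $\bg\in\im{G}$ with $\bg>0$; then $\bl+\lambda\bg\geq0$ for $\lambda\in\N$ large and $[\bl+\lambda\bg]=[\bl]$. Fix any such $\bl\geq0$; note $\bl\in\ker{H}$. Since $\bl^-=0$, the operator $\a^{\bl}$ is a pure product of annihilation operators.

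The key computation is on $\ket{\omega_{[\varphi]}}=\varphi(\n-\bn_0)\Pi_H\ket{\balpha}$. As in \cref{sec:Fourier_coherent}, write $\varphi(\bx)=e^{i\bphi\cdot\bx}$ with $\bphi\in\kerpi{G^\top}$ via $\tau$. Then
\begin{equation}
   \ket{\omega_{[\varphi]}}=e^{-i\bphi\cdot\bn_0}\Pi_H\ket{\bbeta},
   \qquad \bbeta:=(e^{i\phi_1}\alpha_1,\ldots,e^{i\phi_m}\alpha_m).
\end{equation}
I will show $\a^{\bl}\Pi_H=\Pi_H\a^{\bl}$ on vectors with rapidly decaying Fock coefficients, in particular on coherent states. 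In the Fock basis, $\a^{\bl}\ket{\bk}=\sqrt{\bk!/(\bk-\bl)!}\,\ket{\bk-\bl}$ (zero if $\bk\not\geq\bl$), and $H(\bk-\bl)=H\bk$ because $H\bl=0$. Hence $H\bk=\bDelta$ if and only if $H(\bk-\bl)=\bDelta$, and the two operators agree termwise. Domain issues are harmless: the coefficients of $\Pi_H\ket{\bbeta}$ are dominated by those of $\ket{\bbeta}$, which lies in $\cD(\a^{\bu})$ for every $\bu$.

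Combining this with $\a^{\bl}\ket{\bbeta}=\bbeta^{\bl}\ket{\bbeta}$ and $\bbeta^{\bl}=e^{i\bphi\cdot\bl}\balpha^{\bl}=\varphi(\bl)\balpha^{\bl}$ gives
\begin{equation}
   \frac{\a^{\bl}}{\balpha^{\bl}}\ket{\omega_{[\varphi]}}=\varphi(\bl)\ket{\omega_{[\varphi]}}.
\end{equation}
The scalar $\varphi(\bl)$ is independent of the representatives of $[\varphi]$ and $[\bl]$. For $[\bl]$ this holds because $\varphi$ is trivial on $\im{G}$; for $[\varphi]$ it holds because representatives differ by elements of $\im{H^*}$, which are trivial on $\ker{H}$.

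It remains to see that $\a^{\bl}/\balpha^{\bl}$ is a logical operator that is bounded on the codespace. By \cref{theorem:diagram} the family $\{\ket{\omega_{[\varphi]}}\}$ is total in $\cT(G,H,\balpha,\bDelta)$. Since the codespace is finite-dimensional, the span of this family is already closed, so it equals the whole codespace. The codespace therefore lies in $\cD(\a^{\bl})$ and is mapped into itself, because each $\ket{\omega_{[\varphi]}}$ is mapped to a multiple of itself. Boundedness on the codespace is automatic in finite dimension. Thus $\a^{\bl}/\balpha^{\bl}\in[\tilde X_{[\bl]}]$ for every nonnegative representative $\bl$. Uniqueness of the action, noted after the definition, shows that all such representatives are equivalent.

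The main point to handle carefully is the intertwining $\a^{\bl}\Pi_H=\Pi_H\a^{\bl}$ together with its domain bookkeeping. This is exactly where $\bl\in\ker{H}$ and $\bl\geq0$ are both used: if $\bl$ had negative entries, $\a^{\bl}$ would not be defined as an annihilation monomial, and the eigenvalue identity $\a^{\bl}\ket{\bbeta}=\bbeta^{\bl}\ket{\bbeta}$ would fail. The passage from a total set to a spanning set is where finite dimensionality enters.
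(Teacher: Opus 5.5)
Your proof is correct and takes essentially the same route as the paper. Like the paper, you check \cref{eq:appXlogical_def} directly on the $Z$ family, using $H\bl=0$ to commute $\a^{\bl}$ past $\Pi_H$ together with the coherent-state eigenvalue relation, and finite-dimensionality gives boundedness. The differences are cosmetic: you absorb $\varphi(\n-\bn_0)$ into a phase-rotated coherent state rather than commuting $\a^{\bl}$ through it, and you spell out, via totality of the $\ket{\omega_{[\varphi]}}$, why the codespace lies in the domain and is preserved, which the paper leaves implicit.
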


\begin{proof}
   Let $[\bl]\in H_1(C_\bullet)$.
   Since the code is positive, there exists $\bg\in\im{G}$ with $\bg>0$
   componentwise.
   Hence one can choose $\lambda\in\N$ large enough so that
   $\bl+\lambda \bg\ge 0$.
   Replacing $\bl$ by the equivalent representative $\bl+\lambda \bg$, we may
   without loss of generality assume that $\bl\ge0$.

   For any $[\varphi]\in H^1(C_\bullet)$, using the equality 
   $\Pi_{H}=\indicator{\{H\n=\bDelta\}}=\indicator{\{H(\n+\bl)=\bDelta\}}$, 
   we get
   \begin{equation}
      \begin{aligned}
         \frac{\a^{\bl}}{\balpha^{\bl}}\ket{\omega_{[\varphi]}}
         &=\frac{\a^{\bl}}{\balpha^{\bl}}\varphi(\n-\bn_0)\Pi_{H}\ket{\balpha}\\
         &=\varphi(\n-\bn_0+\bl)\indicator{\{H(\n+\bl)=\bDelta\}}\frac{\a^{\bl}}{\balpha^{\bl}}\ket{\balpha}\\
         &=\varphi(\bl)\ket{\omega_{[\varphi]}}.
      \end{aligned}
   \end{equation}
   as $\a^{\bl}\ket{\balpha}=\balpha^{\bl}\ket{\balpha}$.
   Thus $\frac{\a^{\bl}}{\balpha^{\bl}}$ acts on the Z-family exactly as the
   approximate logical X operator labelled by $[\bl]$.
   Since $\varphi(\bl)$ depends only on the homology class $[\bl]$, different
   non-negative representatives give equivalent logical operators.
   Finally, as the codespace is finite-dimensional,
   $\a^{\bl}$ is clearly bounded on the codespace.
\end{proof}

However, for non-positive Tiger codes, finding such representatives
of approximate X logical Pauli operators remains challenging. 
Here, we provide an algorithmic method to find \emph{some of} the 
the representatives in the form of a polynomial $P(\a^\dag,\a)$ 
in $\a$ and $\a^\dag$---if they exist---following 
the approach of \cite{xu2024lettingtigercagebosonic}.
To narrow down the space of polynomials we are looking for, 
we start by noticing that each monomial $\bx^{\bi}\by^{\bj}$ 
in $P(\bx,\by)$ should satisfy $\bi-\bj\in\ker{H}$.

\begin{proposition}
   Let $P(\a^\dag,\a)=\sum_{\substack{\bi,\bj\\ \bi-\bj\notin\ker{H}}}\lambda_{\bi,\bj}\a^{\dag\bi}\a^{\bj}$
   be a logical operator representative. 
   Then, $P(\a^\dag,\a)\!\restriction_{\cT(G,H,\balpha,\bDelta)}=0$.
\end{proposition}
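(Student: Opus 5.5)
The argument rests on one observation: each monomial $\a^{\dag\bi}\a^{\bj}$ shifts the Fock label by exactly $\bi-\bj$. If $H(\bi-\bj)\neq0$, it therefore maps the $\n$-constraint sector $\{H\bn=\bDelta\}$ into the orthogonal sector $\{H\bn=\bDelta+H(\bi-\bj)\}$. The full operator $P(\a^\dag,\a)$ then sends the codespace into a direct sum of sectors orthogonal to $\ker{\n\cdot H-\bDelta}$, while being a logical operator forces its image back into the codespace. The only vector in both is $0$.

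First I would check the shift on the orthonormal $X$ basis of \cref{theorem:basis}. For $\bk\in\N^m$ we have
\begin{equation}
   \a^{\dag\bi}\a^{\bj}\ket{\bk}=c_{\bi,\bj,\bk}\ket{\bk-\bj+\bi},
\end{equation}
with $c_{\bi,\bj,\bk}=0$ when $\bk-\bj\notin\N^m$. Equivalently, $(\bh_l\cdot\n)\a^{\dag\bi}\a^{\bj}=\a^{\dag\bi}\a^{\bj}(\bh_l\cdot\n+\bh_l\cdot(\bi-\bj))$ on the relevant domain. Hence, for a code state $\ket{\psi}$ with $H\n\ket{\psi}=\bDelta\ket{\psi}$, the vector $\a^{\dag\bi}\a^{\bj}\ket{\psi}$ is supported on $\{\bn:H\bn=\bDelta+H(\bi-\bj)\}$. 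To avoid domain issues, I would first work with a basis vector $\ket{\balpha_{[\bn]}}$ (or any finite combination of them). Its Fock coefficients decay like those of a coherent state, so $\a^{\dag\bi}\a^{\bj}\ket{\balpha_{[\bn]}}$ is well defined termwise.

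Next, I would group the finitely many monomials of $P$ by the value $\bDelta'=\bDelta+H(\bi-\bj)\neq\bDelta$. This writes
\begin{equation}
   P(\a^\dag,\a)\ket{\balpha_{[\bn]}}=\sum_{\bDelta'\neq\bDelta}\ket{v_{\bDelta'}},
\end{equation}
a finite sum of vectors with $\ket{v_{\bDelta'}}\in\operatorname{ran}\indicator{\{H\n=\bDelta'\}}$. In particular $\Pi_H P(\a^\dag,\a)\ket{\balpha_{[\bn]}}=0$, because $\Pi_H=\indicator{\{H\n=\bDelta\}}$ is orthogonal to every $\indicator{\{H\n=\bDelta'\}}$ with $\bDelta'\neq\bDelta$. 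But $P(\a^\dag,\a)$ is assumed to be a logical operator, so this image lies in $\cT(G,H,\balpha,\bDelta)\subset\operatorname{ran}\Pi_H$. Therefore $P(\a^\dag,\a)\ket{\balpha_{[\bn]}}=\Pi_HP(\a^\dag,\a)\ket{\balpha_{[\bn]}}=0$ for every basis vector.

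The remaining step, which I expect to be the main (though mild) obstacle, is extending this vanishing from the basis to the whole codespace, since $P(\a^\dag,\a)$ is unbounded. By linearity, $P(\a^\dag,\a)$ vanishes on $\Span{\ket{\balpha_{[\bn]}}}$. For a general code state in the domain of $P(\a^\dag,\a)$, I would instead run the same sector argument directly on $\ket{\psi}$. The commutation relation with $\bh_l\cdot\n$ holds on the domain, so $P(\a^\dag,\a)\ket{\psi}\in\bigoplus_{\bDelta'\neq\bDelta}\operatorname{ran}\indicator{\{H\n=\bDelta'\}}$, which is orthogonal to $\operatorname{ran}\Pi_H\supset\cT(G,H,\balpha,\bDelta)$. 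This avoids any closure argument: the conclusion holds wherever $P(\a^\dag,\a)$ is defined on the codespace, and the restriction is $0$.
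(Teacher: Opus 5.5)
Your proof is correct and follows the same route as the paper: each monomial $\a^{\dag\bi}\a^{\bj}$ with $H(\bi-\bj)\neq0$ moves a vector supported on $\{H\bk=\bDelta\}$ into sectors orthogonal to $\operatorname{ran}\Pi_H\supset\cT(G,H,\balpha,\bDelta)$, so being logical forces $P(\a^\dag,\a)\ket{\balpha_{[\bn]}}=0$. Your final step, which runs the sector argument directly on an arbitrary code state in the domain, is a useful addition: the paper only checks the $X$ basis and leaves implicit the extension to the whole codespace, which is not automatic for this unbounded operator.
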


\begin{proof}
   Assume that $P(\a^\dag,\a)$ is a logical operator.
   Then, for any $[\bn]\in K_H^{\bDelta}/\!\!\sim_G$, 
   if \break $P(\a^\dag,\a)\ket{\balpha_{[\bn]}}\neq0$, then
   \begin{equation}
      P(\a^\dag,\a)\ket{\balpha_{[\bn]}} \notin 
      \overline{\Span{\ket{\bk}:\bk\in\N^m,H\bk=\bDelta}}.
   \end{equation}
   But, by \cref{def:logop}, 
   $P(\a^\dag,\a)\ket{\balpha_{[\bn]}}\in\cT(G,H,\balpha,\bDelta)\subset \overline{\Span{\ket{k}:k\in\N^m,Hk=\bDelta}}$,
   which is absurd.
\end{proof}

Thus, by linearity, we shall only consider polynomials 
$P(\a^\dag,\a)$ of the following form 
\begin{equation}
   P(\a^\dag,\a) = 
   \sum_{\substack{\bi,\bj\\ \bi-\bj\in\ker{H}}}\lambda_{\bi,\bj}\a^{\dag\bi}\a^{\bj}.
\end{equation}
The algorithm presented in \cite{xu2024lettingtigercagebosonic} rely 
on the following computation.
For a monomial $\a^{\dag\bi}\a^{\bj}$ such that 
$\bj-\bi\in[\bn]\subset\ker{H}$, 
for all $[\varphi]\in H^1(C_\bullet)$,
\begin{equation}\label{eq:appXintuition}
   \begin{aligned}
      \a^{\dag\bi}\a^{\bj}\ket{\omega_{[\varphi]}}
      &=\a^{\dag\bi}\a^{\bj}\varphi(\n-\bn_0)\Pi_{H}\ket{\balpha}\\
      &=\varphi(\n-\bn_0-(\bi-\bj))\indicator{\{H(\n-(\bi-\bj))=\bDelta\}}\a^{\dag\bi}\a^{\bj}\ket{\balpha}\\
      &=\varphi(\bj-\bi)\varphi(\n-\bn_0)\Pi_{H}\a^{\dag\bi}\a^{\bi}\balpha^{\bj-\bi}\ket{\balpha}\\
      &=\varphi(\bn)\varphi(\n-\bn_0)\Pi_{H}\bi!\binom{\n}{\bi}\balpha^{\bj-\bi}\ket{\balpha}\\
      &=\varphi(\bn)\balpha^{\bj-\bi}\bi!\binom{\n}{\bi}\ket{\omega_{[\varphi]}}.
   \end{aligned}
\end{equation}
So, if there exists a linear combination $\sum_{\bi}\lambda_{\bi}\binom{\n}{\bi}$ 
such that this operator has a trivial action on the codespace, then 
$P(\a^\dag,\a)=\sum_{\bi}\frac{\lambda_{\bi}}{\balpha^{\bj-\bi}\bi!}\binom{\n}{\bi}$ is a 
representative of the approximate X logical Pauli operator $[\tilde{X}_{[\bn]}]$.
Thus, the problem boils down to finding 
a linear combination $\sum_{\bi}\lambda_{\bi}\binom{\n}{\bi}$ 
with trivial action on the codespace. This can be done using 
polynomial ideal reduction methods such as 
Gröbner basis algorithms.
The next lemma and proposition formalise this construction.

\begin{lemma}
   For any $[\bn]\in H_1(C_\bullet)$, the set
   \begin{equation}
      \I_{[\bn]}:=\left\{\sum_{\substack{\bi,\bj\\ \bj-\bi\in[\bn]}}
      \mu_{\bi,\bj}\binom{\bx}{\bi}:\mu_{\bi,\bj}\in\C \text{ with finite support}\right\}
   \end{equation}
   is a polynomial ideal in $\C[x_1,\ldots,x_m]$.
\end{lemma}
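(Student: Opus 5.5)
The plan is to describe $\I_{[\bn]}$ as a linear span indexed by a set of multi-indices, and then to show that this index set is upward closed for the componentwise order. Closure under multiplication by each variable $x_l$ will then follow from the Pascal-type identity for binomial polynomials.

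\emph{Step 1 (reformulation).} Set
\begin{equation}
   S_{[\bn]}:=\{\bi\in\N^m : \exists\, \bj\in\N^m,\ \bj-\bi\in[\bn]\}=\{\bi\in\N^m : (\bi+[\bn])\cap\N^m\neq\varnothing\},
\end{equation}
where $[\bn]=\bn+\im{G}\subset\ker{H}$ is viewed as a coset in $\Z^m$. The coefficient attached to $\binom{\bx}{\bi}$ in a finite sum $\sum\mu_{\bi,\bj}\binom{\bx}{\bi}$ is an arbitrary complex number, and it is allowed to be nonzero exactly when $\bi\in S_{[\bn]}$. Hence
\begin{equation}
   \I_{[\bn]}=\Span{\tbinom{\bx}{\bi}:\bi\in S_{[\bn]}}.
\end{equation}
In particular $\I_{[\bn]}$ is a $\C$-vector subspace of $\C[x_1,\ldots,x_m]$. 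It is $\{0\}$ if $S_{[\bn]}=\varnothing$, which trivially gives an ideal.

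\emph{Step 2 (upward closure).} Let $\bi\in S_{[\bn]}$, witnessed by $\bj=\bi+\bk$ with $\bk\in[\bn]$ and $\bj\geq0$. Take any $\bi'\geq\bi$ componentwise. Then $\bj':=\bi'+\bk\geq\bj\geq0$ and $\bj'-\bi'=\bk\in[\bn]$. Hence $\bi'\in S_{[\bn]}$.

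\emph{Step 3 (stability under $x_l$).} Since $\binom{\bx}{\bi}=\prod_j\binom{x_j}{i_j}$, the univariate identity
\begin{equation}
   x\binom{x}{k}=(k+1)\binom{x}{k+1}+k\binom{x}{k}
\end{equation}
gives
\begin{equation}
   x_l\binom{\bx}{\bi}=(i_l+1)\binom{\bx}{\bi+\be_l}+i_l\binom{\bx}{\bi}.
\end{equation}
By Step 2 both indices $\bi$ and $\bi+\be_l$ lie in $S_{[\bn]}$, so $x_l\I_{[\bn]}\subset\I_{[\bn]}$ for every $l$.

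\emph{Step 4 (conclusion).} By iteration, $\I_{[\bn]}$ is stable under multiplication by every monomial. By linearity it is then stable under multiplication by every element of $\C[x_1,\ldots,x_m]$. Being also an additive subgroup, it is an ideal.

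The only step requiring care is Step 2, namely recognising that the defining condition depends on $\bi$ only through the upward-closed set $S_{[\bn]}$. Once this is seen, the rest is routine.
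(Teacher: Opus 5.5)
Your proof is correct and follows essentially the same route as the paper. Both rest on the observation that $\bj-\bi\in[\bn]$ implies $(\bj+\bp)-(\bi+\bp)\in[\bn]$ for every $\bp\in\N^m$, together with the fact that multiplying $\binom{\bx}{\bi}$ by a monomial lands in the span of the $\binom{\bx}{\bi+\bp}$. The only differences are cosmetic: the paper expands $x^{k}$ in the basis $Q_p^{(q)}$ satisfying $\binom{x}{q}Q_p^{(q)}(x)=\binom{x}{q+p}$, whereas you iterate the one-step identity $x\binom{x}{k}=(k+1)\binom{x}{k+1}+k\binom{x}{k}$, and you state explicitly the description of $\I_{[\bn]}$ as a span over an upward-closed index set, which the paper uses only implicitly.
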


\begin{proof}
   Stability by addition of $\I_{[\bn]}$ is clear. To prove stability
   by multiplication of any polynomial,
   it is sufficient to prove that given 
   $\binom{\bx}{\bi}\in\I_{[\bn]}$
   and any monomial $\bx^k$ for 
   $\bk\in\N^m,\bx^{\bk}\binom{\bx}{\bi}\in\I_{[\bn]}$.
   For any integer $n\in\N$, define the polynomials
   \begin{equation}
      Q_p^{(q)}(x):=\left\{\begin{aligned}
         &1 &\text{if }p=0 \\
         &\frac{x-q}{q+1}\ldots \frac{x-(q+p-1)}{q+p} & \text{otherwise}
      \end{aligned}\right.
   \end{equation}
   in $\C[x]$. They form a basis of $\C[x]$ and $\binom{x}{q}Q_p^{(q)}(x)=\binom{x}{q+p}$.
   Thus, we can write
   \begin{equation}
      \begin{aligned}
         \bx^{\bk}\binom{\bx}{\bi}&=x_1^{k_1}\binom{x_1}{i_1}\ldots x_m^{k_m}\binom{x_m}{i_m}\\
         &=\sum_{p_1}\lambda_{p_1}^{(1)}Q_{p_1}^{(i_1)}(x)\binom{x_1}{i_1}\ldots
         \sum_{p_m}\lambda_{p_m}^{(m)}Q_{p_m}^{(i_m)}(x)\binom{x_m}{i_m}\\
         &=\sum_{p_1}\lambda_{p_1}^{(1)}\binom{x_1}{i_1+p_1}\ldots
         \sum_{p_m}\lambda_{p_m}^{(m)}\binom{x_m}{i_m+p_m}\\
         &=\sum_{p_1,\ldots,p_m}\lambda_{p_1}^{(1)}\ldots \lambda_{p_m}^{(m)}\binom{\bx}{\bi+\bp}
      \end{aligned}
   \end{equation}
   where the $\lambda_{p_1}^{(1)},\ldots,\lambda_{p_m}^{(m)}$'s 
   are the coefficients of $x_1^{k_1},\ldots,x_m^{k_m}$ in the basis \break
   $\{Q_{p_1}^{(i_1)}(x)\}_{p_1\in\N},\ldots,\{Q_{p_m}^{(i_m)}(x)\}_{p_m\in\N}$, respectively.
   Finally, as $\binom{\bx}{\bi}\in\I_{[\bn]}$, there exists $\bj\in\N^m$
   such that $\bj-\bi\in[\bn]$. Hence, as $\bj+\bp-(\bi+\bp)\in[\bn]$ for all 
   $\bp\in\N^m$, $\binom{\bx}{\bi+\bp}\in\I_{[\bn]}$ and 
   thus $\bx^{\bk}\binom{\bx}{\bi}\in\I_{[\bn]}$.
\end{proof}

\begin{proposition}\label{prop:findingXlog}
   Consider the polynomial ideal  
   \begin{equation}
      \J:=\langle\{\bh_k\cdot\bx-\Delta_k:k\in\llbracket1,s\rrbracket\}\rangle
   \end{equation}
   of $\C[x_1,\ldots,x_m]$, where $\bh_k$ is the $k$-th row of $H$.
   For $[\bn]\in H_1(C_\bullet)$, if $1\in\I_{[\bn]}+\J$, then there exists a representative of 
   $[\tilde{X}_{[\bn]}]$ taking the form of a polynomial in $\a$ and $\a^\dag$.
\end{proposition}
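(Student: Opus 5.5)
The plan is to turn the identity $1\in\I_{[\bn]}+\J$ into an explicit polynomial operator, using the computation in \cref{eq:appXintuition}. Membership gives a decomposition
\begin{equation}
   1=\sum_{\substack{\bi,\bj\\ \bj-\bi\in[\bn]}}\mu_{\bi,\bj}\binom{\bx}{\bi}+\sum_{k=1}^{s}q_k(\bx)\,(\bh_k\cdot\bx-\Delta_k),
\end{equation}
with finitely many nonzero $\mu_{\bi,\bj}$ and polynomials $q_k\in\C[x_1,\ldots,x_m]$. I would then define the candidate
\begin{equation}
   P(\a^\dag,\a):=\sum_{\bi,\bj}\frac{\mu_{\bi,\bj}}{\balpha^{\bj-\bi}\,\bi!}\,\a^{\dag\bi}\a^{\bj}.
\end{equation}
This is well defined because $\balpha$ has no zero entry.

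Next I would check the defining relation \cref{eq:appXlogical_def}. Fix $[\varphi]\in H^1(C_\bullet)$. Each pair satisfies $\bj-\bi\in[\bn]\subset\ker H$, so \cref{eq:appXintuition} applies term by term and gives
\begin{equation}
   P(\a^\dag,\a)\ket{\omega_{[\varphi]}}=\varphi(\bn)\sum_{\bi,\bj}\mu_{\bi,\bj}\binom{\n}{\bi}\ket{\omega_{[\varphi]}}.
\end{equation}
Here $\varphi(\bj-\bi)=\varphi(\bn)$ because $\varphi$ is trivial on $\im G$. By the decomposition of $1$, the operator $\sum\mu_{\bi,\bj}\binom{\n}{\bi}$ equals $I-\sum_k q_k(\n)(\bh_k\cdot\n-\Delta_k)$, as a functional-calculus identity that is diagonal in the Fock basis. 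Since $\ket{\omega_{[\varphi]}}=\varphi(\n-\bn_0)\Pi_H\ket{\balpha}$ is supported on $\{\bk:H\bk=\bDelta\}$, the second term annihilates it. Hence $P(\a^\dag,\a)\ket{\omega_{[\varphi]}}=\varphi(\bn)\ket{\omega_{[\varphi]}}$.

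The remaining requirement is boundedness on the codespace. Following the standing restriction of this subsection to finite-dimensional codespaces, the code is finite dimensional, so boundedness is automatic once $P(\a^\dag,\a)$ maps the codespace into itself. This is where I expect the main obstacle.

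\begin{itemize}
   \item Each $\ket{\omega_{[\varphi]}}$ lies in the domain, because coherent states, and their projections onto a finite-dimensional Fock support, lie in the domain of any polynomial in $\a,\a^\dag$.
   \item $P$ maps each $\ket{\omega_{[\varphi]}}$ to a multiple of itself, and the $\ket{\omega_{[\varphi]}}$ span the codespace. In finite dimension the span is already closed, so totality from \cref{theorem:diagram} gives a spanning set.
   \item It follows that $P$ preserves the codespace, acts there as a linear map on a finite-dimensional space, and is therefore bounded.
\end{itemize}

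Should the finite-dimensional assumption be dropped, I would instead try to bound $P$ through the $X$ basis, using $\a^{\dag\bi}\a^{\bj}\ket{\balpha_{[\bk]}}$. I expect this to fail in general, which is why I rely on the finite-dimensional setting.
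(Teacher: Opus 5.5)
Your proof is correct and follows the paper's argument. It uses the same decomposition of $1$ and the same operator $P(\a^\dag,\a)$, and applies \cref{eq:appXintuition} term by term before killing the $\bh_k\cdot\n-\Delta_k$ terms on $\ket{\omega_{[\varphi]}}$; your boundedness discussion just makes explicit the subsection's standing finite-dimensionality assumption, which the paper uses implicitly. One small correction: $\Pi_H$ generally has infinite Fock support even when the code is finite-dimensional (e.g.\ the paircat code), so $\ket{\omega_{[\varphi]}}$ lies in the domain of every polynomial in $\a,\a^\dag$ because its Fock coefficients are dominated in modulus by those of $\ket{\balpha}$, not because its support is finite.
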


\begin{proof}
   For a monomial $\a^{\dag\bi}\a^{\bj}$ such that $\bj-\bi\in[\bn]\subset\ker{H}$, 
   for all $[\varphi]\in H^1(C_\bullet)$, \cref{eq:appXintuition} gives 
   \begin{equation}
      \a^{\dag\bi}\a^{\bj}\ket{\omega_{[\varphi]}}=
      \varphi(\bn)\balpha^{\bj-\bi}\bi!\binom{\n}{\bi}\ket{\omega_{[\varphi]}}
   \end{equation}
   So, if $1\in\I_{[\bn]}+\J$, there exists a decomposition
   \begin{equation}
      1=\sum_{\substack{\bi,\bj\\ \bj-\bi\in[\bn]}}\mu_{\bi,\bj}
      \binom{\bx}{\bi}+\sum_{k=1}^{s}Q_k(\bx)(\bh_k\cdot\bx-\Delta_k)
   \end{equation}
   and, taking the polynomial 
   \begin{equation}
      P(\a^\dag,\a) = 
      \sum_{\substack{\bi,\bj\\ \bj-\bi\in[n]}}
      \frac{\mu_{\bi,\bj}}{\balpha^{\bj-\bi}\bi!}\a^{\dag\bi}\a^{\bj},
   \end{equation}
   we get 
   \begin{equation}
      \begin{aligned}
         P(\a^\dag,\a)\ket{\omega_{[\varphi]}}
         &=\sum_{\substack{\bi,\bj\\ \bj-\bi\in[\bn]}}\mu_{\bi,\bj}\varphi(\bn)\binom{\n}{\bi}\ket{\omega_{[\varphi]}}\\
         &=\varphi(\bn)\left(1-\sum_{k=1}^{s}Q_k(\n)(\bh_k\cdot\n-\Delta_k)\right)\ket{\omega_{[\varphi]}}\\
         &=\varphi(\bn)\ket{\omega_{[\varphi]}}
      \end{aligned}
   \end{equation}
   as $\ket{\omega_{[\varphi]}}\in\cT(G,H,\balpha,\bDelta)\subset \bigcap_{k\in\llbracket1,s\rrbracket}\ker{\bh_k\cdot\n-\Delta_k}$.
   Hence, $P(\a^\dag,\a)$ is a representative of $[\tilde{X}_{[\bn]}]$.
\end{proof}

Note that the above proposition only gives a sufficient condition on representatives of 
$X$ logical Pauli operators taking the form of a polynomial in $\a$ and $\a^\dag$.

\subsection{Z rotations: non-Clifford gates}\label{sec:Zrotations}

In this section, we look for logical operators admitting 
representatives of the form $e^{2i\pi P(\n)}$, where 
$P\in\R[x_1,\ldots,x_m]$. By \cref{prop:ZPaulilog}, 
every logical $Z$ Pauli operator admits such a representative, 
but these operators do not exhaust all possibilities. 

Importantly, since $e^{2i\pi P(\n)}$ is bounded, its restriction to 
the codespace is also bounded. Consequently, throughout 
this section it suffices to prove identities on an 
orthonormal basis of the codespace; they then extend to 
the whole codespace by continuity. This consideration is important 
whenever the codespace is infinite-dimensional.

The condition that $e^{2i\pi P(\n)}$ preserve the code 
can be recast as a condition on the polynomial $P$.

\begin{proposition}\label{prop:periodicity_of_P}
   $e^{2i\pi P(\n)}$ preserves the codespace if and only if
   \begin{equation} \label{eq:P_logical}
      \forall \bn\in K_H^{\bDelta}, \forall \bg \in \im{G},
      \quad \bn+\bg\in\N^m\Rightarrow P(\bn+\bg)-P(\bn) \in \Z.
   \end{equation}
\end{proposition}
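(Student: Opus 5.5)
The plan is to work entirely in the orthonormal $X$ basis of \cref{theorem:basis}. Every operator here is diagonal in the Fock basis, and $e^{2i\pi P(\n)}$ is unitary, hence bounded. So preserving the codespace is equivalent to mapping each basis vector $\ket{\balpha_{[\bn]}}$, $[\bn]\in K_H^{\bDelta}/\!\!\sim_G$, back into $\cT(G,H,\balpha,\bDelta)$; the general case then follows by linearity and continuity.

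First, compute
\begin{equation}
   e^{2i\pi P(\n)}\ket{\balpha_{[\bn]}}=\frac{e^{-|\balpha|^2/2}}{A_{[\bn]}}\sum_{\bk\in[\bn]}e^{2i\pi P(\bk)}\frac{\balpha^{\bk}}{\sqrt{\bk!}}\ket{\bk}.
\end{equation}
This vector is supported on the single class $[\bn]\subset K_H^{\bDelta}$. Since the classes partition $K_H^{\bDelta}$ and the basis vectors have disjoint Fock supports, it lies in the codespace if and only if it is a scalar multiple of $\ket{\balpha_{[\bn]}}$. This follows from the characterisation of $\cK$ (the Remark after \cref{prop:basis}): coefficients $\psi_{\bk}$ of a code state on a single class must satisfy $\psi_{\bk'}=\Gamma(\bk,\bk')\psi_{\bk}$. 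Applied to the coefficients $e^{2i\pi P(\bk)}\balpha^{\bk}/\sqrt{\bk!}$, and using the relation $\Gamma(\bk,\bk')=\sqrt{\bk!/\bk'!}\,\balpha^{\bk'-\bk}$ that the unrotated coefficients already satisfy, the condition becomes $e^{2i\pi P(\bk')}=e^{2i\pi P(\bk)}$ for all $\bk\sim_G\bk'$ in $K_H^{\bDelta}$. Here we use that $\balpha$ has no zero entry, so the coefficients never vanish. Equivalently, $P(\bk')-P(\bk)\in\Z$.

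Finally, I would rewrite this as \cref{eq:P_logical}. Pairs $\bk\sim_G\bk'$ in $K_H^{\bDelta}$ are exactly pairs $(\bn,\bn+\bg)$ with $\bn\in K_H^{\bDelta}$, $\bg\in\im{G}$ and $\bn+\bg\in\N^m$. Indeed, $H(\bn+\bg)=\bDelta$ holds automatically because $\im{G}\subset\ker{H}$, so $\bn+\bg\in K_H^{\bDelta}$ whenever it lies in $\N^m$.

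Both directions follow from this chain of equivalences. For ``if'', each basis vector is mapped to itself times the phase $e^{2i\pi P(\bn)}$. For ``only if'', membership forces proportionality, which forces constant phase on each class. The only delicate point is justifying that a vector supported on one class lies in the code only if it is proportional to $\ket{\balpha_{[\bn]}}$. This is direct from the explicit description of $\cK$ together with the $\n$-type projector $\Pi_H$.
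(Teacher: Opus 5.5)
Your proof is correct and takes essentially the same route as the paper. You compute the action of $e^{2i\pi P(\n)}$ on each basis vector $\ket{\balpha_{[\bn]}}$, observe that the result stays supported on the class $[\bn]$, and conclude that membership in the code forces proportionality to $\ket{\balpha_{[\bn]}}$, hence a constant phase $e^{2i\pi P(\bk)}$ on each class. Your use of the $\Gamma$-relation from the description of $\cK$ just makes explicit the proportionality step that the paper derives from disjoint Fock supports.
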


\begin{proof}
   For any $[\bn]\in K_H^{\bDelta}/\!\!\sim_G$,
   \begin{equation}
      e^{2i\pi P(\n)} \ket{\balpha_{[\bn]}} 
      = \frac{e^{-|\balpha|^2/2}}{A_{[\bn]}}
      \sum_{\bk\in[\bn]} \frac{\balpha^{\bk}}{\sqrt{\bk!}} 
      e^{2i\pi P(\bk)}\ket{\bk}.
   \end{equation}
   If $P$ satisfies \cref{eq:P_logical}, then 
   $e^{2i\pi P(\n)} \ket{\balpha_{[\bn]}}=e^{2i\pi P(\bn)}\ket{\balpha_{[\bn]}}$, 
   and thus $e^{2i\pi P(\n)}$ preserves the code.

   If $e^{2i\pi P(\n)}$ preserves the codespace, then 
   $e^{2i\pi P(\n)}\ket{\balpha_{[\bn]}}\in\cT(G,H,\balpha,\bDelta)$. 
   Remark that \break $e^{2i\pi P(\n)} \ket{\balpha_{[\bn]}}$ has the same 
   support as $\ket{\balpha_{[\bn]}}$ in the Fock basis.
   So, as the $\ket{\balpha_{[\bn]}}$'s form an orthonormal basis 
   of $\cT(G,H,\balpha,\bDelta)$ and that they have disjoint 
   support in the Fock basis, this implies that 
   $e^{2i\pi P(\n)} \ket{\balpha_{[\bn]}}$ must be proportional to 
   $\ket{\balpha_{[\bn]}}$. 
   Hence, by uniqueness of the decomposition in the Fock basis,
   \begin{equation}
      \forall \bk \in [\bn], e^{2i\pi P(\bk)}=e^{2i\pi P(\bn)}.
   \end{equation}
   As this condition holds for all $[\bn]\in K_H^{\bDelta}/\!\!\sim_G$, we get
   the desired characterisation.
\end{proof}

Remark that instead of $\im{G}$ in \cref{eq:P_logical}, one could have 
used a Markov basis $\{\bb_1\ldots,\bb_M\}$ of $G$; see \cref{def:Markov_basis}. 
In fact, for all $\bn\in K_H^{\bDelta}$,
\begin{equation}
   \begin{aligned}
      &\left(\forall\bg\in\im{G}, \bn+\bg\in\N^m\Rightarrow 
      P(\bn+\bg)-P(\bn) \in\Z\right)\\
      \Leftrightarrow 
      &\quad\left(\forall i\in\llbracket1,M\rrbracket, 
      \bn+\bb_i\in\N^m\Rightarrow P(\bn+\bb_i)-P(\bn) \in \Z\right).
   \end{aligned}
\end{equation} 
While the left to right implication is clear, the right to left one 
results from the fact that there exists a path 
$\bn\to \bn+\bb_{i_1}\to\ldots\to \bn+\bb_{i_1}+\ldots+\bb_{i_N}=\bn+\bg$
in $\N^m$ and 
\begin{equation}
   \begin{aligned}
      P(\bn+\bg)-P(\bn)=&\left(P(\bn+\bg)-P(\bn+\bb_{i_1}+\ldots+\bb_{i_{N-1}})\right)+\ldots\\
      &+\left(P(\bn+\bb_{i_1}+\bb_{i_2})-P(\bn+\bb_{i_1})\right)+\left(P(\bn+\bb_{i_1})-P(\bn)\right)\in\Z.
   \end{aligned}
\end{equation}

For the remainder of this section, we mainly focus on positive 
Tiger codes, for which the polynomials satisfying 
\cref{eq:P_logical} can be partially characterised. 
In \cref{theorem:Zrotations}, under a technical assumption 
on the Smith decomposition of the matrix $G$ and assuming 
the absence of logical rotors, we obtain a structural necessary 
condition on admissible polynomials. In the logical one-qubit case, 
this condition becomes rigid enough to determine the  
polynomial of minimal degree required to implement a prescribed logical 
$Z$-rotation; see \cref{prop:qubitZrotation}. We return to 
non-positive codes at the end of the section, in 
\cref{ex:binomial_code3}.

The proof of \cref{theorem:Zrotations} proceeds in two steps. 
We first study univariate polynomials $P\in\R[x]$ such that, 
for a fixed $a\in\N$, the finite difference $P(x+a)-P(x)$ takes 
integer values on the integers. This is the purpose of 
\cref{lemma:integer_valued_polynomial,prop:mono_P_characterisation}. 
We then spell out the two assumptions on the Tiger code and 
their consequences for the proof of \cref{theorem:Zrotations} 
in \cref{lemma:infinite_many_constraints,lemma:fixed_last_components}.

Given a positive integer $a\in\N$, we introduce the difference
operator $\delta_a$ defined over $\R[x]$ as 
\begin{equation}
   \begin{aligned}
      \delta_a:\R[x]&\to\R[x]\\
      P&\mapsto P(x+a)-P(x).
   \end{aligned}
\end{equation}
By default, $\delta:=\delta_1$.

\begin{lemma}\label{lemma:integer_valued_polynomial}
   For any integer $p\in \Z$, 
   $\left\{\binom{x}{n}\right\}_{n\in\N}$ forms a generating set of the group
   \begin{equation}
      \left\{R\in\R[x]: R(\pm\N_{\geq p})\subset\Z\right\}.
   \end{equation}
\end{lemma}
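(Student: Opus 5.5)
We need to show two things. First, each $\binom{x}{n}$ belongs to the group. Second, every real polynomial $R$ taking integer values on $p+\N$ (resp. on $-(p+\N)$, reading $\pm\N_{\geq p}$ as one of these two sets, or their union) is an integer combination of the $\binom{x}{n}$.

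Membership is the classical fact that $\binom{k}{n}\in\Z$ for every $k\in\Z$. For $k\geq0$ it is a binomial coefficient, possibly $0$. For $k<0$ we use $\binom{k}{n}=(-1)^n\binom{n-k-1}{n}$. Hence each $\binom{x}{n}$ is integer-valued on all of $\Z$, and in particular on $\pm\N_{\geq p}$. Closure under addition and negation is clear, so the set is a group.

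For the generating property, take $R$ of degree $d$ with $R(\pm\N_{\geq p})\subset\Z$. Since $\deg\binom{x}{n}=n$, the family $\{\binom{x}{n}\}_{n\leq d}$ is a basis of $\R_d[x]$, so we can write $R=\sum_{n=0}^d c_n\binom{x}{n}$ with $c_n\in\R$. We must show $c_n\in\Z$.

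The tool is the forward difference operator $\delta$. We have $\delta\binom{x}{n}=\binom{x}{n-1}$ (Pascal's rule), so $c_n=(\delta^nR)(x_0)-\sum_{j>n}c_j\binom{x_0}{j-n}$ for any point $x_0$. A clean way to avoid this triangular bookkeeping is to translate. Set $S(x):=R(x+p)$ in the $+$ case, or $S(x):=R(-x-p)$ in the $-$ case. Then $S(\N)\subset\Z$, and it suffices to treat $p=0$ on $\N$: once $S=\sum c'_n\binom{x}{n}$ with $c'_n\in\Z$, we recover $R(x)=S(x-p)$ or $R(x)=S(-x-p)$. The Vandermonde identity
\[
\binom{x-p}{n}=\sum_{k}\binom{x}{k}\binom{-p}{n-k}
\]
has integer coefficients. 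The reflection identity $\binom{-x-p}{n}=(-1)^n\binom{x+p+n-1}{n}$, expanded by Vandermonde again, also has integer coefficients. So $R$ is an integer combination of the $\binom{x}{k}$.

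For the case $p=0$ on $\N$, evaluate at $x=0,1,\ldots,d$. Newton's forward formula gives $c_n=(\delta^nS)(0)=\sum_{k=0}^n(-1)^{n-k}\binom{n}{k}S(k)$, which is an integer combination of the integers $S(0),\ldots,S(n)$. Equivalently, solve the unitriangular system $S(k)=\sum_{n\leq k}c_n\binom{k}{n}$ by induction on $k$: the diagonal coefficient is $1$, so each $c_k$ is $S(k)$ minus an integer combination of earlier $c_n$.

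I expect the main care point to be the translation and reflection step rather than the Newton argument. One has to check that shifting by $p$, which may be negative, and reflecting both preserve integrality of the coefficients in the binomial basis. Both identities above have integer coefficients because $\binom{m}{j}\in\Z$ for all $m\in\Z$, so this settles it.
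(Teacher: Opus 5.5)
Your proof is correct and follows the same route as the paper: you use Newton's forward-difference formula (the paper's Jordan identity) $c_n=(\delta^n S)(0)$ for polynomials integer-valued on $\N$, then a translation or reflection $x\mapsto \pm(x+p)$ to reduce the general case to that one. Your Vandermonde and reflection identities also justify a step the paper only asserts: that $\binom{x-p}{n}$ and $\binom{-x-p}{n}$ are integer combinations of the $\binom{x}{k}$.
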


\begin{proof}
   First, let us prove that $\left\{R\in\R[x]: R(\N)\subset\Z\right\}$
   is generated by $\left\{\binom{x}{n}\right\}_{n\in\N}$. 
   The inclusion from right to left is obvious. 
   In the other direction, consider a polynomial 
   $R\in\left\{R\in\R[x]: R(\N)\subset\Z\right\}$.
   Using Jordan's identity \cite{Jordan}
   \begin{equation}
      R = \sum_{n=0}^{deg(R)}\delta^nR(0) \binom{x}{n},
   \end{equation}
   since $R(\N)\subset\Z$, $\delta^nR(0) \in \Z$, hence 
   $R \in \left\langle\left\{\binom{x}{n}\right\}_{n\in\N}\right\rangle_{\Z}$.

   Now, given $p\in\Z$ and $Q\in\R[x]$, if $Q(\pm\N_{\geq p})\subset\Z$, then \break
   $P(x):=Q(\pm(x+p)) \in \left\{R\in\R[x]: R(\N)\subset\Z\right\}$. So, 
   $P\in \left\langle\left\{\binom{\pm(x-p)}{n}\right\}_{n\in\N}\right\rangle_{\Z}=\left\langle\left\{\binom{x}{n}\right\}_{n\in\N}\right\rangle_{\Z}$.
\end{proof}

\begin{proposition}\label{prop:mono_P_characterisation}
   Let $a\in\N^*$. Define the family of univariate polynomials
   \begin{equation}
      \forall n\in\N, P_{n+1}^{(a)}(x):=\sum_{k=0}^n \sum_{j=0}^k (-1)^{k-j}\binom{k}{j}\binom{aj}{n}\binom{x/a}{k+1}
   \end{equation} 
   with $P_0^{(a)}(x)=1$.
   For all $p\in\Z$, we have 
   \begin{equation}
      \left\{P\in\R[x]: \delta_aP(\pm\N_{\geq p})\subset\Z\right\}
      =\R+\left\langle \left\{P_n^{(a)}\right\}_{n\geq1}\right\rangle_{\Z}.
   \end{equation}
\end{proposition}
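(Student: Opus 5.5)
The plan is to reduce the statement to \cref{lemma:integer_valued_polynomial} by exploiting two facts about the difference operator $\delta_a$. First, $\delta_a:\R[x]\to\R[x]$ is surjective with kernel exactly $\R$. Second, the polynomials $P_{n+1}^{(a)}$ are designed so that
\begin{equation}
   \delta_a P_{n+1}^{(a)}=\binom{x}{n}\qquad\text{for all }n\in\N.
\end{equation}
Granting these, both inclusions follow quickly. If $\delta_aP(\pm\N_{\geq p})\subset\Z$, then \cref{lemma:integer_valued_polynomial} gives integers $c_n$, finitely many non-zero, with $\delta_aP=\sum_n c_n\binom{x}{n}$. Hence $\delta_a\bigl(P-\sum_n c_nP_{n+1}^{(a)}\bigr)=0$, so $P-\sum_n c_nP_{n+1}^{(a)}\in\ker\delta_a=\R$. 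Conversely, for $P=c+\sum_n c_nP_{n+1}^{(a)}$ with $c\in\R$ and $c_n\in\Z$, we get $\delta_aP=\sum_n c_n\binom{x}{n}$, which takes integer values on all of $\Z$, and in particular on $\pm\N_{\geq p}$.

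For the kernel of $\delta_a$: if $\deg P=d\geq1$, the leading coefficient of $\delta_aP$ is $ad$ times that of $P$, so $\deg\delta_aP=d-1$ and $\delta_aP\neq0$. Hence $\ker\delta_a=\R$. Surjectivity is not actually needed once the explicit preimages $P_{n+1}^{(a)}$ of the basis $\{\binom{x}{n}\}_{n\in\N}$ of $\R[x]$ are available, though it also follows from this degree count by dimension counting on polynomials of bounded degree.

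The core computation is the identity $\delta_aP_{n+1}^{(a)}=\binom{x}{n}$. I would substitute $y=x/a$, so that $\delta_a$ acts as the unit shift $y\mapsto y+1$. Pascal's rule $\binom{y+1}{k+1}-\binom{y}{k+1}=\binom{y}{k}$ then gives
\begin{equation}
   \delta_aP_{n+1}^{(a)}(x)=\sum_{k=0}^n\Bigl(\sum_{j=0}^k(-1)^{k-j}\binom{k}{j}\binom{aj}{n}\Bigr)\binom{x/a}{k}.
\end{equation}
Now set $Q(y):=\binom{ay}{n}$, a polynomial of degree $n$ in $y$. The inner sum is exactly the $k$-th forward difference $\delta^kQ(0)=\sum_{j}(-1)^{k-j}\binom{k}{j}Q(j)$. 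By Jordan's (Newton's) identity, already used in the proof of \cref{lemma:integer_valued_polynomial}, we have $Q(y)=\sum_{k=0}^{n}\delta^kQ(0)\binom{y}{k}$. Evaluating at $y=x/a$ yields $\binom{x}{n}$.

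The only real obstacle is recognising this Newton-expansion structure in the definition of $P_{n+1}^{(a)}$. After that, everything is bookkeeping. The remaining points to check are minor: $P_{n+1}^{(a)}$ has degree $n+1$, so the finite $\Z$-combinations are well defined, and the constant $P_0^{(a)}=1$ is absorbed in the summand $\R$.
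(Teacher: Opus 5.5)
Your proof is correct and follows essentially the paper's route: you reduce $\delta_aP$ to a $\Z$-combination of the $\binom{x}{n}$ via \cref{lemma:integer_valued_polynomial}, then identify $P_{n+1}^{(a)}$ as a $\delta_a$-preimage of $\binom{x}{n}$ through Jordan's (Newton's) forward-difference expansion. The differences are cosmetic: the paper builds $\delta_a^{-1}$ on the basis $\binom{x}{n}_{\!a}$ and derives $P_{n+1}^{(a)}=\delta_a^{-1}\binom{x}{n}$, whereas you verify $\delta_aP_{n+1}^{(a)}=\binom{x}{n}$ directly and use that the kernel of $\delta_a$ is $\R$, which also spells out the reverse inclusion that the paper leaves implicit.
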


Remark that $P_{n}^{(a)}$ is a polynomial of degree $n$. 
In particular, the first three non-constant polynomials are
\begin{equation}\label{eq:3firstPa}
   \begin{aligned}
      &P_1^{(a)}(x)=\frac{x}{a},\\
      &P_2^{(a)}(x)=\frac{1}{2a}x^2-\frac{1}{2}x,\\
      &P_3^{(a)}(x)=\frac{1}{6a}x^3 - \frac{1+a}{4a}x^2 + \frac{a+3}{12}x.
   \end{aligned}
\end{equation}

\begin{proof}
   If $P$ satisfies $\delta_aP(\pm\N_{\geq p})\subset\Z$, according to \cref{lemma:integer_valued_polynomial},
   we have $\delta_aP \in \left\langle\left\{\binom{x}{n}\right\}_{n\in\N}\right\rangle_{\Z}$.
   The question is now how to retrieve $P$ from $\delta_aP$. To do so, we use the following 
   identity 
   \begin{equation}
      \forall n\in\N, \delta_a\binom{x}{n}_{\! a}=\left\{\begin{aligned}
         &a\binom{x}{n-1}_{\! a} & \text{if } n\geq1 \\
         & 0 & \text{otherwise}
      \end{aligned}\right.
   \end{equation} 
   where $\binom{x}{n}_{\! a}:=a^n\binom{x/a}{n}=\frac{x(x-a)\ldots(x-a(n-1))}{n!}$.
   From this identity, we deduce that $\delta_a$ is surjective but not injective. 
   In particular, $\ker{\delta_a}=\R$. 

   So, following Jordan's work \cite{Jordan}, we define the 
   linear operator $\delta_a^{-1}$ for all $n\in\N$ as 
   \begin{equation}
      \delta_a^{-1}\binom{x}{n}_{\! a}:=\frac{1}{a}\binom{x}{n+1}_{\! a}.
   \end{equation}
   This way, for any $P\in\R[x], \delta_a^{-1}\delta_aP=P-P(0)$.
   Because $\delta_aP \in \left\langle\left\{\binom{x}{n}\right\}_{n\in\N}\right\rangle_{\Z}$,
   $P(0)\in\R$ and because $\delta_a^{-1}$ is linear, we have
   \begin{equation}
      P\in \R+\left\langle\left\{\delta_a^{-1}\binom{x}{n}\right\}_{n\in\N}\right\rangle_{\Z}.
   \end{equation}
   It remains to make $\delta_a^{-1}\binom{x}{n}$ explicit.
   Using Jordan's identity, we can decompose $\binom{x}{n}$ in the 
   $\left\{\binom{x}{n}_a\right\}_{n\in\N}$ basis to get 
   \begin{equation}
      \binom{x}{n}=\sum_{k=0}^n \frac{\delta_a^k\binom{x}{n}(0)}{a^k}\binom{x}{k}_a.
   \end{equation}
   Consider $E_a:P(x)\mapsto P(x+a)$. using the operator equality 
   $\delta_a^k=(E_a-1)^k=\sum_{j=0}^k \binom{k}{j}(-1)^{k-j}E_a^j$,
   we obtain
   \begin{equation}
      \binom{x}{n}=\sum_{k=0}^n 
      \frac{1}{a^k}\sum_{j=0}^k (-1)^{k-j}\binom{k}{j}
      \binom{aj}{n}\binom{x}{k}_a.
   \end{equation}
   Hence, 
   \begin{equation}
      \delta_a^{-1}\binom{x}{n} = P_{n+1}^{(a)}(x) 
      = \sum_{k=0}^n \frac{1}{a^{k+1}}\sum_{j=0}^k (-1)^{k-j}\binom{k}{j}\binom{aj}{n}\binom{x}{k+1}_a.
   \end{equation}
\end{proof}

We now state the two technical assumptions on the 
code that are needed to prove \cref{theorem:Zrotations} 
and derive their main consequences.

\begin{lemma}\label{lemma:infinite_many_constraints}
   Consider a positive Tiger code $\cT(G,H, \balpha, \bDelta)$.
   Let 
   \begin{equation}
      LGC = \diag{m\times r}{d_1,\ldots,d_t}=D
   \end{equation} 
   be a Smith decomposition of $G$. Assume that each column of $GC$ is either non-negative or 
   non-positive, and denote by 
   $\sigma_i$ the sign of the $i$-th column. Then, there exists 
   $\bn_*\in\bn_0+\ker H$ such that
   \begin{equation}
      L\bn_* + 
      \begin{pmatrix}
         \sigma_1 \N\\
         \vdots \\
         \sigma_t \N\\
         0_{m-t}
      \end{pmatrix}\subset L(K_H^{\bDelta}).
   \end{equation}
\end{lemma}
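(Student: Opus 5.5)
We work in the coordinates given by $L$. Since $LGC=D$, the $i$-th column of $GC$ is $\bc_i:=L^{-1}d_i\be_i=d_i\bl_{d_i}$ for $i\le t$ (the remaining columns are zero). Each $\bc_i\in\im{G}\subset\ker H$, and $L\bc_i=d_i\be_i$. The target set $L\bn_*+(\sigma_1\N\times\cdots\times\sigma_t\N\times 0_{m-t})$ consists of the vectors $L\bn_*+\sum_i k_i\be_i$ with $\sigma_ik_i\ge0$. Such a vector equals $L(\bn_*+\bx)$ with $\bx=\sum_i k_i\bl_{d_i}$, so the lemma reduces to the following claim: there is $\bn_*\in\bn_0+\ker H$ such that $\bn_*+\sum_i k_i\bl_{d_i}\in\N^m$ whenever $\sigma_ik_i\ge0$. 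Membership in $\bn_0+\ker H$ is automatic, because $\bl_{d_i}\in\ker H$ (this was shown in the proof of \cref{theorem:H1isom}).

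\textbf{Sign structure.} By hypothesis the column $\bc_i=d_i\bl_{d_i}$ has constant sign $\sigma_i$. Since $d_i>0$, the vector $\sigma_i\bl_{d_i}$ is then componentwise non-negative. Consequently, for any choice of $k_i$ with $\sigma_ik_i\ge0$ we get $\sum_ik_i\bl_{d_i}=\sum_i|k_i|\,(\sigma_i\bl_{d_i})\ge0$ componentwise. (When a column is zero its sign is irrelevant, but columns with $i\le t$ are nonzero.)

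\textbf{Choice of $\bn_*$.} It remains to find one point $\bn_*\in(\bn_0+\ker H)\cap\N^m=K_H^{\bDelta}$. Then adding a componentwise non-negative vector keeps us in $\N^m$, and we are done.
\begin{itemize}[label={}]
\item If we only know $\bn_0\in\Z^m$, we use positivity of the code: there is $\bg\in\im G$ with $\bg>0$, so $\bn_*:=\bn_0+\lambda\bg\ge0$ for $\lambda$ large. This is the argument of \cref{prop:iota}.
\item Alternatively, we may use the standing assumption $\cT\ne0$, which gives $K_H^{\bDelta}\ne\varnothing$ directly.
\end{itemize}
Either way, $\bn_*\in K_H^{\bDelta}$ and $\bn_*+\sum_ik_i\bl_{d_i}\in K_H^{\bDelta}$. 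Applying $L$ then gives the inclusion.

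\textbf{Main obstacle.} The only delicate point is the bookkeeping between the columns of $GC$ and those of $L^{-1}$, namely checking that $L$ applied to $\bc_i/d_i$ gives exactly $\be_i$, so that the unit steps $\be_i$ (not merely the steps $d_i\be_i$) are realised. This is handled by using $\bl_{d_i}=\bc_i/d_i$, which is an integer vector with the same sign pattern as $\bc_i$.
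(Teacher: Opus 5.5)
Your proof is correct, and it takes a simpler route than the paper's. The paper does not use the fact that the vectors $\bl_{d_i}=L^{-1}\be_i$ inherit the sign of the corresponding columns of $GC$. Instead it picks $\bn_*\in\bn_0+\ker{H}$ large enough that $\bn_*\geq\sum_{i=1}^t(d_i-1)(L^{-1}\be_i)^-$; this is where it invokes positivity. It then reaches a target point by writing its $i$-th offset as $q_i+\sigma_ip_id_i$. This combines two steps:
\begin{itemize}
\item a bounded step $q_i\bl_{d_i}$ with $q_i\in\llbracket0,d_i-1\rrbracket$, taken with $q_i\geq0$ whatever $\sigma_i$ is, so possibly of the wrong sign and absorbed by the buffer in $\bn_*$;
\item an unbounded step $p_i\,\sigma_i\,GC\be_i$ along the non-negative rescaled column.
\end{itemize}
Your observation that $GC\be_i=d_i\bl_{d_i}$ with $d_i>0$ forces $\sigma_i\bl_{d_i}\geq0$. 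This lets you take unit steps directly in the admissible direction, so no buffer is needed and any point of $K_H^{\bDelta}$ can serve as $\bn_*$.

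What the paper's construction buys is a slightly larger set of reachable offsets when $\sigma_i=-1$: every integer $\leq d_i-1$, not only $-\N$. The statement does not need this. Your route is shorter and makes clear that, beyond the sign hypothesis, the lemma only needs $K_H^{\bDelta}\neq\varnothing$. The standing assumption $\cT(G,H,\balpha,\bDelta)\neq0$ already guarantees this, so positivity is just one way to produce a single non-negative point here.
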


\begin{proof}
   Let $\{\be_i\}_{1\leq i\leq m}$ be the canonical basis of $\R^m$.
   Take $\bn_*\in\bn_0+\ker H$ large enough so that 
   $\bn_*\geq \sum_{i=1}^t(d_i-1)(L^{-1}\be_i)^-$ componentwise.
   Importantly, such an $\bn_*$ exists as $G$ is positive and 
   $\im G \subset \ker H$.
   
   On one hand, by hypothesis, we can consider $\sigma_1\bg_1,\ldots,\sigma_r\bg_r$ the 
   columns of $GC$ where each $\bg_i\geq0$. 
   So, for all $p_i\in\N$, $p_i\bg_i \in \N^m\cap\im G \subset \N^m\cap\ker H$ and thus 
   $\sum_{i=1}^rp_i\bg_i \in \N^m\cap\ker H$.

   On the other hand, the $t$ first columns 
   $\{\bl_{d_1},\ldots,\bl_{d_t}\}$ of $L^{-1}$
   belong to $\ker{H}$ as
   \begin{equation}
      H\bl_{d_i}=HL^{-1}D\frac{\be_i}{d_i}=\frac{1}{d_i}HGC\be_i=0.
   \end{equation}
   So, taking for all $i\in\llbracket1,t\rrbracket$, $q_i\in\llbracket 0,d_i-1 \rrbracket$,
   then $\bn_*+\sum_{i=1}^tq_iL^{-1}\be_i \in\bn_0+\ker H$ and, by definition of $\bn_*$,
   we have that $\bn_*+\sum_{i=1}^tq_iL^{-1}\be_i$ is non-negative.
   Hence, for any family $(p_i)_{1\leq i\leq t}$ in $\N$, we get that 
   \begin{equation}
      \bn_*+\sum_{i=1}^t(q_iL^{-1}\be_i+p_i\bg_i)\in K_H^{\bDelta}.
   \end{equation}
   Finally, using $L\bg_i=\sigma_iLGC\be_i=\sigma_id_i\be_i$ for $i\in\llbracket1,t\rrbracket$, we get
   \begin{equation}
      L(\bn_*+\sum_{i=1}^t(q_iL^{-1}\be_i+p_i\bg_i)) 
      = L\bn_* + \sum_{i=1}^t(q_i+\sigma_ip_id_i)\be_i\in L(K_H^{\bDelta}).
   \end{equation}
   We conclude with
   \begin{equation}
      \begin{pmatrix}
         \sigma_1 \N\\
         \vdots \\
         \sigma_t \N\\
         0_{m-t}
      \end{pmatrix}\subseteq
      \left\{\sum_{i=1}^t(q_i+\sigma_ip_id_i)\be_i : p_i\in\N, 
      q_i \in \llbracket0,d_i-1\rrbracket\right\}
   \end{equation}
   where the equality holds if $\sigma_i=+1$ for all $i\in\llbracket1,t\rrbracket$.
\end{proof}

\begin{lemma}\label{lemma:fixed_last_components}
   Consider a Tiger code $\cT(G,H, \balpha, \bDelta)$ isomorphic to 
   $L^2(H_1(C_\bullet))$ (\cref{theorem:pi}), and let 
   \begin{equation}
      LGC = \diag{m\times r}{d_1,\ldots,d_t}=D
   \end{equation} 
   be a Smith decomposition of $G$. 
   Denote $f=m-t$ and decompose $L$ in a block form 
   \begin{equation}
      L=\begin{pmatrix}
         L_t \\
         L_f
      \end{pmatrix}
   \end{equation} 
   where $L_t$ (resp. $L_f$) are the $t$ first (resp. $f$ last) rows of $L$.
   If the code does not host any logical rotor,
   then $\ker{L_f}=\ker{H}$.
\end{lemma}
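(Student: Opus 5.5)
The plan is to identify $\ker{L_f}$ explicitly with the subgroup spanned by the first $t$ columns of $L^{-1}$, and then to compare this subgroup with $\ker{H}$ using the hypothesis that $H_1(C_\bullet)$ has no free part.

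First, I describe $\ker{L_f}$. Since $L$ is invertible over $\Z$, for $\bx\in\Z^m$ we have $L_f\bx=0$ if and only if $L\bx\in\Z^t\times\{0\}^{f}$. Equivalently, $\bx\in L^{-1}(\Z^t\times\{0\}^f)=\langle \bl_{d_1},\ldots,\bl_{d_t}\rangle_\Z$, where $\bl_{d_i}=L^{-1}\be_i$ as in the proof of \cref{theorem:H1isom}. That proof already shows $H\bl_{d_i}=\frac1{d_i}HGC\be_i=0$. Here the $d_i$ are the nonzero invariant factors, so $t=\operatorname{rank}G$. This gives the inclusion $\ker{L_f}\subseteq\ker{H}$ immediately.

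For the reverse inclusion, I use the hypothesis. By \cref{theorem:pi,theorem:H1isom}, the code hosting no logical rotor means that the free rank of $H_1(C_\bullet)=\ker{H}/\im{G}$ is zero. Hence this finitely generated abelian group is finite. Take $\bx\in\ker{H}$. Then there is an integer $k\geq1$ (for instance the exponent of the group, which divides $d_t$) such that $k\bx\in\im{G}=\im{GC}=L^{-1}D\,\Z^r$. Applying $L$ gives $kL\bx\in D\Z^r\subseteq\Z^t\times\{0\}^f$, so $kL_f\bx=0$. Since $\Z^f$ is torsion-free, $L_f\bx=0$. Therefore $\ker{H}\subseteq\ker{L_f}$.

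The only delicate point is the translation of ``no logical rotor'' into ``$\ker{H}/\im{G}$ is torsion''. The statement assumes $\ker{\pi}=0$, so the code is isomorphic to $L^2(H_1(C_\bullet))$, and its rotor count is exactly the integer $f$ of \cref{theorem:H1isom}. That integer is the free rank of $H_1(C_\bullet)$, which must not be confused with $m-t$. Once this identification is made, the rest is the routine torsion argument above.
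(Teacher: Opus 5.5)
Your proof is correct, and it reaches the reverse inclusion by a different route from the paper.

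For $\ker{L_f}\subseteq\ker{H}$, you and the paper use the same fact in different form. The paper writes $\by\in\ker{L_f}$ as $\by=G\bx$ with $\bx\in\Q^r$ and uses $HG=0$ over $\Q$. You identify $\ker{L_f}=\langle\bl_{d_1},\ldots,\bl_{d_t}\rangle_\Z$ and use $\bl_{d_i}=\frac{1}{d_i}GC\be_i$.

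For $\ker{H}\subseteq\ker{L_f}$, the paper dualises. It passes to $\impi{H^\top}\subseteq\impi{L_f^\top}$ and argues that a strict inclusion would give a continuous family $\theta\bphi\in\kerpi{G^\top}$ outside $\impi{H^\top}$. That would make $\kerpi{G^\top}/\impi{H^\top}\cong H^1(C_\bullet)$ non-discrete, contradicting the absence of rotors. It then returns to kernels by Pontryagin duality, via annihilators.

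You stay in $\Z^m$ instead. Finiteness of $\ker{H}/\im{G}$ gives $k\bx\in\im{G}=L^{-1}D\Z^r$, hence $kL_f\bx=0$, and torsion-freeness of $\Z^f$ finishes.

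Your route is shorter and needs neither the topological non-discreteness step nor the duality round trip. It also shows the lemma is purely a statement about the chain complex: $\ker{L_f}\subseteq\ker{H}$ always holds, with equality exactly when $H_1(C_\bullet)$ is finite. The hypothesis $\ker{\pi}=0$ is used only to identify the rotor count with the free rank of $H_1(C_\bullet)$. You rightly flag that this is not the $f=m-t$ of the statement.

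The paper's version, in exchange, ties the lemma to the dual picture $\kerpi{G^\top}/\impi{H^\top}$. That is the language in which the no-rotor hypothesis of \cref{theorem:Zrotations} is phrased.
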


\begin{proof}
   First, we prove that $\ker{L_f}\subset\ker{H}$. Let $\by\in\ker{L_f}$
   and consider $\by'=L\by$. Because $\by\in\ker{L_f}$, the last 
   $f$ entries of $\by'$ are zero. Define
   \begin{equation}
      \bx'=\begin{pmatrix}
         \frac{y'_1}{d_1}&\ldots&\frac{y'_t}{d_t}&0&\ldots&0
      \end{pmatrix}^\top\in\Q^r,
   \end{equation}
   such that $\by'=D\bx'$. This way, using $G=L^{-1}DC^{-1}$, we get  
   \begin{equation}
      \by=L^{-1}\by'=L^{-1}D\bx'=GC\bx'.
   \end{equation}
   Hence, taking $\bx=C\bx'\in\Q^r$, we deduce that $\by=G\bx$.
   Finally, because $HG=0_{s\times r}$ in $\Z$, the equality holds also over $\Q$
   and $\by\in\ker{H}$. So, $\ker{L_f}\subset\ker{H}$, and 
   taking the Pontryagin dual, we infer that
   \begin{equation}
      \impi{H^\top}\subset\impi{L_f^\top}.
   \end{equation}
   Now, if this inclusion was strict, there would exist $\bphi\in\impi{L_f^\top}\setminus\impi{H^\top}$.
   Let $\bphi'\in[0,2\pi)^f$ such that $\bphi=L_f^\top\bphi'$.
   Because $LG=DC^{-1}$, we have $\bphi^\top G=\bphi'^\top L_fG=0$.
   So, for any angle $\theta\in[0,2\pi),G^\top\theta\bphi=0$,
   meaning that the group $\kerpi{G^\top}/\impi{H^\top}$ is not discrete. But as the code is 
   isomorphic to $L^2(H_1(C_\bullet))$ which is itself isomorphic 
   to $L^2(H^1(C_\bullet))$ and that no logical rotor belongs to the code,
   $H^1(C_\bullet)$ is supposed to be a finite group according to 
   \cref{eq:basis_ker_2pi}, which is absurd. So, $\impi{H^\top}=\impi{L_f^\top}$ and 
   \begin{equation}
      \impi{H^\top}=\impi{L_f^\top} \quad\Leftrightarrow\quad \im{H^*}=\im{L_f^*}
      \quad\Leftrightarrow\quad \ker{H}=\ker{L_f},
   \end{equation}
   by Pontryagin duality.
\end{proof}

We can now record the main theorem of this section. 
For a subclass of positive Tiger codes, it 
provides a necessary condition on the form of the multivariate 
polynomial $P\in\R[X_1,\ldots,X_n]$ for $e^{iP(\n)}$ to be a 
logical operator.
For completeness, we first state the theorem in broad generality 
before specialising it to the logical one-qubit case where 
the logical action of $e^{iP(\n)}$ can be inferred.

\begin{theorem}\label{theorem:Zrotations}
   Consider a positive Tiger code $\cT(G,H,\balpha,\bDelta)$ over $m$ modes, and let
   \begin{equation}
      LGC=\diag{m\times r}{d_1,\ldots,d_t}=D
   \end{equation}
   be a Smith decomposition of $G$. Denote by
   $\{\bw_{d_1},\ldots,\bw_{d_t},\bw_1,\ldots,\bw_{m-t}\}$
   the rows of $L$. Assume that:
   \begin{enumerate}
   \item the code $\cT(G,H,\balpha,\bDelta)$ does not host any logical rotors. In this case, the rows
   \break $\frac{2\pi}{d_1}\bw_{d_1},\ldots,\frac{2\pi}{d_t}\bw_{d_t}$ form a generating set of
   \begin{equation}
      \kerpi{G^\top}/\impi{H^\top}\cong H^1(C_\bullet);
   \end{equation}
   see \cref{theorem:pi,eq:basis_ker_2pi},
   \item each column of $GC$ is either componentwise non-negative or componentwise non-positive.
   \end{enumerate}
   Let $P\in\R[x_1,\ldots,x_m]$ satisfy \cref{eq:P_logical}. Write $P$ in the form
   \begin{equation}\label{eq:explicit_P}
      P(\bx)
      =
      \sum_{k_1=0}^{\deg_1}\cdots\sum_{k_t=0}^{\deg_t}
      f_{k_1,\ldots,k_t}(\bw_1\cdot\bx,\ldots,\bw_{m-t}\cdot\bx)
      \prod_{j=1}^t
      P_{k_j}^{(d_j)}(\bw_{d_j}\cdot\bx),
   \end{equation}
   where $\deg_1,\ldots,\deg_t$ are the partial degrees of 
   the polynomial $\bx\mapsto P(L^{-1}\bx)$.

   Then there exist $p_1,\ldots,p_t\in\N$ and signs
   $\sigma_1,\ldots,\sigma_t\in\{-1,1\}$ such that the following holds. For every
   $i\in\llbracket1,t\rrbracket$, every $k_i\geq 1$, and every choice of integers
   \begin{equation}
      \tilde n_j\in \sigma_j\N_{\geq p_j},
      \qquad j\in\llbracket1,t\rrbracket\setminus{i},
   \end{equation}
   the following coefficient is an integer:
   \begin{equation}\label{eq:int_coef}
      \sum_{k_1,\ldots,k_{i-1},k_{i+1},\ldots,k_t}
      f_{k_1,\ldots,k_t}(\bw_1\cdot\bn_0,\ldots,\bw_{m-t}\cdot\bn_0)
      \prod_{j\neq i} P_{k_j}^{(d_j)}(\tilde n_j)
      \in\Z .
   \end{equation}
   In other words, the coefficient of $P_{k_i}^{(d_i)}(\bw_{d_i}\cdot\bx)$ 
   takes integer values over an infinite number of integer coordinates.
\end{theorem}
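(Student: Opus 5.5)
Throughout, we work in the coordinates $\by:=L\bx$, so $\tilde n_j=\bw_{d_j}\cdot\bx$ for $j\le t$ and the last $m-t$ coordinates are $\bw_k\cdot\bx$. Set $\tilde P(\by):=P(L^{-1}\by)$. Since the $P_k^{(d_j)}$ have degree exactly $k$, they form a basis of $\R[y_j]$, so the decomposition \cref{eq:explicit_P} exists and is unique with polynomial coefficients $f_{k_1,\ldots,k_t}$ in the last $m-t$ variables. Two reductions use the hypotheses.

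First, by \cref{lemma:fixed_last_components} (assumption~1), $\ker L_f=\ker H$. Hence every $\bn\in K_H^{\bDelta}\subset\bn_0+\ker H$ satisfies $L_f\bn=L_f\bn_0$. On $K_H^{\bDelta}$ the last coordinates are therefore frozen at $(\bw_1\cdot\bn_0,\ldots,\bw_{m-t}\cdot\bn_0)$, and each $f_{k_1,\ldots,k_t}$ becomes the constant appearing in \cref{eq:int_coef}. On $L(K_H^{\bDelta})$, $\tilde P$ thus reduces to a polynomial $Q(y_1,\ldots,y_t)=\sum_{\bk}c_{\bk}\prod_j P_{k_j}^{(d_j)}(y_j)$ in the first $t$ coordinates only.

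Second, by \cref{lemma:infinite_many_constraints} (positivity and assumption~2), there is $\bn_*$ with $L\bn_*+(\sigma_1\N,\ldots,\sigma_t\N,0)\subset L(K_H^{\bDelta})$. Set $p_j:=|(L\bn_*)_j|$ or larger, so that the box $\prod_j\sigma_j\N_{\ge p_j}$, suitably shifted by $L\bn_*$, lies in $L(K_H^{\bDelta})$. The columns of $GC$ satisfy $L(GC\be_i)=d_i\be_i$, and $\sigma_iGC\be_i\ge0$ lies in $\im G$. So for $\bn$ in this region, moving in the $\sigma_i$ direction by $\sigma_i d_i\be_i$ stays in $\N^m$ and in $K_H^{\bDelta}$. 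Condition \cref{eq:P_logical} then gives $Q(\by+\sigma_id_i\be_i)-Q(\by)\in\Z$, that is, $\delta_{d_i}Q(\cdot)$ is integer-valued in the $i$-th variable on $\sigma_i\N_{\ge p_i}$, with the other coordinates fixed at $\tilde n_j\in\sigma_j\N_{\ge p_j}$. When $\sigma_i=-1$, the backward shift is handled by reindexing $y_i\mapsto y_i-d_i$.

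Now fix $i$ and the $\tilde n_j$ for $j\ne i$. The univariate polynomial $R(y):=Q(\tilde n_1,\ldots,y,\ldots,\tilde n_t)=\sum_{k_i}C_{k_i}P_{k_i}^{(d_i)}(y)$ has coefficients $C_{k_i}$ given exactly by the sum in \cref{eq:int_coef}. It satisfies $\delta_{d_i}R(\sigma_i\N_{\ge p_i})\subset\Z$. By \cref{prop:mono_P_characterisation}, $R\in\R+\langle P_n^{(d_i)}\rangle_\Z$. Since $\{P_n^{(d_i)}\}$ is a basis of $\R[y]$, uniqueness of coordinates yields $C_{k_i}\in\Z$ for all $k_i\ge1$, which is the claim.

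The main obstacle is the second step. We must check that a single $\bn_*$ and thresholds $p_j$ make the whole box reachable, and that the $d_i$-shifts stay inside $K_H^{\bDelta}$ in the correct sign direction. \cref{lemma:infinite_many_constraints} gives points of the form $L\bn_*+\sum(q_i+\sigma_ip_id_i)\be_i$, so one first shows that the full translated orthant is attained, including the residues $q_i$. Then one passes from $L\bn_*+\sigma\N$ to the absolute ranges $\sigma_j\N_{\ge p_j}$ by taking $p_j$ beyond the offset $(L\bn_*)_j$. Matching sign conventions against the statement of \cref{prop:mono_P_characterisation} needs care, but no new idea.
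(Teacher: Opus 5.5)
Your proof is correct and follows the paper's argument: pass to $Q(\by)=P(L^{-1}\by)$, freeze the last $m-t$ coordinates at $L_f\bn_0$ via \cref{lemma:fixed_last_components}, use \cref{lemma:infinite_many_constraints} to obtain the translated orthant, apply \cref{prop:mono_P_characterisation} to each univariate slice, and conclude by uniqueness of coordinates in the basis $\{P_k^{(d_i)}\}$. The differences are minor. You derive the shift condition directly from \cref{eq:P_logical} using $\sigma_i GC\be_i\geq 0$, rather than through the paper's Markov-basis reformulation; only this one implication is needed for a necessary condition. Your choice $p_j=|(L\bn_*)_j|$, together with the reindexing for $\sigma_i=-1$, also handles the sign bookkeeping more carefully than the paper's $p_i=\sigma_i|(L\bn_*)_i|$.
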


\begin{proof}
   Let $P$ satisfy \cref{eq:P_logical}. Consider the polynomial $Q(\bx):=P(L^{-1}\bx)$ 
   and denote \break $\sigma_1\bg_1,\ldots,\sigma_r\bg_r$ the 
   columns of $GC$ and $\{\be_i\}_{1\leq i\leq m}$ the canonical basis of $\R^m$.
   As the code is positive, te second hypothesis guarantees that 
   $\{\pm\sigma_1\bg_1,\ldots,\pm\sigma_r\bg_r\}$ form a Markov basis of $G$; 
   see \cref{def:Markov_basis}.
   So, according to \cref{prop:periodicity_of_P}, \cref{eq:P_logical} is equivalent to
   \begin{equation}
      \forall i \in \llbracket 1,r \rrbracket,
      \forall \bn \in K_H^{\bDelta}, P(\bn+\bg_i)-P(\bn) \in \Z.
   \end{equation}
   Using $L\sigma_i\bg_i=LGC\be_i=d_i\be_i$ for $i\in\llbracket1,t\rrbracket$ and 
   $L\sigma_i\bg_i=0$ for $i\in\llbracket t+1,r\rrbracket$,
   this is equivalent to demanding $Q$ to satisfy
   \begin{equation}
      \forall i\in\llbracket1,t\rrbracket,
      \forall \bn \in K_H^{\bDelta}, Q(L\bn+\sigma_id_i\be_i)-Q(L\bn) \in \Z.
   \end{equation}
   According to \cref{lemma:infinite_many_constraints}, 
   there exists $\bn_*\in(\bn_0+\ker{H})$ such that 
   \begin{equation}
      L\bn_* + 
      \begin{pmatrix}
         \sigma_1 \N\\
         \vdots \\
         \sigma_t \N\\
         0_{m-t}
      \end{pmatrix}\subset L(K_H^{\bDelta})).
   \end{equation}
   So, in particular, $Q$ satisfies
   \begin{equation}\label{eq:constraint_on_Q}
      \forall i\in\llbracket1,t\rrbracket,\forall \tilde{\bn} \in L\bn_* + 
      \begin{pmatrix}
         \sigma_1 \N\\
         \vdots \\
         \sigma_t \N\\
         0_{m-t}
      \end{pmatrix}, Q(\tilde{\bn}+\sigma_id_i\be_i)-Q(\tilde{\bn}) \in \Z.
   \end{equation}
   According to \cref{lemma:fixed_last_components}, as the code does
   not host any logical rotor, $\ker{L_f}=\ker{H}$. So, 
   \begin{equation}
      \forall\bn\in K_H^{\bDelta}=(\bn_0+\ker{H})\cap\N^m,
      L\bn=\begin{pmatrix}
         L_t\bn \\
         L_f\bn_0
      \end{pmatrix}.
   \end{equation}
   Hence, the last $m-t=f$ entries of $L\bn$ are fixed equals to 
   $L_f\bn_0=\begin{pmatrix}
      \bw_1\cdot\bn_0 & \ldots & \bw_{m-t}\cdot\bn_0
   \end{pmatrix}^\top$.
   In particular, for $\tilde{\bn}$ in \cref{eq:constraint_on_Q}, 
   $\tilde{n}_{t+1},\ldots,\tilde{n}_m=\bw_1\cdot\bn_0, \ldots, \bw_{m-t}\cdot\bn_0$.

   Because $\{P_k^{(d)}(x)\}_{k\in\N}$ is an echeloned family of polynomials,
   it forms a basis of $\R[x]$. 
   So, $Q$ can be decomposed as
   \begin{equation}
      Q(\bx) = \sum_{k_1=0}^{deg_1(Q)}\ldots\sum_{k_t=0}^{deg_t(Q)} 
      f_{k_1,\ldots,k_t}(x_{t+1},\ldots,x_m)P_{k_1}^{(d_1)}(x_1)\ldots P_{k_t}^{(d_t)}(x_t)
   \end{equation}
   where $f_{k_1,\ldots,k_t} \in \R[x_{t+1},\ldots,x_m]$.
   Here, we recover \cref{eq:explicit_P} with $P(\bx)=Q(L^{-1}\bx)$ 
   and where $deg_1,\ldots,deg_t=deg_1(Q),\ldots,deg_t(Q)$.
   For $i\in\llbracket1,t\rrbracket$ and 
   $\tilde{\bn}\in L\bn_* + 
      \begin{pmatrix}
         \sigma_1 \N&
         \ldots &
         \sigma_t \N&
         0_{m-t}
      \end{pmatrix}^\top$,
   consider the univariate polynomial
   \begin{equation}
      Q_i(x_i):=Q(\tilde n_1,\ldots,x_i,\ldots,\tilde n_t,
         \bw_1\cdot\bn_0,\ldots,\bw_f\cdot\bn_0)
         \nonumber\\
      =\sum_{k_i=0}^{\deg_i(Q)}
      A_{k_i}(\tilde \bn)P_{k_i}^{(d_i)}(x_i),
   \end{equation}
   with 
   \begin{equation}
      A_{k_i}(\tilde \bn)
      :=
      \sum_{k_1,\ldots,k_{i-1},k_{i+1},\ldots,k_t}
      f_{k_1,\ldots,k_t}
      (\bw_1\cdot\bn_0,\ldots,\bw_f\cdot\bn_0) \\
      \prod_{j\neq i} P_{k_j}^{(d_j)}(\tilde n_j).
   \end{equation}
   According to \cref{eq:constraint_on_Q}, $Q_i(x_i)$ takes 
   integer values over $\{(L\bn_*)_i+\sigma_ik:k\in\N\}$.
   As 
   \begin{equation}
      \sigma_i\N_{\geq \sigma_i|(L\bn_*)_i|}\subsetneq\{(L\bn_*)_i+\sigma_ik:k\in\N\}
   \end{equation}
   $Q_i(x_i)$ belongs to 
   $\left\{P(x)\in\R[x]: \delta_{d_i}P\left(\sigma_i\N_{\geq \sigma_i|(L\bn_*)_i|}\right)\subset\Z\right\}$.
   So, for $i\in\llbracket1,t\rrbracket$, denoting \break
   $p_i:=\sigma_i|(L\bn_*)_i|$,
   according to \cref{prop:mono_P_characterisation}, for all 
   $k_i\in\llbracket1,deg_i(Q)\rrbracket$ and for all \break
   $\tilde{n}_1,\ldots,\tilde{n}_t\in \sigma_1\N_{\geq p_1},\ldots,\sigma_t\N_{\geq p_t}$,
   $A_{k_i}(\tilde \bn)\in\Z$.
\end{proof}

\Cref{theorem:Zrotations} only proves a necessary condition. 
In practice, it can be proven that it is a 
sufficient condition as well. Using \cref{lemma:integer_valued_polynomial}
and \cref{prop:mono_P_characterisation}, satisfying \cref{eq:int_coef} for 
all $i$ and for infinitely many consecutive integers is sufficient to 
prove that any $P$ satisfying \cref{eq:int_coef} also satisfies 
\cref{eq:P_logical}.

To do so, we would push further the 
characterisation of the $f_{k_1,\ldots,k_t}$'s in the following manner.
For simplicity, assume $i=1$ and consider a fixed $k_1\geq1$. We have
\begin{equation}
   \sum_{k_2,\ldots,k_t} 
   f_{k_1,\ldots,k_t}(\bw_2\cdot\bn_0,\ldots,\bw_{m-t}\cdot\bn_0)
   P_{k_2}^{d_2}(\tilde{n}_2)\ldots P_{k_t}^{d_t}(\tilde{n}_t)\in\Z.
\end{equation}
If we decompose $P_{k_2}^{(d_2)}$ in the $\left\{\binom{x_2}{p}\right\}_{p\in\N}$ basis,
because $\tilde{n}_2,\ldots,\tilde{n}_t$ take value in 
$\sigma_2\N_{\geq p_2},\ldots,\sigma_t\N_{\geq p_t}$, respectively, 
the coefficients of this decomposition should all be integers,
according to \cref{lemma:integer_valued_polynomial}.
We can inductively iterate the process to eventually reach a full 
characterisation of \break
$f_{k_1,\ldots,k_t}(x_{t+1},\ldots,x_m)$. 

We chose not to do so 
as the resulting characterisation turns out to be cumbersome and not enlightening
in the general case. However, in the logical one-qubit case, \cref{theorem:Zrotations} 
leads to an explicit necessary and sufficient characterisation, which we now
record.

\begin{proposition}\label{prop:qubitZrotation}
   Consider a positive Tiger code satysfying the assumptions of 
   \cref{theorem:Zrotations}. Moreover,
   assume that it hosts a single logical qubit. Denoting 
   $\{w,w_1,\ldots,w_{m-1}\}$ the rows of $L$,
   $\kerpi{G^\top}/\impi{H^\top}=\pi\Z_2w$. 
   Then, $e^{2i\pi P(\n)}$ preserves the codespace if and only if 
   \begin{equation}
      P(\bx) = \sum_{k=0}^{deg_1}
      f_k(\bw_1\cdot\bx,\ldots,\bw_{m-1}\cdot\bx)
      P_{k}^{(2)}(\bw\cdot\bx),
   \end{equation}
   where for all $k\geq1, f_k(\bw_1\cdot\bn_0,\ldots,\bw_{m-1}\cdot\bn_0)\in\Z$.
   Moreover, a qubit logical $Z$ rotation of 
   angle $\frac{(-1)^{k+1}}{2^k}2\pi$ can only be produced by the degree-$d$ $P_d^{(2)}$ 
   component in the decomposition of $P$.
\end{proposition}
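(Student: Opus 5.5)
The plan has three steps. First, specialise \cref{theorem:Zrotations} to get the necessary condition. Second, prove sufficiency directly from \cref{prop:periodicity_of_P}. Third, compute the logical action by evaluating $P_k^{(2)}$ at the points $0$ and $1$.

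\emph{Necessity.} With a single logical qubit we have $t=1$ and $d_1=2$. In \cref{eq:int_coef} there is then no product over $j\neq i$ and no sum over the other indices. The coefficient of $P_{k}^{(2)}(\bw\cdot\bx)$ is therefore just $f_k(\bw_1\cdot\bn_0,\ldots,\bw_{m-1}\cdot\bn_0)$, and \cref{theorem:Zrotations} says directly that it lies in $\Z$ for every $k\geq1$. The decomposition of $P$ itself is the one obtained in that proof, since $\{P_k^{(2)}\}_k$ is an echeloned basis of $\R[x]$.

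\emph{Sufficiency.} Assume $P$ has the stated form with integer values $c_k:=f_k(\bw_1\cdot\bn_0,\ldots,\bw_{m-1}\cdot\bn_0)$ for $k\geq1$. By \cref{lemma:fixed_last_components}, $\ker{L_f}=\ker{H}$, so $\bw_j\cdot\bn=\bw_j\cdot\bn_0$ for every $\bn\in K_H^{\bDelta}$. Hence on $K_H^{\bDelta}$,
\[
P(\bn)=c_0+\sum_{k\geq1}c_kP_k^{(2)}(\bw\cdot\bn).
\]
Now take $\bg\in\im{G}$. From $LGC=D$ we get $L\bg=(2c,0,\ldots,0)^\top$ for some $c\in\Z$. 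Thus the $\bw_j\cdot\bn$ are unchanged and $\bw\cdot\bn$ is shifted by the even integer $2c$. Since $\delta_2P_k^{(2)}=\binom{x}{k-1}$ is integer-valued on $\Z$ (this is the construction $P_k^{(2)}=\delta_2^{-1}\binom{x}{k-1}$ in \cref{prop:mono_P_characterisation}), telescoping gives $P(\bn+\bg)-P(\bn)\in\Z$. \cref{prop:periodicity_of_P} then shows that $e^{2i\pi P(\n)}$ preserves the codespace.

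\emph{Logical action.} By the proof of \cref{prop:periodicity_of_P}, $e^{2i\pi P(\n)}$ multiplies $\ket{\balpha_{[\bn]}}$ by $e^{2i\pi P(\bn)}$. The two classes are distinguished by the parity of $\bw\cdot\bn$, and modulo $1$ each $P_k^{(2)}$ is $2$-periodic on $\Z$. So the relative logical phase is
\[
2\pi\sum_{k\geq1}c_k\bigl(P_k^{(2)}(1)-P_k^{(2)}(0)\bigr).
\]
We have $P_k^{(2)}(0)=0$ for $k\geq1$, since $\binom{0}{i+1}=0$. The step I expect to be the main obstacle is computing $P_k^{(2)}(1)$ exactly. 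I would not use the double-sum formula but a generating function instead. Put $Q_t(x):=\sum_n t^n\,\delta_2^{-1}\binom{x}{n}$. It satisfies $Q_t(x+2)-Q_t(x)=(1+t)^x$ and $Q_t(0)=0$, whose solution is
\[
Q_t(x)=\frac{(1+t)^x-1}{(1+t)^2-1}.
\]
Setting $x=1$ gives $Q_t(1)=\frac{1}{2+t}=\sum_n\frac{(-1)^n}{2^{n+1}}t^n$, hence $P_{n+1}^{(2)}(1)=\frac{(-1)^n}{2^{n+1}}$. The term $c_kP_k^{(2)}(1)$ therefore has denominator dividing $2^k$, with exact $2$-adic valuation $-k$ whenever $c_k$ is odd. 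A relative phase $\frac{(-1)^{d+1}}{2^d}2\pi$ has valuation exactly $-d$, so it cannot be produced by the components with $k<d$ alone. A $P_d^{(2)}$ component is therefore required, and $P=P_d^{(2)}(\bw\cdot\bx)$ realises this phase.
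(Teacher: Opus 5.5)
Your necessity and sufficiency steps follow the paper's own argument. You specialise \cref{theorem:Zrotations} to the single invariant factor $d_1=2$. You then use $\ker{L_f}=\ker{H}$, the fact that $L\bg\in 2\Z\times\{0\}^{m-1}$ for $\bg\in\im{G}$, and the integrality of $\delta_2P_k^{(2)}=\binom{x}{k-1}$ on $\Z$.

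The real difference is in the logical-action step. The paper evaluates $P_k^{(2)}(1)$ in \cref{lemma:P1} from the explicit double-sum formula of \cref{prop:mono_P_characterisation}: it collapses the inner sum to $\binom{i}{k-i}2^{2i-k}$, then resums $\sum_i\binom{1/2}{i+1}\bigl(x(x+2)\bigr)^i$ to $\frac{1}{x+2}$. You bypass the explicit formula and use only the defining property $\delta_2P_{n+1}^{(2)}=\binom{x}{n}$, $P_{n+1}^{(2)}(0)=0$, solving the difference equation for the generating series. This is shorter and needs no fractional binomial series. It also gives the values of $P_k^{(a)}$ at $1,\ldots,a-1$ for any $a$, by replacing $(1+t)^2-1$ with $(1+t)^a-1$. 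Your $2$-adic valuation argument is a cleaner version of the paper's informal remark that ``if $\deg_1\leq d$ it can at most implement a rotation of angle $\frac{(-1)^{d+1}}{2^d}2\pi$''.

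Two points need tightening.

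\emph{Uniqueness of $Q_t$ at odd $x$.} The recurrence $Q_t(x+2)-Q_t(x)=(1+t)^x$ with $Q_t(0)=0$ only determines $Q_t$ at even $x$, and $x=1$ is exactly the value you need. Uniqueness comes from each $t^n$-coefficient being a polynomial in $x$, since polynomial solutions of $\delta_2S=R$ are unique up to an additive constant. So you should check that your candidate has polynomial coefficients. It does, because $(1+t)^x-1=t\sum_{n\geq0}\binom{x}{n+1}t^n$ and $(1+t)^2-1=t(2+t)$.

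\emph{What the valuation argument proves.} It shows that components with $k<d$ cannot produce the phase, i.e.\ that $\deg_1\geq d$ is necessary, which is exactly what the paper's proof establishes. It does not show that the $P_d^{(2)}$ component itself must be present, and that literal reading is false. Take $P=-2P_{d+1}^{(2)}(\bw\cdot\bx)$: its coefficient $-2$ is an integer, so it preserves the codespace. Its relative phase is $2\pi\cdot(-2)\frac{(-1)^{d}}{2^{d+1}}=\frac{(-1)^{d+1}}{2^{d}}2\pi$, yet it has no $P_d^{(2)}$ term. Your sentence ``A $P_d^{(2)}$ component is therefore required'' should instead read ``a component of degree at least $d$ is required''.
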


\begin{proof}
   Since the code encodes a single logical qubit, 
   \cref{theorem:Zrotations} guarantees that a polynomial $P$ 
   satisfying \cref{eq:P_logical} takes the following form
   \begin{equation}
      P(\bx) = \sum_{k=0}^{deg_1}
      f_k(\bw_1\cdot\bx,\ldots,\bw_{m-1}\cdot\bx)
      P_{k}^{(2)}(\bw\cdot\bx)
   \end{equation}
   where, for all $k\geq1$,
   \begin{equation}
      f_k(\bw_1\cdot\bn_0,\ldots,\bw_{m-1}\cdot\bn_0)\in\Z.
   \end{equation}
   We now check that this condition over $f_k$
   is sufficient for $P$ to satisfy \cref{eq:P_logical}. For all 
   $\bn\in K_H^{\bDelta}$ and all $\bg\in\im{G}$ such that $\bn+\bg\in\N^m$,
   because $\bg\in\ker{H}=\ker{L_f}$ (\cref{lemma:fixed_last_components}), 
   $\bw_i\cdot\bg=0$ for all $i\in\llbracket1,m-1\rrbracket$. Moreover, denoting 
   $\bg=G\br_0$ for some $\br_0\in\Z^r$, because $\bw$ is the first row of $L$,
   \begin{equation}
      \bw\cdot\bg=\be_1^\top LG\br_0=\be_1^\top DC^{-1}\br_0
      =2\be_1^\top C^{-1}\br_0=2q
   \end{equation}
   for some integer $q\in\Z$. Hence,
   \begin{equation}
      P(\bn+\bg)-P(\bn)=\sum_{k=0}^{deg_1}f_k(\bw_1\cdot\bn_0,\ldots,\bw_{m-1}\cdot\bn_0)
      \left(P_{k}^{(2)}(\bw\cdot\bn+2q) - P_{k}^{(2)}(\bw\cdot\bn)\right)
   \end{equation}
   for some integer $q\in\Z$. By \cref{prop:mono_P_characterisation}, 
   $\delta_2P_{k}^{(2)}(\Z)\subset\Z$ which implies that, since 
   $\bw\cdot\bn\in\Z$,
   \begin{equation}
      P_{k}^{(2)}(\bw\cdot\bn+2q)-P_{k}^{(2)}(\bw\cdot\bn)\in\Z
   \end{equation}
   Hence, $P(\bn+\bg)-P(\bn)\in\Z$.

   By \cref{prop:mono_P_characterisation}, $P_{k}^{(2)}(x)$ can only take two 
   different values in $\R/\Z$. Hence, it is sufficient to know $P_{k}^{(2)}(0)=0$ and 
   $P_{k}^{(2)}(1)$ to recover $e^{2i\pi P_{k}^{(2)}(n)}$ for all integer $n\in\N$.
   According to \cref{lemma:P1}, 
   \begin{equation}
      \forall k\in\N,P_{k}^{(2)}(1)=\frac{(-1)^{k+1}}{2^{k}}.
   \end{equation}
   Thus, if $deg_1\leq d$, $e^{2i\pi P(\n)}$ can at most implement 
   a Z-rotation of angle $\frac{(-1)^{d+1}}{2^d}2\pi$.
\end{proof}

According to \cite{Cui_2017}, a Z-rotation of angle $\frac{(-1)^{d+1}}{2^d}2\pi$
belongs to the $d$-th level of the Clifford hierarchy.
Therefore, this proposition makes a direct connection between 
the degree of the polynomial of the physical gate and the degree of 
the logical gate induced in 
the Clifford hierarchy \cite{Cui_2017,desilva2025cliffordhierarchyqubitqudit}.
Below, we describe how \cref{prop:qubitZrotation} applies to 
both the cat and paircat codes.

\begin{example}
   Recalling \cref{eq:3firstPa} to the case $a=2$, we get 
   \begin{equation}
      \begin{aligned}
         &P_1^{(2)}(x)=\frac{x}{2},\\
         &P_2^{(2)}(x)=\frac{1}{4}x^2-\frac{1}{2}x,\\
         &P_3^{(2)}(x)=\frac{1}{12}x^3 - \frac{3}{8}x^2 + \frac{5}{12}x.
      \end{aligned}
   \end{equation}
   For the cat code $\cT\left(G\!=\!(2),H\!=\!(0),\alpha,0\right)$, since $G$ is already 
   in Smith normal form, we have $L=(1)$ and the first row is $w=(1)$.
   Therefore, the operators 
   \begin{equation}
      e^{i\pi\hn}\quad;\qquad 
      e^{i(\frac{\pi}{2}\hn^2-\pi\hn)}\quad;\qquad
      e^{i(\frac{\pi}{6}\hn^3 - \frac{3\pi}{4}\hn^2 + \frac{5\pi}{6}\hn)}
   \end{equation}
   induce respectively a logical $Z$ gate, $S$ gate and $T$ gate,
   according to \cref{prop:qubitZrotation}.

   Now, consider the paircat code $\cT\left(G\!=\!\begin{pmatrix}
      2&2
   \end{pmatrix}^\top,H\!=\!\begin{pmatrix}
      1&-1
   \end{pmatrix},\begin{pmatrix}
      \alpha&\alpha
   \end{pmatrix}^\top,\Delta\right)$ and take $\Delta\leq0$ 
   for simplicity. In that case, Smith normal form gives
   \begin{equation}\label{eq:Lpaircat}
      L=\begin{pmatrix}
         1&0\\1&-1
      \end{pmatrix}
   \end{equation}
   such that $LG=\begin{pmatrix}
      2&0
   \end{pmatrix}^\top$. 
   The first row is $\bw=\begin{pmatrix}
      1&0
   \end{pmatrix}$, implying that $\bw\cdot\n=\hn_1$. Hence, 
   as for the cat code,
   \begin{equation}
      e^{i\pi\hn_1}\quad;\qquad 
      e^{i(\frac{\pi}{2}\hn_1^2-\pi\hn_1)}\quad;\qquad
      e^{i(\frac{\pi}{6}\hn_1^3 - \frac{3\pi}{4}\hn_1^2 + \frac{5\pi}{6}\hn_1)}
   \end{equation}
   induce respectively a logical $Z$ gate, $S$ gate and $T$ gate.
   Note that 
   \begin{equation}
      e^{i(\hn_1-\hn_2-\Delta+2)(\frac{\pi}{2}\hn_1^2-\pi\hn_1)}
   \end{equation}
   is still a logical operator where $f_1(\bx)=x_1-x_2-\Delta+2$ by
   \cref{prop:qubitZrotation}.
   However, its logical action is no longer a $S$ gate: 
   for any code state $\ket{\psi}$, 
   since $(\hn_1-\hn_2-\Delta)\ket{\psi}=0$, we get 
   \begin{equation}
      e^{i(\hn_1-\hn_2-\Delta+2)(\frac{\pi}{2}\hn_1^2-\pi\hn_1)}
      \ket{\psi}=e^{2i(\frac{\pi}{2}\hn_1^2-\pi\hn_1)}
      \ket{\psi}.
   \end{equation}
   So, for $\bl=\begin{pmatrix}
         1,1-\Delta
      \end{pmatrix}^\top$,
   \begin{equation}
      \ket{\balpha_{[\bl]}}=\frac{e^{-|\alpha|^2}}{A_{[\bl]}}
      \sum_{n\in\N}\frac{\alpha^{2(n+1)-\Delta}}{\sqrt{(n+1)!(n+1-\Delta)!}}\ket{n+1,n+1-\Delta}
   \end{equation}
   and
   \begin{equation}
      e^{2i(\frac{\pi}{2}\hn_1^2-\pi\hn_1)}\ket{\balpha_{[\bl]}}
      =e^{2i(\frac{\pi}{2}-\pi)}\ket{\balpha_{[\bl]}}
      =e^{i\pi}\ket{\balpha_{[\bl]}}.
   \end{equation}
   Hence, $e^{i(\hn_1-\hn_2-\Delta+2)(\frac{\pi}{2}\hn_1^2-\pi\hn_1)}$
   implements a logical $Z$ gate.
\end{example}

\begin{remark}
   Remark that the second row of $L$ in \cref{eq:Lpaircat} is exactly $H$.
   This fact is not surprinsing considering that 
   \cref{lemma:fixed_last_components} guarantees that $L_f$, which 
   in this case is simply the second row of $L$, has the same kernel 
   as $H$. Here, $L_f=H$ but notice that if we would have chosen 
   $H=\begin{pmatrix}
      2&-2
   \end{pmatrix}$, 
   the resulting code would have still been a paircat code
   with the same matrix $L$, but $H=2L_f$. 
\end{remark}

We do not expect to generalise such a characterisation to all Tiger codes.
In fact, \cref{theorem:Zrotations} fails for certain negative Tiger codes 
as shown in the following example. This leaves room for implementations 
of non-Clifford gates with low degree polynomials.

\begin{example}\label{ex:binomial_code3}
   Consider the two-mode binomial code introduced in \cref{ex:binomial_code}.
   It is a non-positive Tiger code.
   In particular, take $\Delta=1$. In this case, the two logical 
   states of the qubit are 
   \begin{equation}
      \ket{\overline{0}}=\ket{0,1}
      \quad \text{and}\quad 
      \ket{\overline{1}}=\ket{1,0}.
   \end{equation}
   For any $\theta\in[0,2\pi)$, the operator $e^{i\theta \hn_1}$ 
   implements the qubit gate $\begin{pmatrix}
      1&0\\ 0&e^{i\theta}
   \end{pmatrix}$. Hence, any Z-rotation can be implemented with
   $P(x_1,x_2)=\theta x_1$, a degree one polynomial. 
\end{example}

\section{Conclusion}

We have provided a rigorous algebraic treatment of Tiger codes.
Starting from the kernel definition of the codespace, we proved that the
\(\a\)-type constraints are finitely generated, constructed an explicit
orthonormal basis indexed by $K_H^{\bDelta}/\!\!\sim_G$, and showed that the
homology group $\ker H/\im G$ is the natural algebraic object governing the
logical structure of the code.
We then developed a Fourier-theoretic description of Tiger codes, together with
their dual X- and Z-type bases, and showed that the same strategy
extends to non-linear number constraints. This enlarges the
class of bosonic codes captured by the Tiger code framework.

Finally, we addressed the question of logical operators. We first
provided physical implementations of approximate $X$ and $Z$ logical Pauli operators
and then, for finite-dimensional positive Tiger codes satisfying an additional
sign-compatibility assumption on the Smith decomposition of $G$, we 
derived a structural necessary condition
for polynomial phase gates of the form $e^{2i\pi P(\n)}$ to preserve the 
codespace. In the particular case of a one-qubit codespace, 
the operator $e^{2i\pi P(\n)}$ preserves the codespace if and 
only if $P$ admits a certain decomposition. This decomposition 
reveals how the degrees of its components constrain the level 
of the induced logical gate in the Clifford hierarchy.

Several natural questions remain open.
A first one concerns the special role played by coherent states. The present
work shows that Tiger codes are remarkably well suited to a coherent-state
description: the coherent states belonging to the code generate the
codespace. It would be interesting to understand whether this phenomenon is
specific to Tiger codes, or whether it extends to a broader class of
multivariate polynomials. More precisely, one may ask whether there exists a
class of multivariate polynomials $\cS$ such that, for every family
$P_1,\ldots,P_r\in\cS$,
\begin{equation}
   \bigcap_{1\leq i\leq r}\ker{P_i(\a)}=
   \overline{\Span{\ket{\balpha}:\balpha\in\C^m \text{ s.t. }\forall i\in\llbracket1,r\rrbracket,\ P_i(\balpha)=0}}.
\end{equation}

Another direction concerns finite generation. While our results show that the
$\a$-type constraints can always be reduced to finitely many generators, the
choice of such a generating family is far from unique. From the perspective of
practical implementations, it would be especially valuable to identify Markov
bases that minimise the degree of the monomials appearing in the corresponding
$\a$-type constraints. Better control of this degree could make the
associated dissipative or Hamiltonian implementations significantly more
tractable.

A third direction concerns polynomial phase gates. Our analysis leaves open
the possibility of implementing non-Clifford gates through low-degree
polynomials $P$ in operators of the form $e^{2i\pi P(\n)}$ on suitable
Tiger codes. Understanding the trade-off between the error-correcting
properties of a code and the minimal degree required to implement such
non-Clifford gates is left entirely open.
Likewise, identifying codes for which $e^{2i\pi P(\n)}$ can be implemented
mode by mode, without the need to engineer interaction Hamiltonians, would
significantly improve the practical relevance of these constructions.

Finally, as already mentioned in the introduction, we did not address in this
work the actual error-protection performance of Tiger codes. In particular,
finding dissipative dynamics that stabilise their codespaces, and analysing the
corresponding convergence rates would provide a more refined assessment of
their ability to protect quantum information against noise.

\section*{Acknowledgements}
We acknowledge the Plan France 2030 through the project ANR-22-PETQ-0006.
C.P. thanks Rémi Robin for his precious help on understanding 
how to implement Tiger codes using dissipative dynamics and Samo Novák
for fruitfull discussions about the Pontryagin duality.

\clearpage
\printbibliography
\clearpage

\appendix

\section{Image and kernel of integer matrices}\label{app:image_and_kernel}

\begin{proposition}
   For $A\in\Z^{n\times m}$, there exists $B\in \Z^{m\times p}$ such that $\ker{A}=\im{B}$.
\end{proposition}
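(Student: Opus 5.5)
We reduce to the Smith normal form of $A$, exactly as in the proof of \cref{theorem:H1isom}. Write $UAV=D$, with $U\in\Z^{n\times n}$ and $V\in\Z^{m\times m}$ unimodular (invertible over $\Z$), and $D=\diag{n\times m}{d_1,\ldots,d_k}$, where $d_1,\ldots,d_k$ are nonzero and $k$ is the rank of $A$. The plan is to read off $\ker{D}$ directly and transport it back through $V$.

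First, since $U$ is invertible over $\Z$, it is injective on $\Z^n$. Hence $A\bx=0$ if and only if $DV^{-1}\bx=0$. Setting $\by=V^{-1}\bx$, we get $\ker{A}=V\ker{D}$, because $V$ is a bijection of $\Z^m$.

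Second, we compute $\ker{D}$ over $\Z$. For $\by\in\Z^m$, the equation $D\by=0$ reads $d_iy_i=0$ for $i\le k$. Since each $d_i\neq0$ and $\Z$ has no zero divisors, this forces $y_1=\cdots=y_k=0$. Thus $\ker{D}=\{0\}^k\times\Z^{m-k}=\im{E}$, where $E\in\Z^{m\times(m-k)}$ consists of the last $m-k$ columns of the identity matrix.

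Finally, take $B:=VE\in\Z^{m\times p}$ with $p=m-k$. This is the matrix of the last $m-k$ columns of $V$, and $\ker{A}=V\im{E}=\im{VE}=\im{B}$. If $k=m$, the kernel is trivial; we then take $p=1$ and $B=0$, or allow $p=0$ by convention.

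The only nontrivial ingredient is the existence of the Smith normal form over $\Z$, which is a PID. The paper already assumes this in \cref{theorem:H1isom}, so there is no real obstacle. The one point to state carefully is that $\ker{D}$ is computed over $\Z$, not over $\Q$: integrality is preserved because $V$ is unimodular.

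If one prefers to avoid Smith normal form, there is a fallback. $\ker{A}$ is a subgroup of $\Z^m$, hence free of finite rank, since subgroups of finitely generated free abelian groups are free. Taking any $\Z$-basis of it as the columns of $B$ gives the same result.
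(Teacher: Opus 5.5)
Your proof is correct and follows the paper's argument: both take the Smith normal form of $A$ and choose $B$ to be the last $m-k$ columns of the right unimodular factor ($C$ in the paper, $V$ in your notation). You also spell out what the paper leaves implicit, namely $\ker{A}=V\ker{D}$, the computation of $\ker{D}$ over $\Z$, and the full-rank edge case.
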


\begin{proof}
   Write the smith normal form of $A$,
   \begin{equation}
      LAC=\diag{n\times m}{d_1,\ldots,d_t}=D.
   \end{equation}
   Then, as $L$ is invertible, the $m-t$ last columns of $C$ form a basis of $\ker{A}$.
   So, taking $B$ to be the last columns of $C$, we get the result.
\end{proof}

The other implication is false. If you consider $A=(2)$, then $\im{A}=2\Z$ 
which cannot be written as the kernel of a matrix $B$.

The usual definition of a Markov basis involves $\ker{A}$
instead of $\im{A}$ in \cref{def:Markov_basis}. According to the previous proposition,
a Markov basis for $\ker{A}$ is also a Markov basis for $\im{B}$ for a well suited
$B\in \Z^{m\times p}$. However, the other implication is false, making 
\cref{def:Markov_basis} more general than the usual definition of Markov basis 
\cite{MarkovBasesDiacoSturm}.

\section{Combinatorics}

\begin{lemma}\label{lemma:P1}
   $\forall k\in\N^*,P_{k}^{(2)}(1)=\frac{(-1)^{k+1}}{2^{k}}$.
\end{lemma}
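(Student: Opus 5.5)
The plan is a generating-function argument. The proof of \cref{prop:mono_P_characterisation} gives two properties of the family $\{P_{n+1}^{(2)}\}_{n\in\N}$ that we will use. First, $P_{n+1}^{(2)}=\delta_2^{-1}\binom{x}{n}$, so that $\delta_2 P_{n+1}^{(2)}=\binom{x}{n}$. Second, $P_{n+1}^{(2)}(0)=0$, because every term $\binom{x}{k+1}_2$ with $k\geq 0$ vanishes at $x=0$. These two properties determine $P_{n+1}^{(2)}$ uniquely: since $\ker{\delta_2}=\R$, two polynomials with the same $\delta_2$-image differ by a constant, and the normalisation at $0$ fixes that constant. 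So it suffices to exhibit, for every $n\in\N$, some polynomial $R_n\in\R[x]$ with $\delta_2 R_n=\binom{x}{n}$ and $R_n(0)=0$, and then to evaluate it at $1$.

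To build such an $R_n$, work in the formal power series ring $\R[x][[t]]$. Set $E(x,t):=\sum_{n\in\N}\binom{x}{n}t^n$, which formally plays the role of $(1+t)^x$. The Vandermonde identity $\binom{x+y}{n}=\sum_{j}\binom{x}{j}\binom{y}{n-j}$, read coefficientwise in $t$, gives $E(x+2,t)=E(x,t)(1+t)^2$. Now define
\begin{equation}
   F(x,t):=\frac{E(x,t)-1}{(1+t)^2-1}=\frac{E(x,t)-1}{t(2+t)}.
\end{equation}
This is a legitimate power series for the following reasons. The numerator $E(x,t)-1$ has zero constant term in $t$, so it is divisible by $t$. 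The factor $2+t$ is invertible in $\R[[t]]$. Finally, each $t^n$-coefficient of $F$ is a finite $\R$-linear combination of the $\binom{x}{j}$, hence a polynomial $R_n(x)$.

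Next we check the two defining properties. Applying $\delta_2$ coefficientwise, the multiplicativity relation gives $F(x+2,t)-F(x,t)=E(x,t)\,\frac{(1+t)^2-1}{(1+t)^2-1}=E(x,t)$, which says exactly that $\delta_2 R_n=\binom{x}{n}$. Moreover $E(0,t)=1$, so $F(0,t)=0$ and therefore $R_n(0)=0$. By the uniqueness argument above, $R_n=P_{n+1}^{(2)}$.

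It remains to evaluate at $x=1$. There $E(1,t)=1+t$, so
\begin{equation}
   F(1,t)=\frac{t}{t(2+t)}=\frac{1}{2}\cdot\frac{1}{1+t/2}=\sum_{n\in\N}\frac{(-1)^n}{2^{n+1}}t^n.
\end{equation}
Reading off coefficients gives $P_{n+1}^{(2)}(1)=\frac{(-1)^n}{2^{n+1}}$. Setting $k=n+1\geq1$, this is $P_k^{(2)}(1)=\frac{(-1)^{k+1}}{2^k}$, as claimed.

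The main point needing care is the formal bookkeeping. One must justify three things: that division by $t(2+t)$ keeps each coefficient a polynomial in $x$, that $\delta_2$ commutes with taking $t$-coefficients, and that the specialisation $x\mapsto 1$ commutes with the algebraic manipulations. All three hold because each $t$-coefficient involves only finitely many $\binom{x}{j}$.
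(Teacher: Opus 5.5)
Your proof is correct, and it takes a genuinely different route from the paper's.

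The paper works directly from the explicit double-sum definition. It first evaluates the inner alternating sum $\sum_{j}(-1)^{i-j}\binom{i}{j}\binom{2j}{k}$ as the $x^k$-coefficient of $((1+x)^2-1)^i=x^i(x+2)^i$. It then resums $\sum_k P_{k+1}^{(2)}(1)x^k=\sum_i\binom{1/2}{i+1}(x(x+2))^i$ via the binomial series for $(1+u)^{1/2}$, where the cancellation $(1+x(x+2))^{1/2}=1+x$ produces $\frac{1}{x+2}$.

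You never touch those coefficients. You use only that $P_{n+1}^{(2)}$ is the antidifference $\delta_2^{-1}\binom{x}{n}$ normalised by $P_{n+1}^{(2)}(0)=0$ (established in the proof of \cref{prop:mono_P_characterisation}), together with $\ker{\delta_2}=\R$. You then solve the functional equation $F(x+2,t)-F(x,t)=E(x,t)$ for all $n$ at once. Both arguments amount to the same generating function $\sum_{n}P_{n+1}^{(2)}(x)t^n=\frac{(1+t)^x-1}{(1+t)^2-1}$. The paper reaches it by resumming the explicit coefficients at the single point $x=1$; you reach it by characterising it through its functional equation.

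Your version has three advantages:
\begin{itemize}
\item It is shorter.
\item It is purely formal, whereas the paper argues analytically, ``for $x$ sufficiently small''.
\item It carries over verbatim to any $a$ by replacing $(1+t)^2$ with $(1+t)^a$. This gives $P_{n+1}^{(a)}(r)$ for each residue $r$ as the coefficients of $\frac{(1+t)^r-1}{(1+t)^a-1}$, which is what a qudit analogue of \cref{prop:qubitZrotation} would need.
\end{itemize}

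In exchange, the paper's computation is self-contained relative to the definition alone. It also yields, as a by-product, the simplified single-sum form $P_{k+1}^{(2)}(X)=\sum_i\binom{i}{k-i}2^{2i-k}\binom{X/2}{i+1}$ quoted in the introduction. Your argument instead leans on the identification $P_{n+1}^{(a)}=\delta_a^{-1}\binom{x}{n}$ from the earlier proof, which does hold.
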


\begin{proof}
   We shall prove that for all $k\in\N$,
   \begin{equation}
      P_{k+1}^{(2)}(1)
      =\sum_{i=0}^k \sum_{j=0}^i (-1)^{i-j}\binom{i}{j}\binom{2j}{k}\binom{1/2}{i+1}
      =\frac{(-1)^{k}}{2^{k+1}}.
   \end{equation}
   First, we simplify $S_{k,i}=\sum_{j=0}^i (-1)^{i-j}\binom{i}{j}\binom{2j}{k}$. To do so,
   we use generating functions:
   \begin{equation}
      \begin{aligned}
         \sum_{k\in\N}S_{k,i}x^k
         &=\sum_{j=0}^i (-1)^{i-j}\binom{i}{j}\sum_{k\in\N}\binom{2j}{k}x^k\\
         &=\sum_{j=0}^i \binom{i}{j}(-1)^{i-j}(x+1)^{2j}\\
         &=\left((x+1)^2-1\right)^i\\
         &=x^i(x+2)^i\\
         &=\sum_{k=0}^i\binom{i}{k}2^{i-k} x^{i+k}\\
         &=\sum_{k=0}^i\binom{i}{k-i}2^{2i-k} x^{k}.
      \end{aligned}
   \end{equation}
   Hence, $S_{k,i}=\binom{i}{k-i}2^{2i-k}$. So, again using generating function,
   \begin{equation}
      \begin{aligned}
         \sum_{k\in\N}P_{k+1}^{(2)}(1)x^k
         &=\sum_{k\in\N}\sum_{i=0}^{k}\binom{i}{k-i}2^{2i-k}\binom{1/2}{i+1}x^k\\
         &=\sum_{i\in\N}\binom{1/2}{i+1}\sum_{k\geq i}\binom{i}{k-i}2^{2i-k}x^k\\
         &=\sum_{i\in\N}\binom{1/2}{i+1}\sum_{k\geq 0}\binom{i}{k}2^{i-k}x^{k+i}\\
         &=\sum_{i\in\N}\binom{1/2}{i+1}x^i(x+2)^i\\
         &=\frac{\left(x(x+2)+1\right)^{1/2}-1}{x(x+2)}\\
         &=\frac{1}{x+2}
      \end{aligned}
   \end{equation}
   which gives for $x$ sufficiently small 
   \begin{equation}
      \sum_{k\in\N}P_{k+1}^{(2)}(1)x^k
      =\frac{1}{2}\frac{1}{1+\frac{x}{2}}
      =\sum_{k\in\N}\frac{(-1)^k}{2^{k+1}}x^k.
   \end{equation}
   Hence, $\forall k\in\N,P_{k+1}^{(2)}(1)=\frac{(-1)^k}{2^{k+1}}$.
\end{proof}

\end{document}